\newcommand{\eps}{\varepsilon}
\newcommand{\area}{\mathrm{area}}
\newcommand{\gravity}{\textsc{Gravity}}
\newcommand{\exchange}{\textsc{Exchange}}
\newcommand{\map}{M}
\newlist{Aenumerate}{enumerate}{1}
\setlist[Aenumerate]{label=A.\arabic*}
\title{Reconfiguration of Polygonal Subdivisions via Recombination}
\author{Hugo A. Akitaya}{University of Massachusetts Lowell, MA, USA}{hugo_akitaya@uml.edu}{https://orcid.org/0000-0002-6827-2200}{}
\author{Andrei Gonczi}{Tufts University, MA, USA}{andrei.gonczi@tufts.edu}{https://orcid.org/0000-0002-5939-2366}{}
\author{Diane L. Souvaine}{Tufts University, MA, USA}{andrei.gonczi@tufts.edu}{ORCID}{}
\author{Csaba D. T\'oth}{California State University Northridge, Los Angeles, CA, USA \and Tufts University, Medford, MA, USA}{csaba.toth@csun.edu}{https://orcid.org/0000-0002-8769-3190}{Research funded in part by NSF awards MS-1800734 and DMS-2154347}
\author{Thomas Weighill}{University of North Carolina Greensboro, NC, USA}{t_weighill@uncg.edu}{https://orcid.org/0000-0003-2614-0979}{Research funded in part by NSF award OIA-1937095.}
\authorrunning{H.~A.~Akitaya et al.} 
\keywords{configuration space, gerrymandering, polygonal subdivision, recombination} 
\begin{document}
\maketitle

\begin{abstract}
    Motivated by the problem of redistricting, we study area-preserving reconfigurations of connected subdivisions of a simple polygon.
    A connected subdivision of a polygon $\mathcal{R}$, called a \emph{district map}, is a set of interior disjoint connected polygons called \emph{districts} whose union equals $\mathcal{R}$. 
    We consider the \emph{recombination} as the reconfiguration move which takes a subdivision and produces another by merging two adjacent districts, and by splitting them into two connected polygons of the same area as the original districts.
    The \emph{complexity} of a map is the number of vertices in the boundaries of its districts.
    Given two maps with $k$ districts, with complexity $O(n)$, and a perfect matching between districts of the same area in the two maps, we show constructively that $(\log n)^{O(\log k)}$ recombination moves are sufficient to reconfigure one into the other.
    We also show that $\Omega(\log n)$ recombination moves are sometimes necessary even when $k=3$, thus providing a tight bound when $k=O(1)$.
\end{abstract}

\section{Introduction}
\label{sec:intrp}

We consider the problem of \emph{redistricting}---the partition of a geographic domain into disjoint districts. In particular, we consider the case when these districts are required to be connected and of roughly equal population. These criteria are typically enforced in political redistricting, wherein each district elects one or more representatives to serve on a governing body, a canonical example being Congressional districts in the United States. Even under these restrictions, the space of possible redistricting plans for a typical domain 
is intractably vast, making it difficult to sample from this space. Recently, algorithms for generating large samples of plans have made it possible to establish a neutral baseline for a particular state which in turn can be used to detect and describe gerrymanders (i.e., unfair maps)~\cite{chen2015cutting,chen2013unintentional, chikina2017assessing,deford2019redistricting, DeFord2021Recombination, Mattingly2018,herschlag2017evaluating}.  

The most common and successful sampling algorithms for redistricting are Markov chains that perform a sequence of reconfiguration moves on an initial map. The most prominent reconfiguration move is the \emph{recombination} or \emph{ReCom} move (see Figure~\ref{fig:WI}), which is a move that modifies two adjacent districts while maintaining population balance and connectivity~\cite{deford2019redistricting, DeFord2021Recombination}. In order to properly sample from the space of redistricting plans, we should require that any feasible redistricting plan can be reached from the initial map by a finite sequence of ReCom moves. That is, we want to positively answer the \emph{reachability} question for this reconfiguration move; in the language of Markov chains this would be to prove that any chain built on the ReCom move is \emph{irreducibile}. 


Historically, most redistricting algorithms have operated on a discretized version of the geographic domain. In this framework, a district map is modeled as a vertex partition of an adjacency graph~\cite{DeFord2021Recombination,duchinAAAS}. This is natural since population data is only available at the level of fixed geographic units, such as Census blocks in the case of the United States. The ReCom algorithm fits within this framework, and current versions all use a spanning tree method on the adjacency graph to perform the ReCom move. Unfortunately, it is easy to construct small pathological examples of graphs for which ReCom reachability fails. Moreover, even determining whether two plans can be connected via a sequence of ReCom moves is PSPACE-complete~\cite{akitaya2022reconfiguration} for general (planar) graphs.

A reasonable but unproven hypothesis is that for real-world adjacency graphs representing sufficiently fine discretizations of the geographic domain, we will indeed have reachability. A general theorem covering all adjacency graphs of interest seems beyond reach, which has led to a search for intermediate results. One direction of investigation is to allow a large class of graphs but relax the population balance constraint considerably; in such cases theoretical results are possible~\cite{akitaya2019reconfiguration, akitaya2022reconfiguration} (see Related Work below). Reachability on grid graphs or triangular lattices is an active area of research but as of yet without concrete results. 

In this paper, we return to the original hypothesis---that sufficient discretization leads to reachability---to motivate our result. Instead of modeling redistricting plans as graph partitions, we adopt a continuous model where the districts are connected polygons of equal population which partition a polygonal domain. Note that sampling algorithms based on this model do exist in the literature, most notably the power diagram method in \cite{cohen2018balanced}, but these algorithms are not Markov chains and require an extra refinement step to go from polygonal partitions to partitions that respect the geographic units. 

In our continuous model, we are able to establish reachability for the ReCom move---that is, any two polygonal partitions can be connected by finitely many ReCom steps that merge and resplit adjacent polygonal districts. The implication is that given two real-world redistricting plans, a sufficiently fine discretization of the geographic domain allows a finite sequence of ReCom moves (on the adjacency graph) to connect them. In practice this could mean that a particular map is not reachable from the initial map when considering voting precincts as geographic units, but could become reachable when working with Census blocks.

\subparagraph{Related Work.} In the discrete setting, the context for the reachability problem consists of a graph $G$ with $n$ nodes, a number of districts $k$ and a \emph{slack} $\varepsilon \ge 0$. Valid partitions are defined as partitions of $V(G)$ into $k$ non-empty subsets (called \emph{districts}) that each induce connected subgraphs such that the number of vertices in each district lies in the interval $[(1-\varepsilon) \frac{n}{k}, (1+\varepsilon)\frac{n}{k}]$. Two common reconfiguration moves on the space of valid partitions are the switch move and ReCom move. A \emph{switch move}~\cite{fifield2020automated, mattingly2014redistricting} consists of reassigning a single node to a new district. 
Using the switch move 
allows one to construct a Markov chain on the space of valid partitions with easily computable transition probabilities. A Metropolis-Hastings weighting can then be used to ensure that the chain samples (in the limit) from any desired distribution on the space of valid partitions. Crucially, however, this relies on the assumption that the state space is connected, i.e., that any two partitions can be connected by switch moves. It is not hard to design concrete examples of graphs for which this is not true with $\varepsilon = 0$. It is known that for $\varepsilon = \infty$, the state space is connected under the switch move when $G$ is biconnected; furthermore, that deciding whether two partitions can be connected by switch moves is PSPACE-complete even when $G$ is planar~\cite{akitaya2019reconfiguration}. 

The usefulness of the switch move is hampered by the fact that Markov chains built using it tend to mix slowly \cite{najt2019complexity}. As a result, larger reconfiguration moves, that are often more effective on real-world instances, were introduced. The \emph{ReCom move}~\cite{deford2019redistricting, DeFord2021Recombination} consists of merging and resplitting two adjacent districts (note that the switch move is a special case of a ReCom move). When designing a Markov chain based on this move, the most common method for resplitting is to draw a random spanning tree of the merged districts and cut an edge such that the resulting connected components form a valid partition. 
The disadvantage to such a process is that the transition probabilities between partitions appear to be intractable, so that the resulting Markov chain has an unknown stationary distribution. Recently, modifications of the original ReCom Markov chain have been proposed which have computable transition probabilities~\cite{autry2020multi, cannon2022spanning}; however, an accurate description of the 
the stationary distribution still requires the state space to be connected. It is easy to construct a graph $G$ for which the space of valid partitions is not ReCom-connected for $\varepsilon = 0$ (even for a 6$\times$6 grid graph~\cite{cannon2022spanning}). It is known~\cite{akitaya2022reconfiguration} that the state space is connected whenever $G$ is connected  and $\varepsilon = \infty$, and also when $G$ is  Hamiltonian and $\varepsilon \geq 2$; furthermore deciding whether two partitions can be connected by ReCom moves is PSPACE-complete even when $G$ is a triangulation.

\begin{figure}
    \centering%
    \begin{tikzpicture}[scale = 0.875]
    \node at (0,0) {\includegraphics[width=0.22\textwidth]{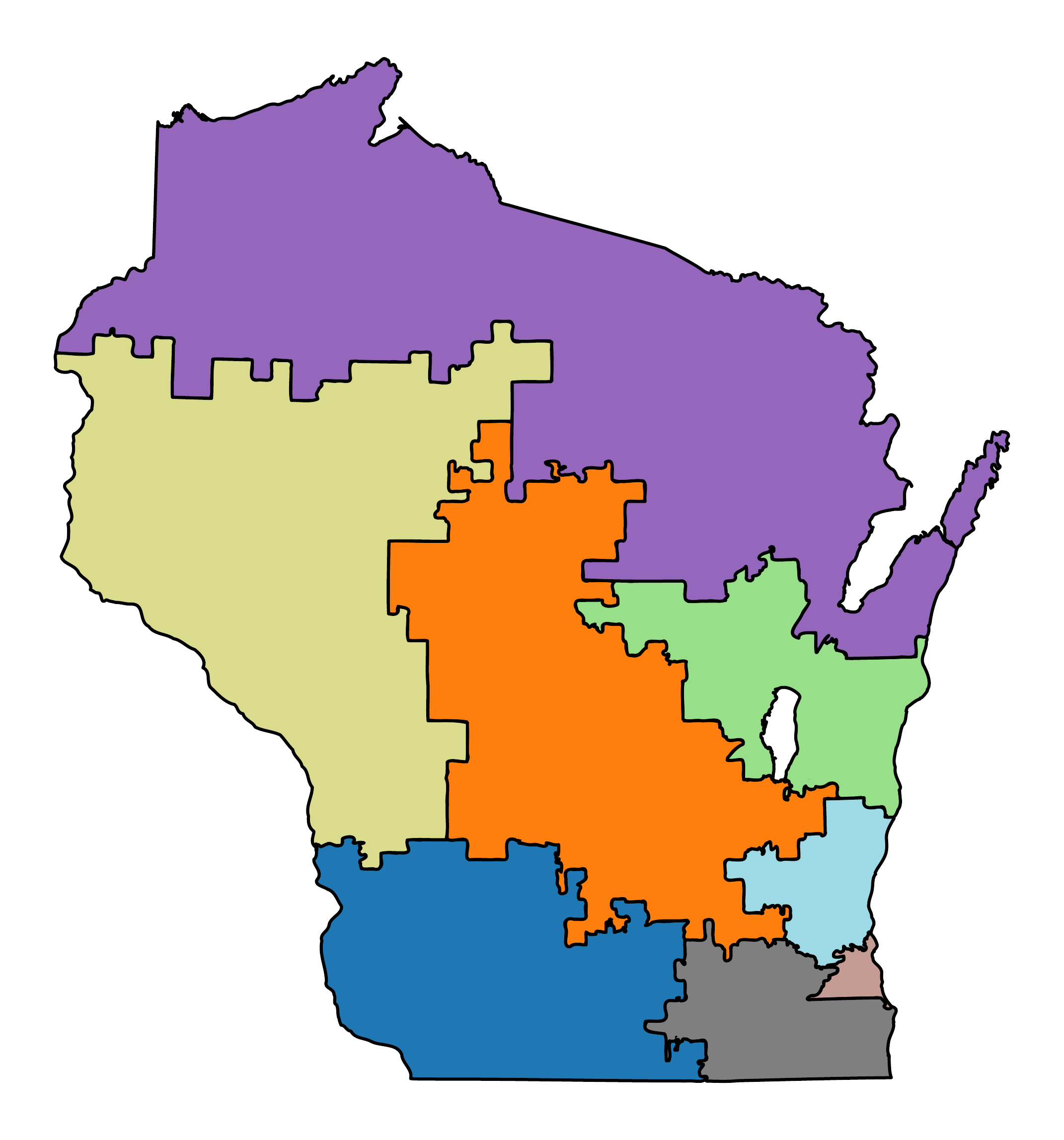}};
    \node at (4,0) {\includegraphics[width=0.22\textwidth]{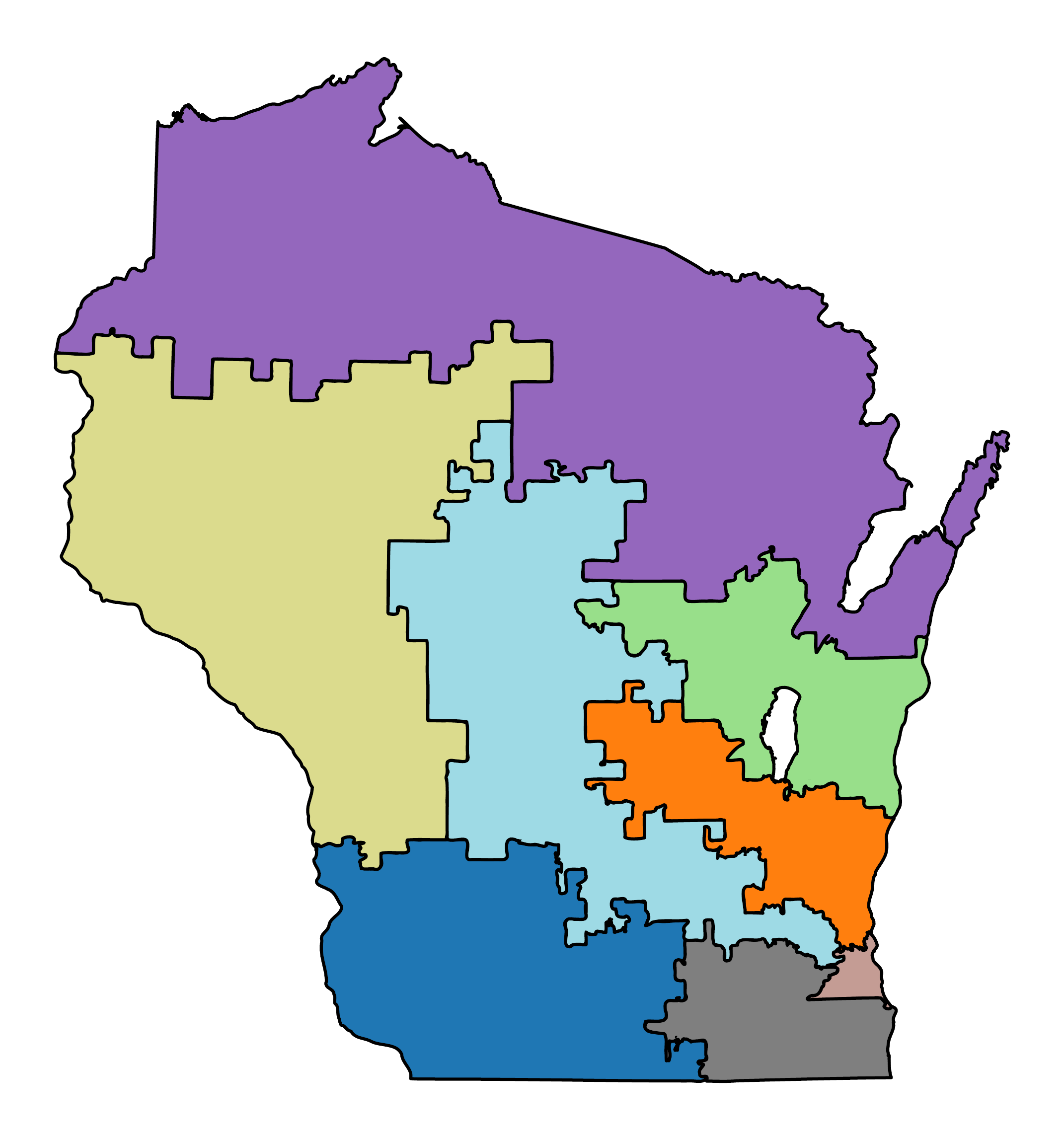}};
    \node at (8,0) {\includegraphics[width=0.22\textwidth]{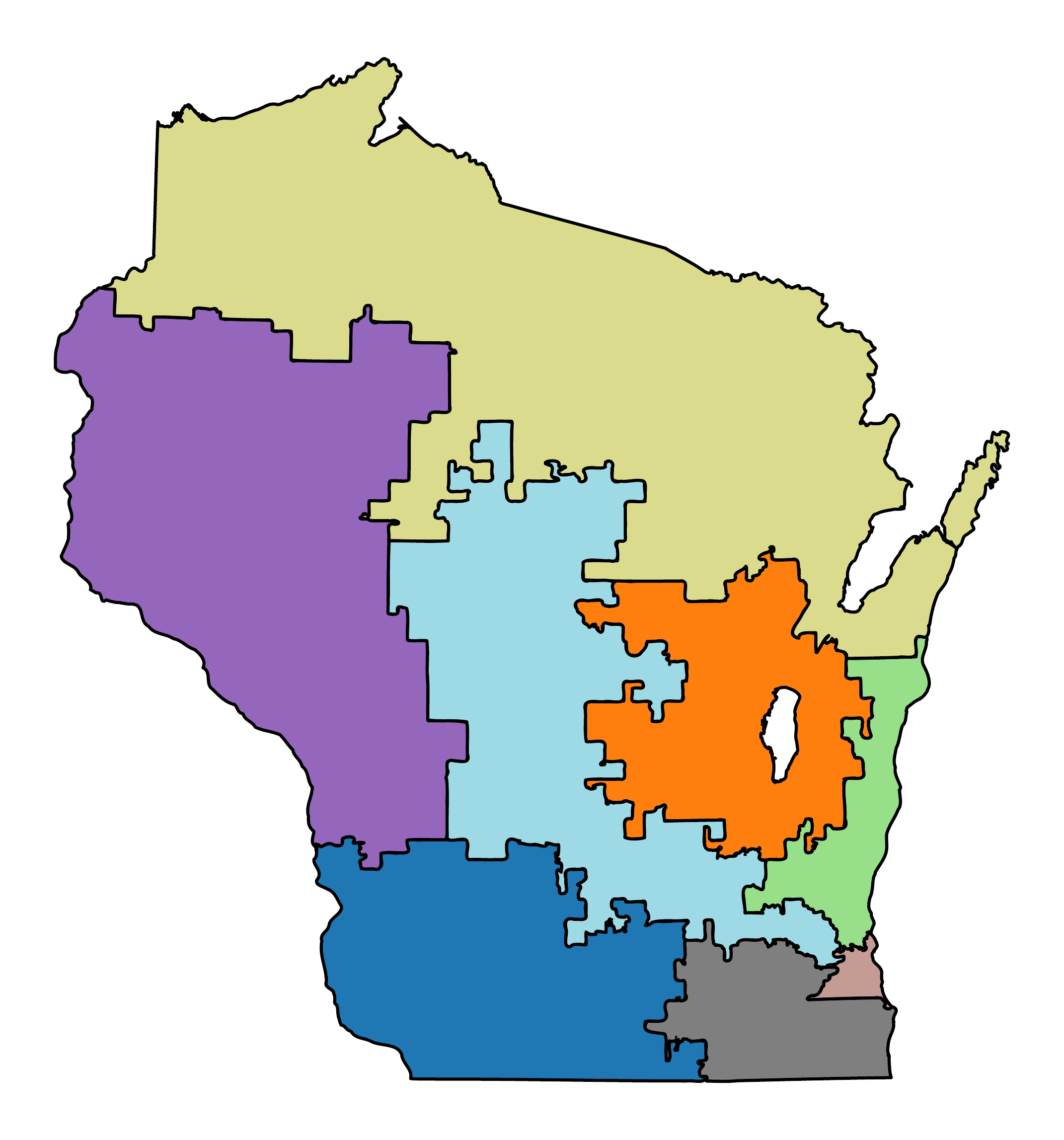}};
    \node at (12,0) {\includegraphics[width=0.22\textwidth]{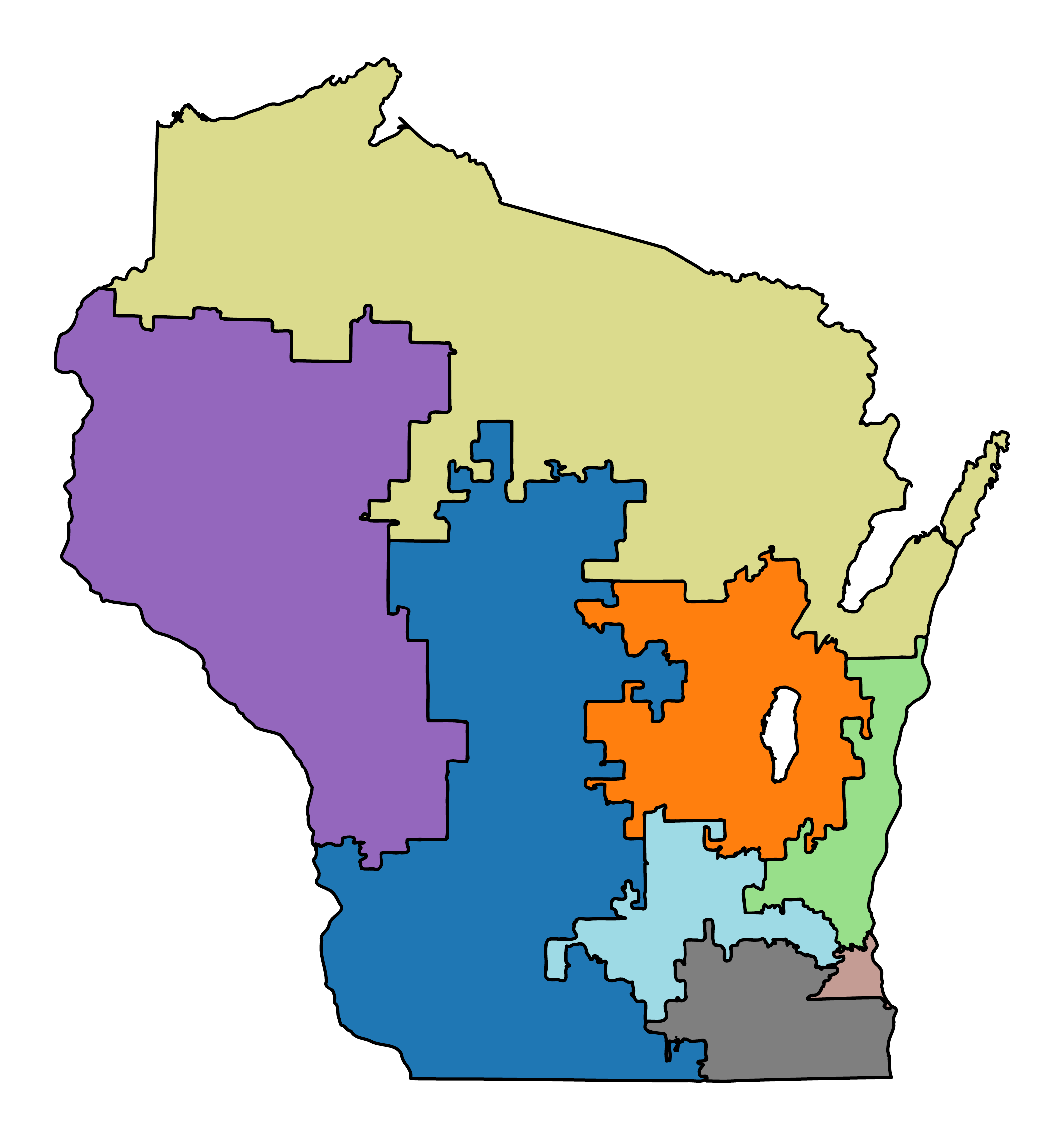}};
    \draw[->, line width=1mm] (1.75,-0.25)--(2.25,-0.25);
    \draw[->, line width=1mm] (5.75,-0.25)--(6.25,-0.25);
    \draw[->, line width=1mm] (9.75,-0.25)--(10.25,-0.25);
    \end{tikzpicture}
    \caption{A sequence of three recombination moves on the state of Wisconsin. At each step, two districts are merged and split again. The reachability problem is to determine whether any map can be reached from any other by a finite sequence of such steps.}
    \label{fig:WI}
\end{figure}

\smallskip\noindent{\sffamily\normalsize\bfseries Contributions.}
In this paper, we introduce a continuous model for redistricting and ReCom moves, where the districts can be arbitrary connected polygons (with real coordinates) in a polygonal domain (Section~\ref{sec:pre}). 
While the configuration space in this setting contains infinitely many maps, we prove that it is always connected under ReCom moves. Our proof is constructive, and provides an upper bound on the minimum number of ReCom moves between any two maps in terms of the number of districts $k$ and the complexity of the district maps $n$ (i.e., the number of vertices of all polygons in the initial and target maps). We start with the first nontrivial case, $k=3$  districts in a unit square domain, and show that between any two maps of complexity $O(n)$, there is a reconfiguration path consisting of $O(\log n)$ ReCom moves (Theorem~\ref{thm:square3log} in Section~\ref{sec:three}). Importantly, the complexity of the map remains $O(n)$ in all intermediate steps. Our reconfiguration algorithm generalizes to $k$ districts in an arbitrary polygonal domain, using a recursion of depth $O(\log k)$.
It yields an $\exp(O(\log k\, \log \log n))$ bound on the number of ReCom moves between two maps; however, for the complexity of intermediate maps we obtain only a weaker bound of $n^{k^{O(1)}}$  (Theorem~\ref{thm:kdistrict} in Section~\ref{sec:general}). On the other hand, we show that (even for $k=3$) the diameter of the configuration space is infinite by constructing pairs of maps which require arbitrarily large numbers of ReCom moves to connect (Theorem~\ref{thm:lower-bound} in Section~\ref{sec:lower}). The number of moves for these examples grows logarithmically with the complexity of the maps, thereby providing a lower bound which perfectly matches our upper bound.

\section{Preliminaries}
\label{sec:pre}

A \emph{region} is a connected set in $\mathbb{R}^2$ bounded by one or more pairwise disjoint Jordan curves. A \emph{$k$-district map} $\map(\mathcal{R}) = \{D_1,\ldots,D_k\}$ is a decomposition of a region $\mathcal{R}$ into $k$ interior-disjoint regions (that is, $\mathcal{R}=\bigcup_{i=1}^k D_i$ and $\text{int}(D_i)\cap \text{int}(D_j)=\emptyset$ for $i \neq j$), where $\mathcal{R}$ is the \emph{domain}, and $D_1,\ldots , D_k$ are the \emph{districts} of the map. We may refer to $\map(\mathcal{R})$ simply as $\map$ if $\mathcal{R}$ is clear from the context.  
A \emph{recombination} move (for short, \emph{ReCom}) takes a map $\map(\mathcal{R})$ and two 
districts $D_i, D_j \in \map(\mathcal{R})$ and returns a new district map of the same domain  $\map'(\mathcal{R})=\map(\mathcal{R})\setminus \{D_i,D_j\}\cup \{D_i',D_j'\}$. 
A recombination is \emph{area-preserving} if $\area(D_i) = \area(D_i')$ and $\area(D_j) = \area(D_j')$. 
Two $k$-district maps, $\map(\mathcal{R})=\{D_1,\ldots,D_k\}$ and $\map'(\mathcal{R})=\{D'_1,\ldots,D'_k\}$, on a domain $\mathcal{R}$ are \emph{area-compatible} if there is a permutation $\pi:\{1,\ldots ,k\}\rightarrow \{1,\ldots ,k\}$ such that  $\area(D_i) = \area(D'_{\pi(i)})$ for all $i=1,\ldots ,k$.

Here, we assume that the domain $\mathcal{R}$ is a simple polygon;
and each district is a connected polygon (possibly with holes).
The \emph{configuration space} of a map $M(\mathcal{R})$ is the set of all polygonal district maps on $\mathcal{R}$ that are area-compatible with $M(\mathcal{R})$.
We define the \emph{complexity} of a map $\map$ as the total number of vertices on the boundaries of all districts in $\map(\mathcal{R})$.
We show (in Section~\ref{sec:general}) that w.l.o.g.\ we may assume a unit square domain $\mathcal{R}=[0,1]^2$. 
The \emph{area} of a polygon $P$, denoted $\area(P)$, is either the Euclidean area of $P$ or the integral $\int_P \delta$ of some nonnegative integrable density function $\delta:\mathcal{R}\rightarrow \mathbb{R}_{\geq 0}$.
We show (by Theorem~\ref{thm:kdistrict}) that any pair of area-compatible district maps can be reconfigured into each other by a sequence of area-preserving recombinations (i.e., the configuration space of area-compatible district maps is connected).

\subparagraph{Weak Representation.} 
In intermediate maps of a ReCom sequence, we use infinitesimally narrow \emph{corridors} to keep the districts connected. In order to handle narrow corridors efficiently, we rely on a compressed representation of district maps using \emph{weak embeddings} (defined below), where each corridor is represented by a polygonal path; see Fig.~\ref{fig:thickening}. The compressed representation has two key advantages: 
(1) We may assume that corridors have zero area; and (2) 
we may reduce the total number of vertices by representing several parallel corridors by overlapping polygonal paths (with shared vertices). In Sections~\ref{sec:three}--\ref{sec:general}, we construct a sequence of ReCom moves on compressed maps. We show (in Proposition~\ref{cor:perturbation} below) that the polygonal paths can be thickened into narrow corridors in each stage of the ReCom sequence to produce a ReCom sequence in which the districts are simple polygons. 
%

An \emph{embedding} of a planar graph $G$ is an injective function from $G$ (seen as a 1-dimensional topological space) to $\mathbb{R}^2$; intuitively it is a drawing of $G$ in which edges can intersect only at common endpoints.
A \emph{weak embedding} of $G$ is a continuous function from $G$ to $\mathbb{R}^2$ such that, for every $\eps>0$, each vertex can be moved by at most $\eps$ and each edge can be replaced by a curve within Fr\'echet distance $\eps$ to form an embedding of $G$ (i.e., an \emph{$\eps$-perturbation} of a weak embedding is an embedding).
In particular, a \emph{simple polygon} is a piecewise linear embedding of a cycle and the region bound by it; and a \emph{weakly simple polygon} is a piecewise linear weak embedding of a cycle and the region bounded by it.
A \emph{polygon} (with possible holes) is a simple polygon with pairwise disjoint simple polygons (holes) removed. Similarly, a \emph{weak polygon} is a weakly simple polygon with pairwise disjoint weakly simple polygons removed.

\begin{figure}[tbh]
	\centering
    \begin{minipage}{.45\textwidth}
          \centering
          \includegraphics[width = 0.8 \textwidth]{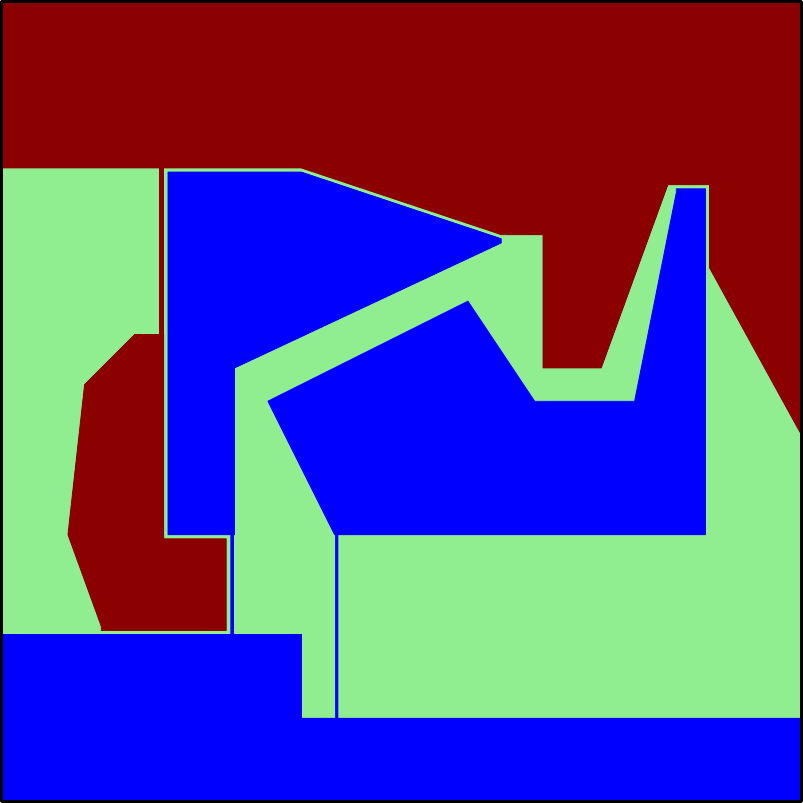}
    \end{minipage}
    \begin{minipage}{.45\textwidth}
          \centering
          \includegraphics[width = 0.8 \textwidth]{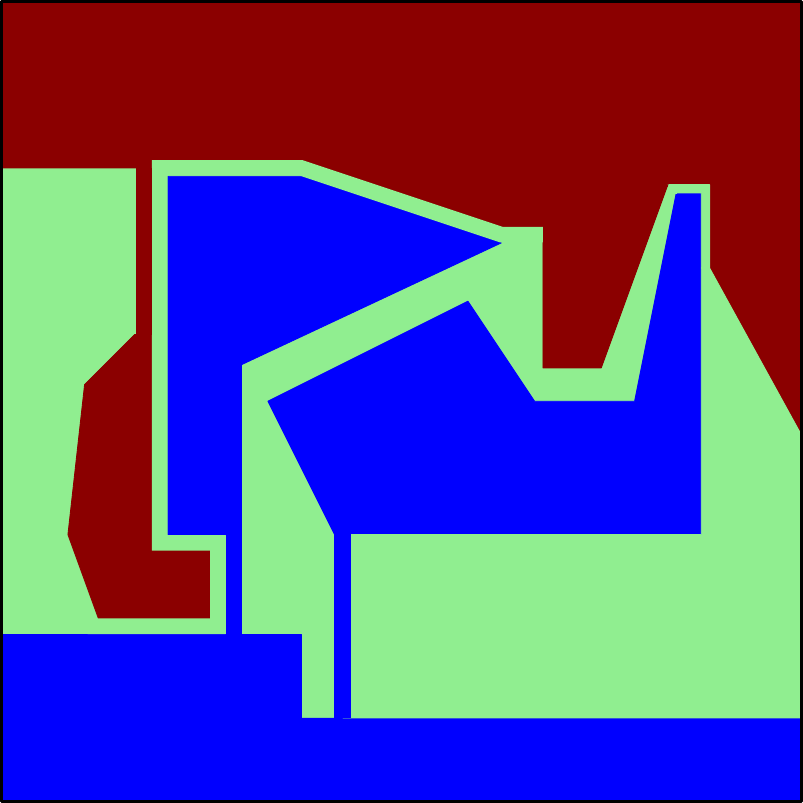}
    \end{minipage}
    \caption{An example of a weak embedding of a map. Left: multiple corridors connect disconnected regions. Right: the corridors are thickened to create three simple polygons.}
    \label{fig:thickening}
\end{figure}

For a district map $M$, the boundaries of the districts jointly form a straight-line embedding of some abstract graph $G$. By identifying edges on opposite sides of narrow corridors, we obtain a weak embedding of $G$. 
In a weak embedding, two or more corridors may overlap, and we maintain a linear order among all overlapping corridors. 

We use the machinery introduced by Akitaya et al.~\cite{akitaya2017recognizing}  (based on earlier work by Corsese et al.~\cite{cortese2009embedding} and Chang et al.~\cite{chang2014detecting}); see also~\cite{BlasiusFR21,FulekT22}. 
A weak embedding of $G$ is a piecewise linear map of $\varphi:G\to \mathbb{R}^2$.
The \emph{image graph} $H$ is a planar straight-line graph 
formed by the image $\varphi(G)$, where overlapping vertices (edges) of $G$ are mapped to the same vertex (edge). Given an image graph $H$ and a sufficiently small $\eps>0$, an \emph{$\eps$-thickening} of $H$ is a subset of $\mathbb{R}^2$ constructed as follows. For each $u\in V(H)$, create a disk $\mathcal{D}_\epsilon(u)$ centered at $u$ with radius $\eps$.
For each edge $uv\in E(H)$ the \emph{pipe} $N_\eps(uv)$ is the $\eps^2$-neighborhood of $uv$ minus $\mathcal{D}_\epsilon(u)\cup \mathcal{D}_\epsilon(v)$.
The $\eps$-thickening of $H$ is the union of all disks and pipes, which are pairwise disjoint for a sufficiently small $\eps>0$.

A \emph{weak representation} of $G$ comprises of a weak embedding $\varphi:G\to \mathbb{R}^2$ and a linear order of overlapping edges of $\varphi(G)$ along each edge of $H$.
Given a weak representation of $G$, an \emph{$\eps$-perturbation} is an embedding $\psi:G\to \mathbb{R}^2$ that maps $G$ into the thickening of $H$ so that the Fr\'echet distance between $\varphi$ and $\psi$ is at most $\eps$, and in each pipe $N_\eps(uv)$ the parallel edges between 
$\mathcal{D}_\epsilon(u)$ and $\mathcal{D}_\epsilon(v)$ are in the given linear order.  
It is known that if $G$ has $n$ vertices, then an $\eps$-perturbation $\psi(G)$ with $O(n)$ vertices can be computed in $O(n\log n)$ time~\cite{akitaya2017recognizing}.
The $\eps$-perturbation $\psi:G\to \mathbb{R}^2$ of a weak embedding $\varphi:G\to \mathbb{R}^2$ is \emph{area-preserving} if corresponding faces in $\psi(G)$ and $\varphi(G)$ have the same area.

%

We first state a lemma about three districts, and then generalize it to $k\geq 3$ districts.

\begin{lemma}
\label{lem:perturbation}
    Given a map $M(\mathcal{R}) = \{D_1, D_2, D_3\}$ in a weak representation with $O(n)$ vertices, and a sufficiently small $\eps'>0$, there exists an $\eps\in (0,\eps')$ with the following property: For any area-preserving 
    $\eps$-perturbation $D_1'$ of $D_1$ where $D_1'$ is a simple polygon with $O(n)$ vertices and $\mathcal{R}\setminus D_1'$ is connected, there exist area-preserving $\eps'$-perturbations of $D_2'$ and $D_3'$ of $D_2$ and $D_3$, resp., such that both $D_2'$ and $D_3'$ are polygons, and 
    $\{D'_1, D'_2, D'_3\}=M'(\mathcal{R})$ is a district map.
    \end{lemma}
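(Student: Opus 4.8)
The plan is to complete the prescribed perturbation $D_1'$ of $D_1$ into a full area-preserving perturbation of the given weak representation, using the thickening machinery of Akitaya et al.~\cite{akitaya2017recognizing} to route the still-undetermined part of the boundary, and then to correct the areas by a one-parameter sliding argument. Write $H$ for the image graph of the weak representation, and fix $\eps'$ small enough that the $\eps'$-thickening $N$ of $H$ is well defined (pipes and disks pairwise disjoint). We will choose $\eps\in(0,\eps')$ so small that any area-preserving $\eps$-perturbation $D_1'$ of $D_1$ is confined to the part of $N$ within Fr\'echet distance $\eps$ of $\varphi(\partial D_1)$; in particular $D_1'$ occupies only a thin ``sleeve'' of $N$, and $\mathcal{R}\setminus D_1'$ still contains essentially all of $N$ minus that sleeve.

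First I would split the boundary of each of $D_2'$, $D_3'$ into a forced part and a free part. The closed walk $\varphi(\partial D_2)$ in the weak embedding decomposes into arcs shared with $\partial D_1$, arcs on $\partial\mathcal{R}$, and arcs shared with $\partial D_3$; likewise for $D_3$. The arcs on $\partial\mathcal{R}$ are kept verbatim, and the arcs shared with $\partial D_1$ are set equal to the corresponding portions of $\partial D_1'$ --- these are $\eps$-perturbations of the corresponding arcs of $\varphi(\partial D_2)$ (resp.\ $\varphi(\partial D_3)$), hence, since $\eps<\eps'$, $\eps'$-perturbations. Only the $D_2$--$D_3$ interface remains free. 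Let $G'\subseteq G$ be the subgraph consisting of the edges of $G$ that map to this interface together with the vertices where these edges meet $\partial D_1$ or $\partial\mathcal{R}$; its image lies on $H$. Applying the construction of~\cite{akitaya2017recognizing} to $G'$ inside $N$ --- honoring the linear order of overlapping edges recorded in the weak representation --- produces an embedding $\psi'$ of $G'$ within $N$, of complexity $O(n)$, whose endpoints are pinned to the already-placed curves on $\partial D_1'\cup\partial\mathcal{R}$. Because $D_1'$ lives only in the $\eps$-sleeve while $\psi'$ is routed in the $\eps'$-thickening, $\psi'$ can be taken disjoint from $\mathrm{int}(D_1')$, so $\psi'$ together with $\partial D_1'$ and $\partial\mathcal{R}$ bounds exactly two faces, which we take to be $D_2'$ and $D_3'$. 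By construction each is a polygon (possibly with holes), each is an $\eps'$-perturbation of the corresponding district, and $\{D_1',D_2',D_3'\}$ is a district map of $\mathcal{R}$.

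It remains to make the perturbation area-preserving. Since $\area(D_1')=\area(D_1)$ and $\area(D_2')+\area(D_3')=\area(\mathcal{R})-\area(D_1')=\area(D_2)+\area(D_3)$, it suffices to arrange $\area(D_2')=\area(D_2)$. For this I would fix all of $\partial D_2'$ except one sub-arc of the $D_2$--$D_3$ interface and slide that sub-arc transversally inside its pipe of $N$, obtaining a one-parameter family of legal perturbations along which $\area(D_2')$ depends continuously and monotonically on the parameter; since the undistorted weak embedding already realizes the target value and the family sweeps through values above and below it, the intermediate value theorem yields a choice with $\area(D_2')=\area(D_2)$, hence $\area(D_3')=\area(D_3)$. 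If a single pipe cannot absorb the required change, the adjustment is distributed over several sub-arcs; the total available change is bounded below in terms of $H$ and $\eps'$, which is consistent because $\area(D_1')$ is within $O(\eps)$ of $\area(D_1)$ and thus forces only a correspondingly tiny correction, well within range.

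The main obstacle I anticipate is that $D_1'$ is \emph{adversarial}: it is only required to be \emph{some} area-preserving $\eps$-perturbation of $D_1$ with connected complement, so it may resolve the corridors of $D_1$ in a different order than the one recorded in the weak representation. Thus the routing of the $D_2$--$D_3$ interface cannot merely copy a reference perturbation of all of $G$; it must be carried out relative to the actual region $\mathcal{R}\setminus D_1'$. The point that makes this work is the separation of scales $\eps\,\ll\,\eps'$: it pins $D_1'$ into a thin sleeve around $\varphi(\partial D_1)$, leaving the remainder of the $\eps'$-thickening --- which is exactly where the interface must go --- free and wide enough to route $O(n)$ nested corridors and to perform the area correction. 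The secondary technical burden is doing the routing and the area fix simultaneously while keeping $D_2'$ and $D_3'$ connected with $O(n)$ vertices; this is handled by performing the area correction as a final, arbitrarily small, local move that does not alter the combinatorial structure produced in the routing step.
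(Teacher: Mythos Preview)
Your approach is essentially the paper's: exploit the scale separation $\eps\ll\eps'$ to first realize \emph{some} perturbations $D_2^*,D_3^*$ compatible with the given $D_1'$, and then slide the $D_2$--$D_3$ interface inside the $\eps'$-pipes and apply the intermediate value theorem to hit the target areas. The paper is terser about the construction step (it simply takes ``arbitrary $\eps$-perturbations'' of $D_2,D_3$, leaving implicit that they must agree with $\partial D_1'$, which your forced/free decomposition makes explicit) and more explicit about the quantitative area bound: it chooses $\eps$ so that $\area(\mathcal{T}_\eps)<\area(S\setminus\mathcal{T}_\eps)$, where $S$ is the union of $\eps'$-pipes free of $D_1'$, which is exactly the inequality ``needed correction $<$ available correction'' that your last paragraph gestures at. One small slip to fix: your sentence ``$\area(D_1')$ is within $O(\eps)$ of $\area(D_1)$ and thus forces only a correspondingly tiny correction'' misidentifies the source of the error---$D_1'$ is area-preserving by hypothesis; the $O(\eps)$ discrepancy you need comes from the \emph{position} of $\partial D_1'$ shifting the achievable range of $\area(D_2')$ by $O(\eps)$ relative to the weak-embedding value.
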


\begin{proof}
    Let $H$ be the image graph obtained by the union of the boundary of the three districts.
    Let $\mathcal{T}_\eps$ and $\mathcal{T}_{\eps'}$ be the $\eps$- and $\eps'$-thickenings of $H$, respectively.
    Assume that $\eps'$ is small enough such that all the regions of $\mathcal{T}_{\eps'}$ are pairwise disjoint.
    Since all the boundaries are contained in $\mathcal{T}_\eps$, each component of $\mathcal{T}_{\eps'}\setminus \mathcal{T}_\eps$ is contained in a single district.
    Let $S$ denote the union of pipes $N_{\eps'}(uv)$ containing no edges of $D_1'$.
    We also assume that $\eps$
    is small enough so that
    \begin{equation}\label{eq:1}
        \area(\mathcal{T}_{\eps}) < \area(S\setminus \mathcal{T}_\eps).
    \end{equation}
    Let $D_2^*$ and $D_3^*$ be arbitrary 
    $\eps$-perturbations of $D_2$ and $D_3$  (not necessarily area-preserving), where $D_2^*$ and $D_3^*$ are simple polygons. We transform $D_2^*$ and $D_3^*$ into area-preserving $\eps'$-perturbations as follows.
    Since $D_2^*$ is an $\eps$-perturbation, then
    \begin{equation}\label{eq:2}
        |\area(D_2^*)-\area(D_2)|<\area(\mathcal{T}_{\eps}).
    \end{equation}
    Without loss of generality, assume $\area(D_2^*)>\area(D_2)$ and consequently $\area(D_3^*)<\area(D_3)$.
    For each pipe $N_{\eps'}(uv)\subseteq S$, subdivide all boundary edges in $N_{\eps'}(uv)$ at $\partial\mathcal{D}_{\eps}(u)$, $\partial\mathcal{D}_{\eps}(v)$, $\partial\mathcal{D}_{\eps'}(u)$ and $\partial\mathcal{D}_{\eps'}(v)$.
    We can continuously move all vertices at $\partial\mathcal{D}_{\eps'}(u)$ and $\partial\mathcal{D}_{\eps'}(v)$ so that the districts remain simple polygons,
    and so that in the limit the entire pipe $N_{\eps'}(uv)$ belongs to $D_3^*$ at the end of the motion; see Figure~\ref{fig:perturbation}. If we apply such continuous deformations simultaneously in all pipes in $S$, then at the end of the motion we have $\area(D_3^*)>\area(D_3)$ and $\area(D_2^*)<\area(D_2)$ because of (\ref{eq:1}) and (\ref{eq:2}).
    As the areas of regions change continuously in a continuous deformation, by the intermediate value theorem, there exist area-preserving $\eps'$-perturbations $D_2'$ and $D_3'$, as required.
\end{proof}

\begin{figure}[htb]
	\centering
    \begin{minipage}{.8\textwidth}
          \centering
          \includegraphics[width = 0.9 \textwidth]{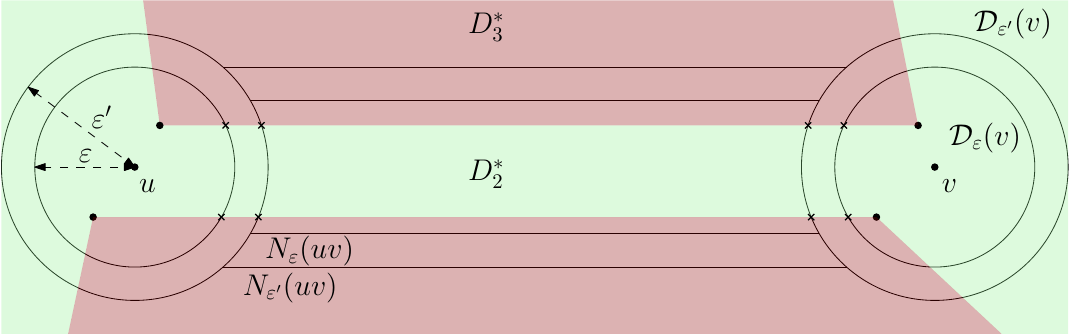}
    \end{minipage}
    \begin{minipage}{.8\textwidth}
          \centering
          \includegraphics[width = 0.9 \textwidth]{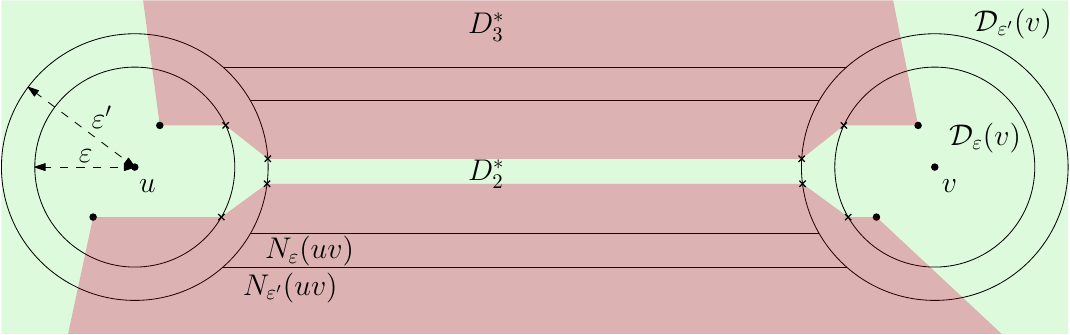}
    \end{minipage}
    \caption{We subdivide all boundary edges in a pipe $N_{\eps'}(uv)\subseteq S$ at $\partial\mathcal{D}_{\eps}(u)$, $\partial\mathcal{D}_{\eps}(v)$, $\partial\mathcal{D}_{\eps'}(u)$ and $\partial\mathcal{D}_{\eps'}(v)$;
    and then move the vertices at $\partial\mathcal{D}_{\eps'}(u)$ and $\partial\mathcal{D}_{\eps'}(v)$ to increase the area of $D_3^*$.}
    \label{fig:perturbation}
\end{figure}

\begin{proposition}
\label{cor:perturbation}
    Given two area-compatible $k$-district maps and a sequence of area-preserving ReCom moves where districts in intermediate maps are weak polygons with $O(n)$ vertices, we can compute a sequence of area-preserving ReCom moves of the same length where the districts in intermediate maps are all  polygons with $O(n)$ vertices.    
\end{proposition}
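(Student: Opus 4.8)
The plan is to lift the given sequence $M_0\to M_1\to\cdots\to M_t$ from weak maps to genuine polygonal maps one move at a time, maintaining at each stage a polygonal district map $\widehat M_i$ that is a tiny perturbation of $M_i$, such that $\widehat M_i$ and $\widehat M_{i+1}$ differ by a single area-preserving ReCom move. The main tool is the $k$-district analogue of Lemma~\ref{lem:perturbation}, which I would establish first: for a weak $k$-district map $M$ with $O(n)$ vertices, a designated pair $\{D_a,D_b\}$, and a sufficiently small $\eps'>0$, there is $\eps\in(0,\eps')$ such that, given area-preserving $\eps$-perturbations $\widehat D_c$ $(c\neq a,b)$ that are polygons with $O(n)$ vertices and for which $\mathcal R\setminus\bigcup_{c\neq a,b}\widehat D_c$ is connected, there exist area-preserving $\eps'$-perturbations $\widehat D_a,\widehat D_b$ that are polygons with $O(n)$ vertices completing a district map. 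The proof is that of Lemma~\ref{lem:perturbation} verbatim, with the single prescribed district $D_1'$ replaced by the union $\bigcup_{c\neq a,b}\widehat D_c$ (that proof only uses that $D_1'$ is a prescribed polygonal perturbation lying in the thickening, never that it is connected) and $D_2,D_3$ replaced by $D_a,D_b$; the \emph{free pipes} that drive the intermediate-value step are now exactly the pipes on the $D_a$--$D_b$ interface, i.e., where the corridors of the weak representation live.

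Given this, write $X_i=\{a_i,b_i\}$ for the pair recombined by move $i$, and note that move $i$ changes only the interface between $D_{a_i}$ and $D_{b_i}$ (both the old and the new one lie inside the fixed region $R_i=D_{a_i}\cup D_{b_i}$), leaving every other interface unchanged as a point set. First, going backwards from $i=t$, choose tolerances $0<\eps_1'<\cdots<\eps_t'$ (with $\eps_t'$ arbitrarily small) and let $\eps_i\in(0,\eps_i')$ be the value the lemma produces for $M_i$ and input tolerance $\eps_i'$, arranged so that $\eps_i'\le\eps_{i+1}$ (so $(\eps_i)$ is increasing). Then, going forwards, let $\widehat M_0$ be a polygonal area-preserving perturbation of $M_0$ with $O(n)$ vertices (just $M_0$ itself if it is already polygonal), and, given $\widehat M_i$, keep $\widehat D_c^{(i+1)}:=\widehat D_c^{(i)}$ for $c\notin X_{i+1}$ and apply the lemma to $M_{i+1}$ with the pair $X_{i+1}$ and tolerance $\eps_{i+1}'$ to obtain $\widehat D_{a_{i+1}}^{(i+1)},\widehat D_{b_{i+1}}^{(i+1)}$. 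Its hypotheses hold: each kept district was last drawn at some step $j\le i$, so it is an $\eps_j'$-perturbation of $D_c$ with $\eps_j'\le\eps_{i+1}$; and $\mathcal R\setminus\bigcup_{c\notin X_{i+1}}\widehat D_c^{(i+1)}=\widehat D_{a_{i+1}}^{(i)}\cup\widehat D_{b_{i+1}}^{(i)}$ is connected because $D_{a_{i+1}},D_{b_{i+1}}$ are adjacent in $M_i$. One checks directly that $\widehat M_i\to\widehat M_{i+1}$ is an area-preserving ReCom move: the maps agree off $X_{i+1}$; $\widehat D_{a_{i+1}}^{(i)}$ and $\widehat D_{b_{i+1}}^{(i)}$ are adjacent with union $\widehat D_{a_{i+1}}^{(i+1)}\cup\widehat D_{b_{i+1}}^{(i+1)}$; and the relevant areas match since all perturbations and the weak move are area-preserving. (If the output is required to start at $M_0$ and end at $M_t$ and these are polygonal, process moves $1$ and $t$ with the kept districts taken equal to the given polygons of $M_0$ and $M_t$.)

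The delicate parts, which I expect to be the main obstacle, are twofold. (i)~Bounding the number of vertices: at each move only an $\eps$-perturbation of the new weak interface is drawn afresh (with $O(n)$ vertices, by the algorithm of \cite{akitaya2017recognizing} applied inside the already-fixed perturbed merged region), and the other interfaces are only re-labelled, not redrawn; one must argue from this that the complexity of $\widehat M_i$ stays $O(n)$ and does not accumulate additively over the $t$ moves. (ii)~Achieving the exact areas: since the boundary of the merged region $\widehat D_{a_{i+1}}^{(i)}\cup\widehat D_{b_{i+1}}^{(i)}$ is already prescribed, the only freedom to tune the two new areas is in the corridor widths along the new $D_{a_{i+1}}$--$D_{b_{i+1}}$ interface, and one has to show --- as in the proof of Lemma~\ref{lem:perturbation}, through an analogue of the inequality $\area(\mathcal{T}_\eps)<\area(S\setminus\mathcal{T}_\eps)$ --- that this slack always suffices to reach the targets by the intermediate value theorem.
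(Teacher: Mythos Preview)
Your approach is essentially the same as the paper's: both generalize Lemma~\ref{lem:perturbation} by letting the fixed district $D_1'$ be the union of the $k-2$ unchanged districts, determine the nested tolerances $\eps_i,\eps_i'$ by a backward pass through the sequence (the paper writes this as $\eps_1'=\eps_2$, you as $\eps_i'\le\eps_{i+1}$), and then apply the lemma forward one move at a time. Your write-up is in fact more explicit than the paper's about the bookkeeping (e.g., tracking at which step each kept district was last redrawn), and your flagged delicate points (i) and (ii) are exactly the places where the paper's proof is terse.
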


\begin{proof}
    First note that we can generalize Lemma~\ref{lem:perturbation} to $k$-district maps by grouping $k-2$ districts into one single weak polygon.
    The application of Lemma~\ref{lem:perturbation} will give a compatible perturbation of the union of the $k-2$ districts into a polygon. We can now recurse by redefining $\mathcal{R}$ to be such a polygon.
    Now, compute the image graph for each of the intermediate maps.
    Each image graph imposes a constraint on $\eps'$ and $\eps$ given by Lemma~\ref{lem:perturbation}.
    For two subsequent maps $M_1$ and $M_2$, let $\eps'_1 > \eps_1$, and $\eps'_2 > \eps_2$ be the perturbation amounts in Lemma~\ref{lem:perturbation}. 
    Without loss of generality, $D_1$ is the union of districts not changed in the ReCom that transforms $M_1$ into $M_2$.
    We then make $\eps_1'=\eps_2$, which imposes an additional upper bound on $\eps_1'$. 
    We can then compute all constraints on the perturbation amounts from the last to the first map. 
    We then set the perturbation amounts for the first intermediate map and propagate it to the others.
    Now Lemma~\ref{lem:perturbation} implies the claim.
\end{proof}

\subparagraph{Weak Representation for ReCom Sequences.} 
We construct a ReComb sequence in two passes: The first pass operates on a generic  $\eps$-perturbation, where the \emph{area} of each district is given by the weak representation (hence the corridors have zero area). The second pass then expands the weak representations into an $\eps$-perturbations, using Proposition~\ref{cor:perturbation}, where each district is a simple polygon with the desired area. Note that the number of moves is determined in the first pass. 

We define the \emph{compressed complexity} of a district map as the number of vertices in the image graph $H$ of the weak representation (that is, repeated vertices are counted only once).
The number of ReCom moves produced by our algorithm in Sections~\ref{sec:three}--\ref{sec:general} depends on the compressed complexity. 
Using $\eps$-perturbations would increase the complexity of maps.
For this reason, it is also useful in our analysis to convert an $\eps$-perturbations to a weak embedding which we do by applying the inverse of the operations described here.
Throughout this paper we use set operations on weak polygons such as $D_1\cup D_2$ where $D_1$ and $D_2$ are weak polygons.
Let $D_1'$ and $D_2'$ be the polygons obtained by the $\eps$-perturbation defined in Proposition~\ref{cor:perturbation}.
We define $D_1\cup D_2$ to be the weak polygon obtained from $D_1'\cup D_2'$.

    

\section{Reconfiguration for Three Districts}
\label{sec:three}

In this section, we consider maps with three districts in a unit square domain $\mathcal{R}=[0,1]^2$. We show that any 3-district map $M(\mathcal{R})=\{D_1, D_2,D_3\}$ can be transformed by finite sequence of ReCom moves into an area-compatible \emph{canonical map} in which the districts are axis-aligned rectangles, $Q_1$, $Q_2$ and $Q_3$, of unit width such that $\area(Q_i)=\area(D_i)$ for $i=1,2,3$. 

Before describing the algorithm for producing a chain of ReCom moves connecting two maps, we provide a high-level overview of where the main obstacles lie. We can gain some intuition by first considering the case when connectivity is not enforced, i.e., where districts need not be connected. Also assume for simplicity that the areas of the districts are all equal to $\frac{1}{3}$. In this setting, three ReCom moves suffice. Indeed, recombine and split $D_1$ and $D_2$ so that the new $D_1$ does not intersect $Q_3$ (this is always possible since the area of $D_3 \cup Q_3$ is at most $\frac{2}{3}$). We can now recombine $D_2$ and $D_3$ so that the new $D_3$ equals $Q_3$, and the third ReCom move can complete the transformation. Let us now return to the case where the district must remain connected. The first main insight shown in this section is that the three-step process above can still be applied, just by adding narrow corridors to maintain connectivity at each stage. However, to reach canonical form, we must remove these corridors, which requires care. The main difficulty here comes from the fact that corridors usually run alongside each other. The removal of any such corridor would require at least two ReCom moves. The second insight is a method to remove such corridors suing a number of ReCom moves bounded by a function of the complexity of the initial map.

\subsection{Overview of the Algorithm} 

Our algorithm for transforming a map into the canonical map consists of three stages, each containing multiple ReCom moves:
\begin{itemize}
    \item \emph{Preprocessing} (Section \ref{ssec:preprocess}). In this stage, we ensure that our three districts are ordered top to bottom in a well-defined way, and the middle district has the largest area. Moves needed: $O(1)$.
    \item \emph{Gravity moves} (Section \ref{ssec:gravity}). We perform three ReCom moves to place the districts into their final positions, with the possible exception of corridors. Moves needed: $3$.
    \item \emph{Exchange sequences} (Section \ref{ssec:exchange}). Corridors maintaining connectivity are carefully removed, using a tree representation to determine a move that simultaneously removes a constant fraction of corridors. Moves needed: $O(\log n)$. 
\end{itemize}

\subsection{Preprocessing: Ordering Property} \label{ssec:preprocess}
First we transform the three given districts into simple polygons if necessary. While there is a district $D_i$ that is a polygon with holes, there is an adjacent district $D_j$ contained within a hole. Recombine $D_i$ and $D_j$ to create a single-edge corridor between $D_j$ and the outer boundary of $D_i$. Next, we create corridors, if necessary, such that each district touches both the left and right sides of $\mathcal{R}$. While there is a district that is not adjacent to the left (resp., right) side $s$ of the $\mathcal{R}$, let $D_i$ be such a district closest to $s$ and let $D_j$ be an adjacent district that already touches $s$; then we recombine $D_i$ and $D_j$ and append to $D_i$ a shortest path to $s$ along the boundary of $D_j$. Thus, both districts remain simply connected. As all corridors run along existing vertices of the three districts, the complexity of the map does not increase. This stage takes $O(1)$ ReCom moves.

After preprocessing, the intersection of each district with the left (resp., right) side of the square domain is connected; and the order of these intersections is the same on both sides, or else two districts would cross. Therefore, the districts can be ordered from top to bottom.
%
%
We also need to establish the property that the middle district has the largest area. This can be done trivially with a single ReCom move between the middle district and the largest district of the three. We call these properties combined the \emph{ordering property}:

\begin{definition}
A three district map $\map(\mathcal{R}) = \{D_1,\ldots,D_k\}$ satisfies the \textbf{ordering property} if the intersection of each district with the left (resp., right) side of the square domain is connected, and the middle district, as defined by the resulting order from top to bottom, has area greater than or equal to each other district. 
\end{definition}


%

    

We assume that the districts are simple polygons in the unit square with a total of $n$ vertices and describe the details of the recombination moves as we use them in the algorithm. To reconfigure the districts into their canonical positions, apart from possible corridors, we perform three gravity moves.
The gravity move is described next.

\subsection{Gravity Move} \label{ssec:gravity}
Assume that $\map$ is a 3-district map satisfying the ordering property, with districts labeled $D_1$ (red), $D_2$ (green), and $D_3$ (blue) from top to bottom.
We describe the move $\gravity(D_1,D_2)$, which recombines the red and green districts; refer to Figures~\ref{fig:gravity_ex1}--\ref{fig:gravity_ex}.
Let $P=D_1\cup D_2$, which is a weakly simple polygon by the ordering property. 
By continuity, there exists a horizontal line $\ell$ (that we call the \emph{waterline}) that
partitions the plane into upper and lower halfplanes $\ell^+$ and $\ell^-$, resp.,
such that $\area(P\cap \ell^+)=\area(D_1)$ and $\area(P\cap \ell^-)=\area(D_2)$.
We shall define new districts $D_1'$ and $D_2'$, resp., that contain $P\cap \ell^+$ and $P\cap \ell^-$.

Note, however, that $P\cap \ell^+$ and $P\cap \ell^-$ may be disconnected.
We then reconnect disjoint components of each district by corridors along the boundary of $P$; see Fig.~\ref{fig:gravity_ex}.
Note that, by the ordering property, there is a path $\pi$ on the boundary of $D_3$ (blue) between the left and right side of the domain $\mathcal{R}$. 
If there are two or more components of $P\cap \ell^+$, they are separated by blue and, therefore they all touch the path $\pi$. Therefore $(P\cap \ell^+)\cup \pi$ is a connected region. Similarly, $(P\cap \ell^-)\cup \pi$ is also connected.

We define a \emph{red graph} as follows: the vertices are the connected components of $P\cap \ell^+$ and edges are minimal arcs along $\pi\cap \ell^-$ that connect two distinct components of $P\cap \ell^+$. 
Since $(P\cap \ell^+)\cup \pi$ is connected, then the red graph is connected.
Consider an arbitrary spanning tree of the red graph, and add its edges (as corridors) to the red district along the boundary of $P$. This completes the definition of $D_1'$. 

\begin{figure}[h]
	\centering
    \begin{minipage}{.32\textwidth}
          \centering
          \includegraphics[width = 0.8 \textwidth]{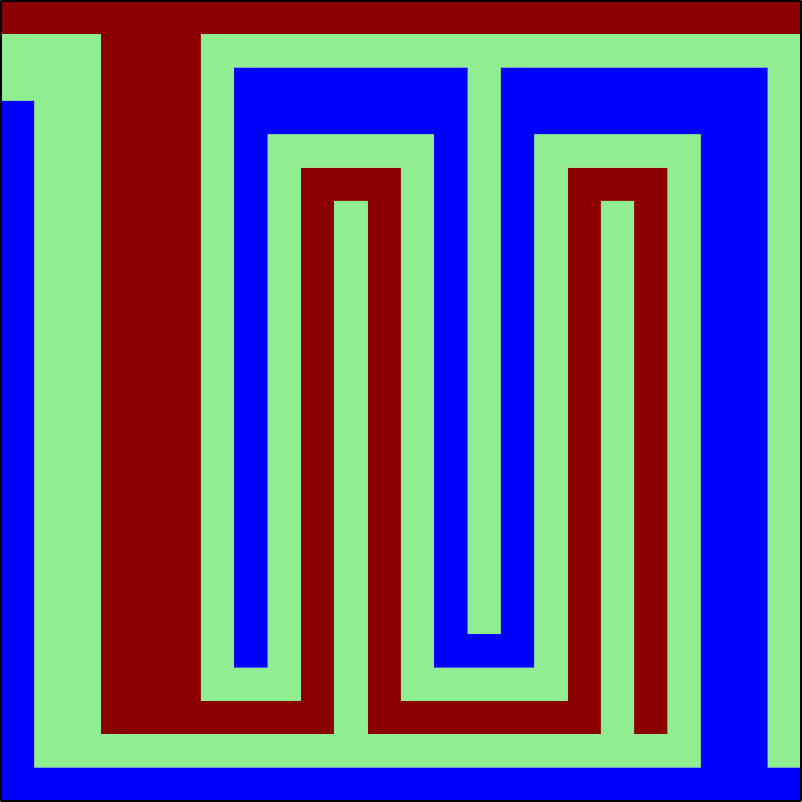}
    \end{minipage}
    \begin{minipage}{.32\textwidth}
          \centering
          \includegraphics[width = 0.8 \textwidth]{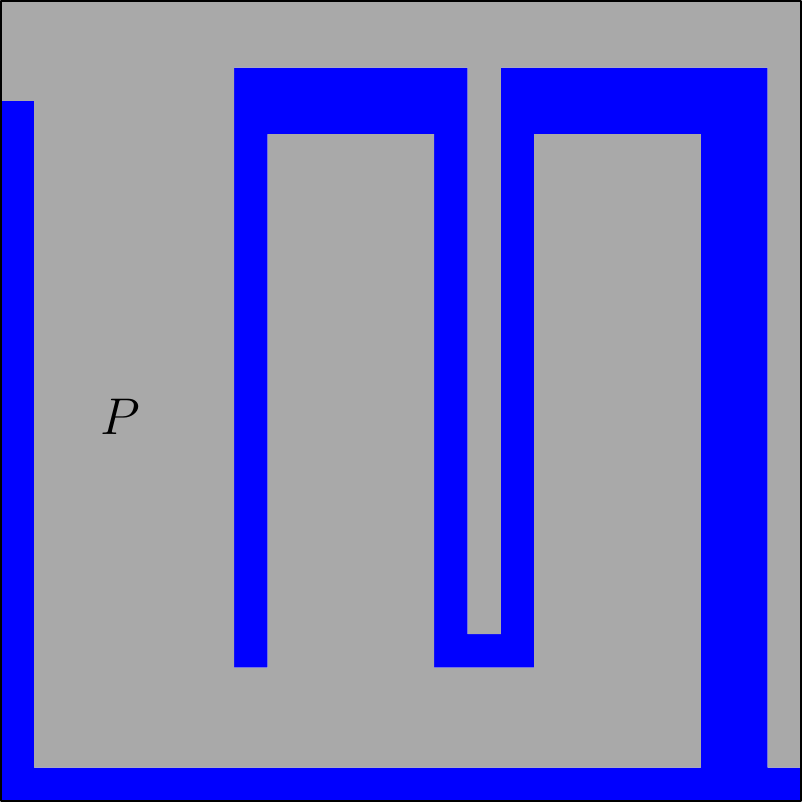}
    \end{minipage}
    \begin{minipage}{.32\textwidth}
          \centering
          \includegraphics[width = 0.8 \textwidth]{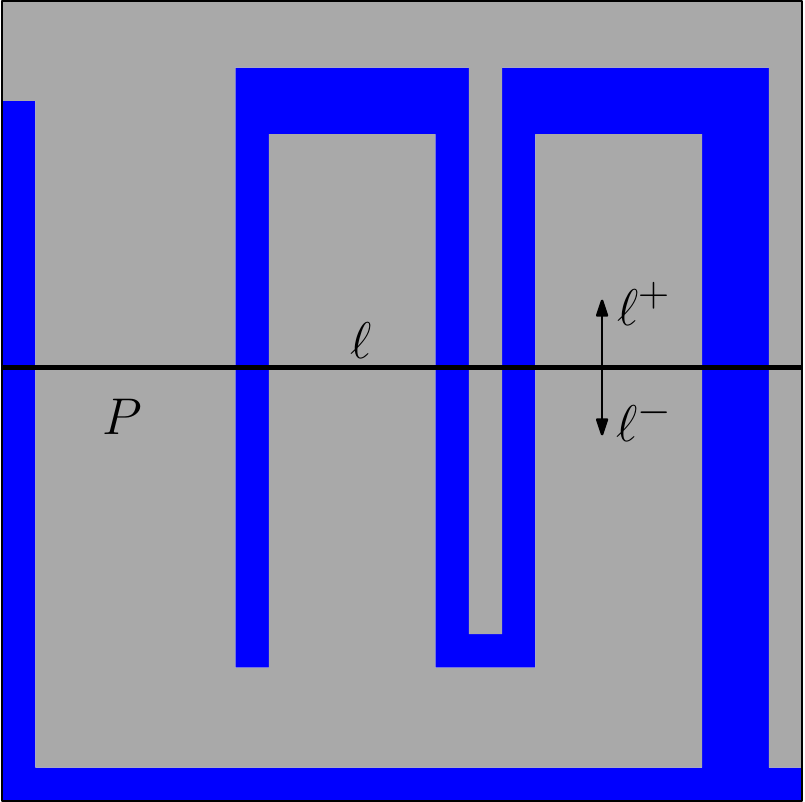}
    \end{minipage}
    \caption{The setup for a gravity move between the red district $D_1$ and green district $D_2$. Left: a district map satisfying the ordering property. Middle: the union $P=D_1\cup D_2$ is shown in gray. Right: the horizontal line $\ell$ equipartitions the gray polygon $P$. }
    \label{fig:gravity_ex1}
\end{figure}

\begin{figure}[h]
	\centering
    \begin{minipage}{.32\textwidth}
          \centering
          \includegraphics[width = 0.8 \textwidth]{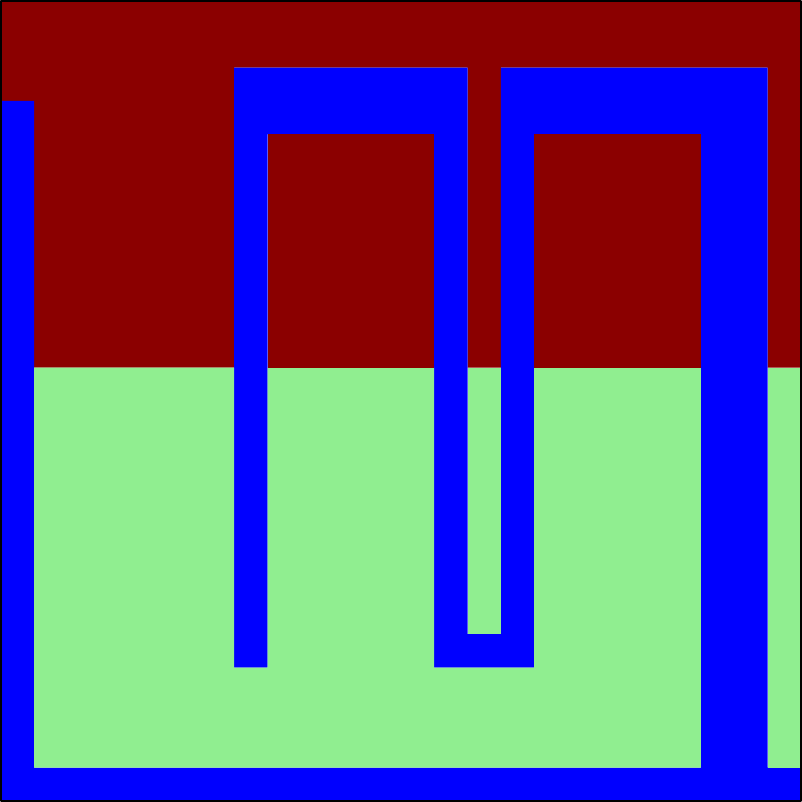}
    \end{minipage}
    \begin{minipage}{.32\textwidth}
          \centering
          \includegraphics[width = 0.8 \textwidth]{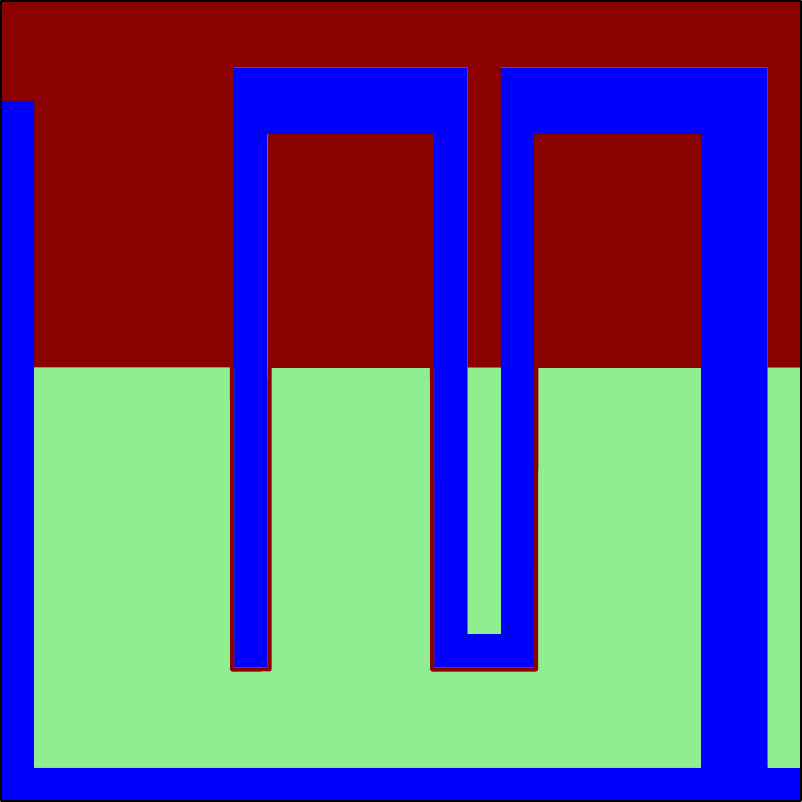}
    \end{minipage}
    \begin{minipage}{.32\textwidth}
          \centering
          \includegraphics[width = 0.8 \textwidth]{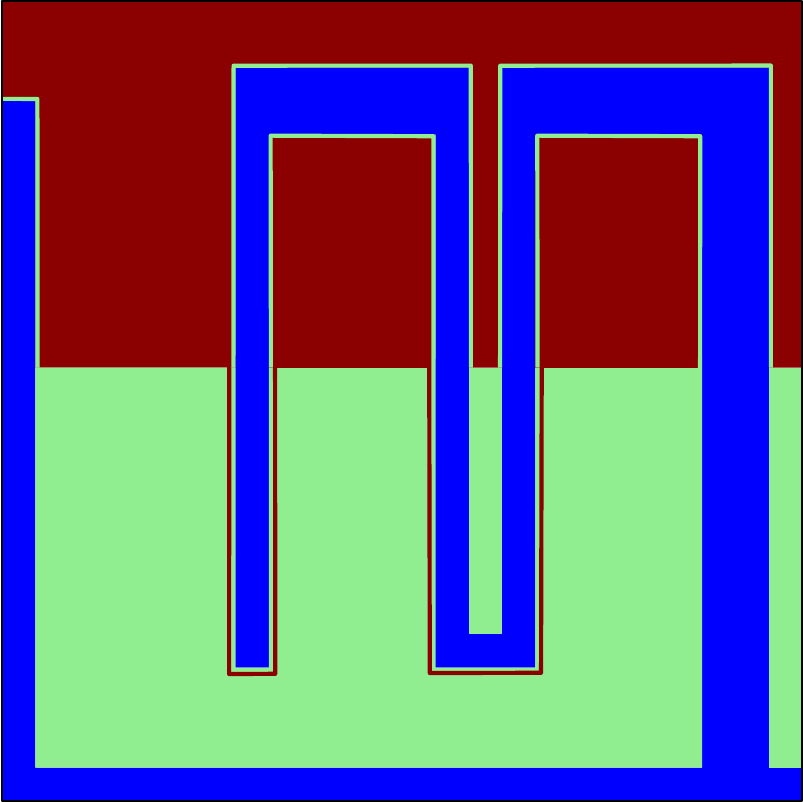}
    \end{minipage}
    \caption{Constructing the result of a gravity move between red and green on the map in Figure \ref{fig:gravity_ex1}. Left: the red region $P\cap \ell^+$ and the green region $P\cap \ell^-$ are each disconnected. Middle: red corridors create a connected red district $D_1'$. Right: green corridors create a connected green district $D_2'$ and restore the ordering property.}
    \label{fig:gravity_ex}
\end{figure}


Since the blue district is simply connected, each component of $P\cap\ell^-$ also intersects $\ell$ and therefore is adjacent to the red district $D_1'$.
Intuitively, we add corridors along $\pi$ ``coating'' the blue district with green and thus restoring the ordering property. 
Note that $\pi$ may pass along the boundary of $D_1'$, including all red corridors,  and the boundaries of the components of $P\cap \ell^-$. 
Formally, we add green corridors at the intersection of $\pi$ and $\partial D_1'$, if such a corridor is parallel to a red corridor, it runs between the blue district and the red corridor. That defines $D_2'$ and concludes the description of the gravity move.

\begin{restatable}{lemma}{lemgravityondisk} \label{lem:gravity_on_disk}
Assume $D_1$ and $D_2$ are the top two districts on a map satisfying the ordering property. Then $\gravity(D_1,D_2)$ is an area-preserving ReCom move that maintains the ordering property.
\end{restatable}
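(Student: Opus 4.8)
The plan is to verify the two required properties separately: first, that $\gravity(D_1, D_2)$ is a legitimate area-preserving ReCom move, and second, that the resulting map again satisfies the ordering property. The hypotheses give us that the input map has three districts, top-to-bottom $D_1, D_2, D_3$, with the middle district $D_2$ of largest area, and that each district meets the left and right sides of $[0,1]^2$ in a connected arc. The construction in the text already produced candidate districts $D_1'$ and $D_2'$; I need to check that they and $D_3$ form a valid district map with the right areas, and then that $\{D_1', D_2', D_3\}$ (reordered top to bottom) has the ordering property.

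First I would confirm that the waterline $\ell$ exists: the area of $P \cap \ell^+$ is a continuous function of the height of $\ell$, ranging from $0$ (when $\ell$ is above everything) to $\area(P)$ (when $\ell$ is below everything), so by the intermediate value theorem there is a height where $\area(P \cap \ell^+) = \area(D_1)$, hence $\area(P \cap \ell^-) = \area(P) - \area(D_1) = \area(D_2)$. Next I would argue that the corridors added to build $D_1'$ and $D_2'$ have zero area in the weak representation (they run along $\partial P$, i.e.\ along existing edges), so $\area(D_1') = \area(P \cap \ell^+) = \area(D_1)$ and $\area(D_2') = \area(P \cap \ell^-) = \area(D_2)$, giving area-preservation. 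I would also check that $D_3$ is untouched, that $D_1' \cup D_2' = P = D_1 \cup D_2$, and that $D_1', D_2'$ are interior-disjoint, so $\{D_1', D_2', D_3\}$ is indeed a district map on $\mathcal{R}$ and the move is a valid ReCom applied to the pair $(D_1, D_2)$.

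The substantive part is connectivity and the ordering property. For connectivity of $D_1'$: each component of $P \cap \ell^+$ is a weak polygon; since $D_3$ is simply connected, distinct components of $P \cap \ell^+$ are separated within $P$ only by portions of $\pi \subseteq \partial D_3$, so the red graph defined in the text is connected, and adding the edges of a spanning tree of the red graph as corridors along $\partial P$ makes $D_1'$ connected (the corridors are interior-disjoint from each other and from $P \cap \ell^+$ up to the weak-embedding convention, using the linear order on overlapping corridors). For $D_2'$: each component of $P \cap \ell^-$ meets $\ell$ (again because $D_3$ is simply connected and lies below, so a component of $P$ below $\ell$ cannot be ``sealed off'' from $\ell$ without $D_3$ having a hole), hence is adjacent to $D_1'$; coating $\pi$ with green corridors along $\partial D_1'$ (placed, where a red corridor is already present, between blue and the red corridor) connects all green components through a corridor system running just above the blue district. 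I would then argue the reordered map is exactly green–on–top–of–blue along both the left and right sides: the coating ensures $D_3$'s intersection with each vertical side is sandwiched between $D_2'$ below... more precisely, that the top-to-bottom order on each side becomes $D_1', D_2', D_3$ with each intersection a single connected arc (the waterline cuts the sides cleanly, and corridors only add to, never disconnect, these arcs). Finally, the middle district of the new order is $D_2'$, with $\area(D_2') = \area(D_2) \geq \area(D_1) = \area(D_1')$ and $\area(D_2') = \area(D_2) \geq \area(D_3)$ by the input ordering property, so the new map again has the middle district of largest area.

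The main obstacle I anticipate is the bookkeeping around the weak representation: making precise that the various corridors (red corridors along $\partial P$, green corridors coating $\pi$, possibly several running in parallel next to one another and next to pre-existing corridors of $D_3$) can be realized as a consistent weak embedding with a well-defined linear order on each overlapping pipe, so that after an $\eps$-perturbation (via Proposition~\ref{cor:perturbation}) all three districts are genuine simple polygons with the prescribed areas. A secondary delicate point is verifying that no two districts ``cross'' on the boundary of $\mathcal{R}$ after the move — i.e.\ that the intersection of each new district with each vertical side is connected — which is what licenses speaking of a top-to-bottom order at all; this follows because $\ell$ is a single horizontal line (so $D_1'$ takes the top arc and $D_2' \cup D_3$ the bottom arc on each side) and the green coating of $\pi$ only extends $D_2'$'s arc downward to abut $D_3$'s arc without splitting it.
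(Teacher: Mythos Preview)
Your proposal is correct and follows essentially the same approach as the paper's proof: connectivity of $D_1'$ via the spanning tree of the red graph, connectivity of $D_2'$ via the fact that each component of $P\cap\ell^-$ is adjacent to $\pi$ (since $D_3$ is simply connected), and both new districts touching the vertical sides. Your write-up is considerably more thorough than the paper's brief argument---in particular you explicitly verify area preservation, the existence of the waterline, and that $D_2'$ remains the middle district of largest area, points the paper leaves implicit.
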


\begin{proof}
By construction, $D_1'$ is simply connected since it is homotopic to a spanning tree of the red graph. Furthermore $D_1'$ contains the upper-left and upper-right corners of the unit square, so it touches both sides of the square. As noted above, $(P\cap \ell^-)\cup \pi$ is a connected region, each component of $P\cap \ell^-$ is adjacent to $\pi$. Therefore, $D_2'$ is connected and touches both left and right sides of the domain. Thus, the ordering property is maintained.
\end{proof}

Since each waterline intersects an edge of a district at most once we have:

\begin{restatable}{lemma}{lemgravitycomplexity}\label{lem:gravitycomplexity}
Assume $D_1$, $D_2$ and $D_3$ each have at most $m$ vertices. Then \gravity($D_1$,$D_2$) produces districts $D'_1$ and $D'_2$, each with at most $O(m)$ vertices.
\end{restatable}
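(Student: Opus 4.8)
The plan is to track how the vertex count of $D_1'$ and $D_2'$ is generated by the construction of the gravity move and bound each source. I would first fix the bound $|V(D_1)|,|V(D_2)|,|V(D_3)|\le m$, so that the union $P=D_1\cup D_2$ has $O(m)$ vertices, and the blue boundary path $\pi$ (a sub-path of $\partial D_3$) also has $O(m)$ vertices. The new districts are built from three kinds of pieces: (i) portions of $P$ cut off by the horizontal waterline $\ell$; (ii) the waterline segments themselves that appear on the new boundaries; and (iii) the corridors added along $\pi$ (red corridors forming a spanning tree of the red graph, and green corridors ``coating'' the blue district). I would bound the vertices contributed by each kind and sum.

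For (i): each edge of $P$ is split by $\ell$ in at most one point, since the edge is a line segment and $\ell$ is a line (in the degenerate case where an edge lies on $\ell$ we may perturb or handle it directly), so $P\cap\ell^+$ and $P\cap\ell^-$ together have $O(m)$ original vertices plus $O(m)$ new vertices on $\ell$. For (ii): the waterline contributes $O(m)$ segment endpoints, already counted. For (iii): the red corridors are a spanning tree of the red graph, whose vertices are the components of $P\cap\ell^+$; since each component contains at least one original vertex of $P$ (or touches $\ell$), the number of components is $O(m)$, so the spanning tree has $O(m)$ edges; each corridor edge runs along $\partial P$ and along arcs of $\pi\cap\ell^-$, so it is a polygonal path whose complexity is $O(m)$ in total across the tree because it reuses existing vertices of $P$ and $\pi$ (the corridor has zero area in the weak representation, so a corridor is just a doubled path). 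Finally, the green corridors run along $\pi$ and along $\partial D_1'$ (including red corridors); their total complexity is $O(|V(\pi)|+|V(D_1')|)=O(m)$. Adding these up gives $|V(D_1')|,|V(D_2')|=O(m)$.

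The main obstacle I anticipate is bookkeeping in part (iii): a green corridor that runs alongside a red corridor, and a red corridor that itself hugs the boundary of a component of $P\cap\ell^-$, can in principle be traversed several times, so I must argue that the \emph{number of distinct vertices} used by all corridors together is $O(m)$ rather than, say, quadratic. The key point is that in the weak representation a corridor is a polygonal path lying on the already-existing arrangement of $\partial P\cup\pi$, and the linear order of overlapping corridors is recorded combinatorially rather than by introducing new geometry; a single edge of $H$ may be shared by several corridors without multiplying vertices. Since $\partial P\cup\pi$ is a planar straight-line graph with $O(m)$ vertices and the corridors added (red spanning tree plus green coating) number $O(m)$ and each is a simple path in this graph, the image graph of the resulting weak representation of $\partial D_1'\cup\partial D_2'\cup\partial D_3$ still has $O(m)$ vertices, which is exactly the desired bound. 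I would state this carefully and then conclude $|V(D_1')|,|V(D_2')|=O(m)$.
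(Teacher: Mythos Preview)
Your proposal is correct and follows essentially the same decomposition as the paper: bound the new vertices created by the waterline (each edge of $P$ is cut at most once, giving $O(m)$ new points) and then bound the vertices contributed by the red and green corridors along $\pi$. The only difference is cosmetic: where you invoke the weak-representation/image-graph machinery to argue that overlapping corridors reuse the $O(m)$ vertices of $\partial P\cup\pi$, the paper gives a more direct local count---at most one red corridor (and at most two corridors total) pass through any given vertex of $\pi$, each contributing $O(1)$ vertices there---which immediately yields the same $O(m)$ bound.
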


\begin{proof}
We show that all steps of the \gravity($D_1$,$D_2$) move create $O(m)$ vertices. 
The waterline $\ell$ intersects $\pi$ in at most $m$ points, which become vertices of both $D_1'$ and $D_2'$. 
Then red corridors are added along the edges of an arbitrary spanning tree of the red graph. At most one red corridor passes through any vertex of $\pi$, which has at most $m$ vertices. As the red corridors have at most two vertices at each vertex of $\pi$, this step created $O(m)$ new vertices for $D'_1$.
When green corridors are added, at most $2$ corridors pass through any vertex of $\pi$, which has at most $m$ vertices. As before, the green corridors have $O(1)$ vertices at each vertex of $\pi$, thus this step created $O(m)$ new vertices for $D'_2$.
\end{proof}

The move $\gravity(D_3,D_2)$ is defined analogously: a reflection in a horizontal line that reverses the order of the three districts, such that $D_1'=D_3$ becomes the top district and $D_2'=D_2$ is the middle district, then apply $\gravity(D'_1,D'_2)$, followed by another reflection.




\begin{lemma}
    \label{lem:gravity}
    Let $M$ be a map  satisfying the ordering property, with districts $D_1$, $D_2$ and $D_3$ from top to bottom.
    Then $\gravity(D_1,D_2)$ returns a map that satisfies the ordering property and $D_1'$ is disjoint from $Q_3$ with the possible exception of corridors,  where $Q_3$ is the axis-aligned rectangle of the blue district in the canonical map.
\end{lemma}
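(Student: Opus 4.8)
The plan is to show that $\gravity(D_1,D_2)$ places the red district entirely in the upper halfplane $\ell^+$ (plus corridors), and then argue that this halfplane is disjoint from $Q_3$ up to those corridors. The waterline $\ell$ is chosen so that $\area(P\cap\ell^+)=\area(D_1)$ where $P=D_1\cup D_2$. After the move, $D_1'$ consists of $P\cap\ell^+$ together with red corridors running along $\pi\cap\ell^-$ (a spanning tree of the red graph). So up to corridors, $D_1'\subseteq\ell^+$, and it suffices to show $\ell^+\cap Q_3$ has empty interior, i.e.\ the rectangle $Q_3$ lies (weakly) below the waterline $\ell$.

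First I would note that $Q_3$ is the axis-aligned rectangle of unit width with $\area(Q_3)=\area(D_3)$ sitting at the bottom of the unit square, so $Q_3=[0,1]\times[0,\area(D_3)]$; thus $Q_3$ lies below the horizontal line $y=\area(D_3)$. Hence it is enough to show that the waterline $\ell=\{y=h\}$ satisfies $h\ge \area(D_3)$. Now $h$ is determined by $\area(P\cap\ell^-)=\area(D_2)$, where $\ell^-=\{y<h\}$. Since $P=D_1\cup D_2$ occupies the part of the square not covered by $D_3$, the area of the square strictly below height $h$ that is \emph{not} in $P$ is exactly the area of $D_3$ below height $h$, which is at most $\area(D_3)$. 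Therefore the total area below height $h$ is at most $\area(D_2)+\area(D_3)$. On the other hand, by the ordering property the middle district $D_2$ has the largest area, so $\area(D_2)\ge\area(D_3)$, giving total area below $h$ at most $\area(D_2)+\area(D_3)\le 2\area(D_2)$. I would compare this with the area below height $\area(D_3)$: that strip has total area $\area(D_3)\le\area(D_2)$, and it must be covered by $D_2$ and $D_3$ alone (since $D_1$ lies above $D_2$ in the ordering and cannot reach that low, or more carefully, the portion of $D_1$ there is bounded)—the cleanest route is a monotonicity/continuity argument: the function $f(y)=\area(P\cap\{y'<y\})$ is continuous and nondecreasing, $f(h)=\area(D_2)$, and I claim $f(\area(D_3))\le\area(D_2)$, whence $h\ge\area(D_3)$ by monotonicity.

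To prove the claim $f(\area(D_3))\le\area(D_2)$, observe $f(\area(D_3))=\area(P\cap\{y<\area(D_3)\})\le\area(\{y<\area(D_3)\}\cap[0,1]^2)=\area(D_3)\le\area(D_2)$, using the ordering property for the last inequality. This is clean and avoids any case analysis. Combined with $f(h)=\area(D_2)\ge f(\area(D_3))$ and monotonicity of $f$, we get $h\ge\area(D_3)$, so $Q_3\subseteq\ell^-\cup\ell$ lies weakly below the waterline, and $\mathrm{int}(D_1'\setminus(\text{corridors}))\subseteq\mathrm{int}(\ell^+)$ is disjoint from $Q_3$. The ordering property of the output map is exactly Lemma~\ref{lem:gravity_on_disk}, which I would cite directly.

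The main obstacle is handling the corridors rigorously: $D_1'$ is not literally contained in $\ell^+$ because the red corridors dip into $\ell^-$ along $\pi$, so the statement can only assert disjointness ``with the possible exception of corridors.'' I would make this precise by defining $D_1'$ as $(P\cap\ell^+)$ union a set of corridors, each of zero area in the weak representation, and noting the claim concerns the interior of the non-corridor part; the corridors contribute nothing to area and the interior of $P\cap\ell^+$ avoids $Q_3$ by the waterline bound above. A secondary subtlety is that a priori the waterline could fail to be unique if $P$ has a positive-area horizontal slab, but the argument only needs \emph{some} valid $\ell$, and the inequality $h\ge\area(D_3)$ holds for the one produced by the construction (or for all of them). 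I would also double-check the edge case where $\area(D_3)=\area(D_2)$ and $D_3$ already coincides with $Q_3$, in which case $h=\area(D_3)$ exactly and the conclusion still holds.
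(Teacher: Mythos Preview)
Your approach is the same as the paper's: show that the waterline $\ell=\{y=h\}$ lies at or above the top of $Q_3$, then invoke Lemma~\ref{lem:gravity_on_disk} for the ordering property. The paper carries this out in one line: the strip $[0,1]\times[0,h]$ has area $h$ and contains $P\cap\ell^-$, so $h\ge\area(P\cap\ell^-)=\area(D_2)\ge\area(D_3)$.

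There is, however, a small logical gap in your monotonicity step. From $f(h)=\area(D_2)\ge f(\area(D_3))$ and $f$ merely nondecreasing you \emph{cannot} conclude $h\ge\area(D_3)$: if $f$ is constant on an interval (which happens whenever $D_3$ occupies a full horizontal slab), a smaller waterline with the same $f$-value is not ruled out by monotonicity alone. The repair is already sitting in your own computation: the bound $f(y)\le y$ that you apply at $y=\area(D_3)$ works just as well at $y=h$, giving $h\ge f(h)=\area(D_2)\ge\area(D_3)$ directly---and that is exactly the paper's argument, without any detour through monotonicity. Your treatment of the corridor caveat and your citation of Lemma~\ref{lem:gravity_on_disk} are both fine.
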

\begin{proof}
It suffices to show that the horizontal line $\ell$ is above $Q_3$.
Lemma~\ref{lem:gravity_on_disk} yields the rest. By definition, the area below $\ell$ is at least $\mathrm{area}(D_2) \geq \mathrm{area}(Q_3)$, since $D_2$ has the maximum area of the three districts. Thus, the line $\ell$ is above $Q_3$.
\end{proof}

\begin{figure}[htb]
	\centering
    \begin{minipage}{.45\textwidth}
          \centering
          \includegraphics[width = 0.8 \textwidth]{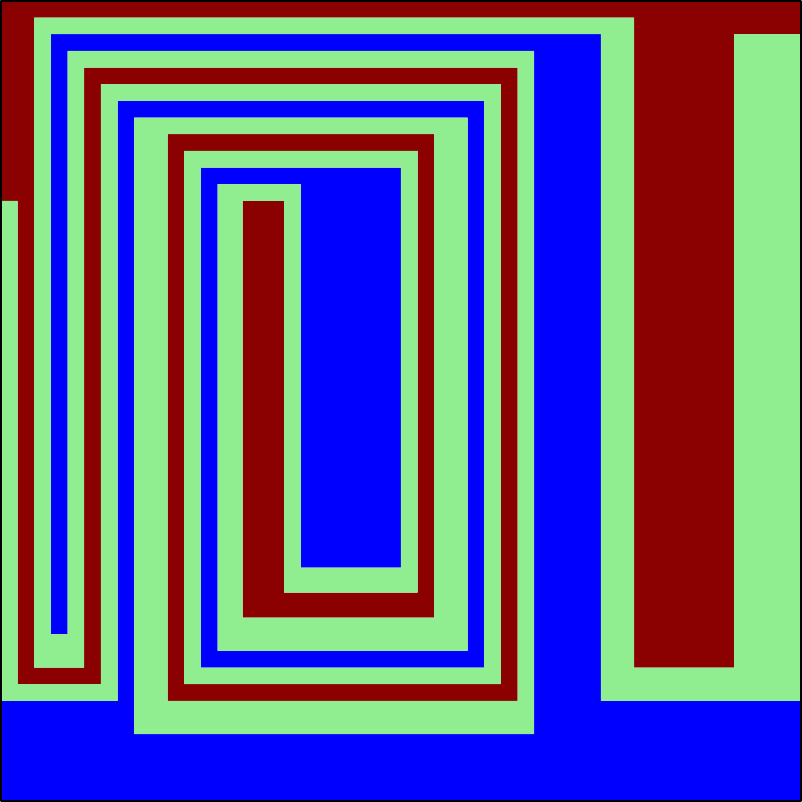}
    \end{minipage}
        \vspace{2mm}
    \begin{minipage}{.45\textwidth}
          \centering
          \includegraphics[width = 0.8 \textwidth]{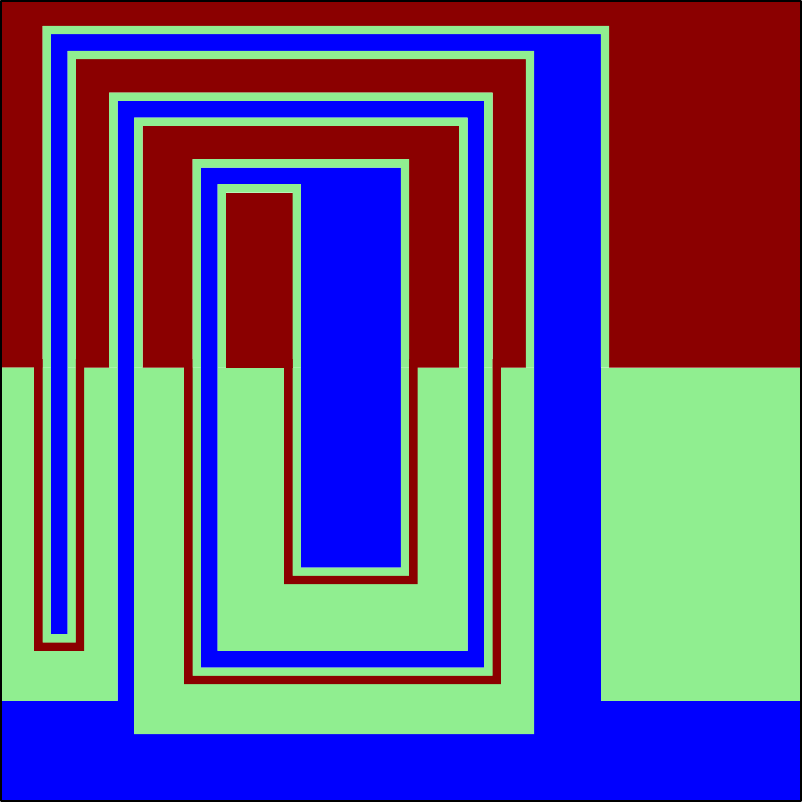}
    \end{minipage}
    \begin{minipage}{.45\textwidth}
          \centering
          \includegraphics[width = 0.8 \textwidth]{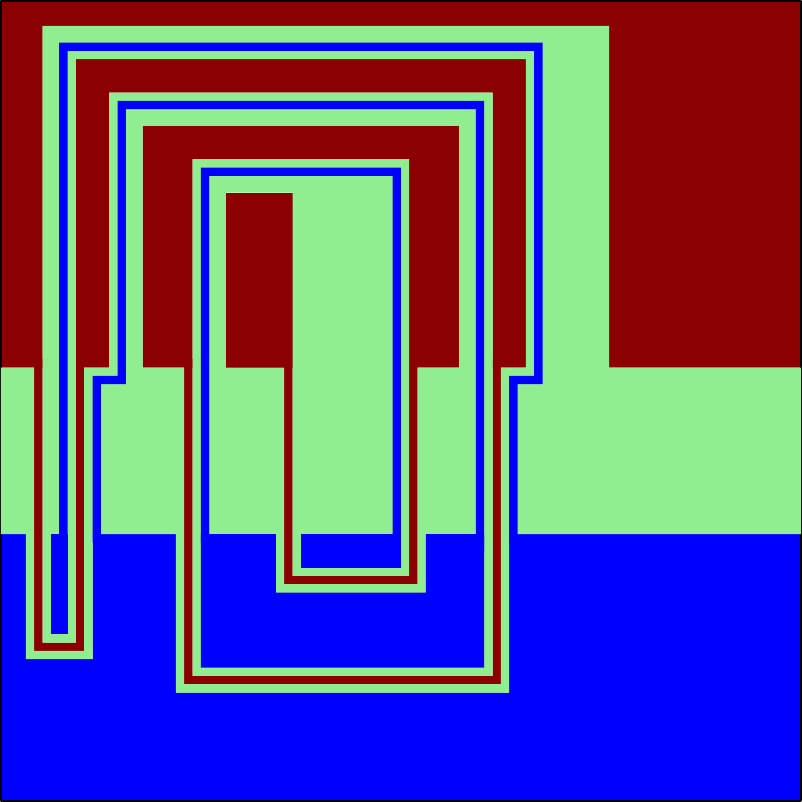}
    \end{minipage}
            \vspace{2mm}
    \begin{minipage}{.45\textwidth}
          \centering
          \includegraphics[width = 0.8 \textwidth]{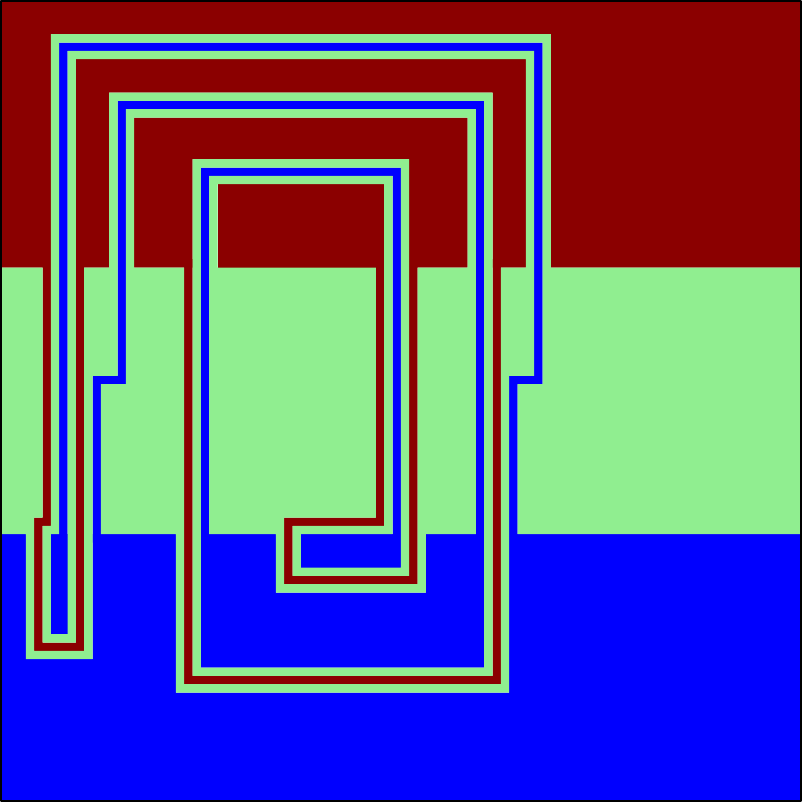}
    \end{minipage}
    \caption{An example of a sequence of three gravity moves. Top-left: a starting configuration. Top-right: the result of $\gravity(D_1,D_2)$. Bottom-left: the result of $\gravity(D_3,D_2)$. Bottom-right: the result of the third gravity move $\gravity(D_1,D_2)$.}
    \label{fig:after2gravity}
\end{figure}

After three gravity moves each district has positive area only in the regions in their corresponding districts in the canonical configuration (see Figure~\ref{fig:after2gravity} for an example).

\begin{restatable}{lemma}{lemthreegravity}\label{lem:3gravity}
Let $M$ be a map  satisfying the ordering property, with districts $D_1$, $D_2$ and $D_3$ from top to bottom.
Then the sequence of three moves $\gravity(D_1, D_2)$, $\gravity(D_2, D_3)$, and $\gravity(D_1, D_2)$ return a map $M'$ where $D_1$, $D_2$, and $D_3$ are each contained in their canonical rectangles, with the possible exception of some corridors.
\end{restatable}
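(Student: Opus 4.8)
The plan is to track, after each of the three gravity moves, which canonical rectangles each district can still overlap (ignoring corridors, i.e.\ working with the ``positive area'' portion of each district). Recall the canonical rectangles are horizontal slabs $Q_1$ (top), $Q_2$ (middle), $Q_3$ (bottom) of the unit square, stacked in that order, with $\area(Q_i)=\area(D_i)$. Throughout, by Lemma~\ref{lem:gravity_on_disk} the ordering property is preserved, so after each move there is still a well-defined top/middle/bottom district, and by Lemma~\ref{lem:gravitycomplexity} complexities stay $O(n)$; these facts let us legitimately chain the moves. The key observation driving everything is Lemma~\ref{lem:gravity}: a move $\gravity(A,B)$ on the top two districts $A,B$ (with the middle district having max area) places $A$ entirely above the waterline $\ell$, which in turn lies above $Q_3$; symmetrically $\gravity(C,B)$ on the bottom two places $C$ entirely below a waterline that lies below $Q_1$.

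First I would record the effect of the initial move $\gravity(D_1,D_2)$. By Lemma~\ref{lem:gravity}, afterwards $D_1$ (still the top district) has positive area only above a line $\ell_1$ lying above $Q_3$; hence the new $D_1$ is disjoint from $Q_3$, so $D_1 \subseteq Q_1 \cup Q_2$ up to corridors. Nothing yet is known about $D_2$ or $D_3$ beyond the ordering property. Second, apply $\gravity(D_2,D_3)$: this is the reflected version of the move acting on the bottom two districts (with $D_2$ still the largest), so by the reflected form of Lemma~\ref{lem:gravity}, the new $D_3$ (still the bottom district) has positive area only below a line $\ell_2$ lying below $Q_1$; i.e.\ $D_3 \subseteq Q_2\cup Q_3$ up to corridors. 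Crucially, this second move does not touch $D_1$, so the property ``$D_1\subseteq Q_1\cup Q_2$'' from step~1 still holds. So after two moves: $D_1\subseteq Q_1\cup Q_2$ and $D_3\subseteq Q_2\cup Q_3$ (up to corridors), while $D_2$ could be anywhere consistent with the ordering property.

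Third, apply $\gravity(D_1,D_2)$ again. This move only modifies $D_1$ and $D_2$, leaving $D_3$ untouched, so $D_3\subseteq Q_2\cup Q_3$ persists. By Lemma~\ref{lem:gravity} applied to this move, the new $D_1$ is disjoint from $Q_3$: $D_1\subseteq Q_1\cup Q_2$ (up to corridors). The remaining point is $D_2$. Here I would use an area-counting / squeezing argument: the waterline of the final move equipartitions $P=D_1\cup D_2$ into a top piece of area $\area(D_1)=\area(Q_1)$ and a bottom piece of area $\area(D_2)$. Since just before the final move $D_3$ is already confined to $Q_2\cup Q_3$, the region $P=\mathcal{R}\setminus D_3$ contains all of $Q_1$ and is contained in $Q_1\cup Q_2\cup(\text{corridors})$\,; combined with $\area(Q_1)=\area(D_1)$, the top piece $P\cap\ell_3^+$ must be exactly $Q_1$ (up to corridors and up to a measure-zero discrepancy forced continuity), forcing the new $D_1=Q_1$ and hence the new $D_2 = P\setminus D_1 \subseteq Q_2$, both up to corridors. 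Finally assemble: $D_1\subseteq Q_1$, $D_3\subseteq Q_2\cup Q_3$ and now $D_2\subseteq Q_2$ together with $D_1\cup D_2\cup D_3=\mathcal{R}$ and the matching areas force $D_3\subseteq Q_3$ as well (the part of $D_3$ in $Q_2$ would have to be made up elsewhere, but $D_1$ and $D_2$ already fill $Q_1\cup Q_2$ up to area). This yields $D_i\subseteq Q_i$ up to corridors for all $i$.

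The main obstacle I anticipate is the last step: turning the ``disjoint from $Q_3$'' conclusions of Lemma~\ref{lem:gravity} into the sharp statement that the three districts land \emph{exactly} in their canonical slabs. The two one-sided containments $D_1\subseteq Q_1\cup Q_2$ and $D_3\subseteq Q_2\cup Q_3$ do not by themselves pin down $D_2$; one genuinely needs the area bookkeeping, and one must be careful that ``up to corridors'' is handled consistently (corridors have zero area in the weak representation, which is exactly what makes the area argument clean). I would make sure to state precisely, perhaps as a preliminary remark, that after the second gravity move the bottom district is confined below $Q_1$ \emph{and} is disjoint from $Q_1$, so that $Q_1\subseteq D_1\cup D_2$ going into the third move — this is the fact that makes the squeeze work.
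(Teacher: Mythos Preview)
There is a genuine gap in your third step. From the second move you extract only the reflected Lemma~\ref{lem:gravity} conclusion, namely $D_3 \subseteq Q_2\cup Q_3$ (equivalently $D_3\cap Q_1=\emptyset$). But the assertion ``$P=\mathcal{R}\setminus D_3$ \ldots\ is contained in $Q_1\cup Q_2\cup(\text{corridors})$'' does not follow from this: that containment is equivalent to $Q_3\subseteq D_3$ (up to corridors), which you have not established. If $D_3$ occupies positive area in $Q_2$---which your bound permits---then by area conservation it vacates part of $Q_3$, and that vacated portion lies in $P=D_1\cup D_2$. Your derivation $D_1'=Q_1$ from $Q_1\subseteq P$ is fine, but then $D_2'=P\setminus Q_1$ can still reach into $Q_3$, so $D_2'\subseteq Q_2$ is unjustified and the final ``assemble'' paragraph does not close the argument.

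The fix---and this is exactly the paper's route---is to draw the sharp conclusion already at the second move. After move one, $D_1\cap Q_3=\emptyset$ (up to corridors), hence $Q_3\subseteq D_2\cup D_3$. In the second gravity move the waterline on $D_2\cup D_3$ carves off bottom area exactly $\area(D_3)=\area(Q_3)$; since the entire bottom slab $Q_3$ is present in $D_2\cup D_3$, that waterline must sit at the top edge of $Q_3$, so $D_3$ lands in $Q_3$ (up to corridors), not merely in $Q_2\cup Q_3$. With $D_3\subseteq Q_3$ in hand after move two, one genuinely has $D_1\cup D_2\subseteq Q_1\cup Q_2$, and your third-move argument then goes through. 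In short, the area-squeeze you deploy at move three is precisely the mechanism needed at move two; once applied there, your outline coincides with the paper's proof.
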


\begin{proof}
Let $Q_1$, $Q_2$ and $Q_3$ be the rectangles of the three districts in the canonical map (that is, axis-aligned rectangles of unit width with $\area(Q_i)=\area(D_i)$ for $i=1,2,3$).

From Lemma~\ref{lem:gravity} we know that after the first gravity move, $D_1$ is disjoint from $Q_3$ with the possible exception of corridors. Thus, the second gravity move between $D_2$ and $D_3$ moves $D_3$ into $Q_3$ with the possible exception of corridors. Finally, since $D_1 \cup D_2$ is entirely contained in $Q_1 \cup Q_2$ with the possible exception of corridors, the third gravity move between $D_1$ and $D_2$ moves each of them into their final position with the possible exception of corridors.
\end{proof}

\subsection{Tree Representation of a Region.}
After prepocessing and the three gravity moves in Lemma~\ref{lem:3gravity}, we want to eliminate corridors. We encode the topology of the region $P = D_1 \cup D_2$ in a graph that we use for the $\exchange$ sequence, described below.

We define the \emph{corridor graph} $T(R)$ of a weakly simple polygon $R \subset \mathcal{R}$. A weakly simple polygon has a unique decomposition into pairwise disjoint  simple polygons and corridors (polygonal paths). The nodes of $T(R)$ are simple polygons in the decomposition of $R$, and the edges represent corridors between two polygons in $R$. Denote the set of edges by $E(T(R))$. At each node, the rotation of the incident edges represents the counterclockwise order of corridors along the corresponding polygon in $R$. The weight of each node is the area of the corresponding polygon. As corridors have zero thickness, the total weight of the nodes is $W=\area(R)$.

In particular, we want to consider the corridor graph of $P = D_1 \cup D_2$. Assume that $\map$ is a 3-district map returned by the three gravity moves in Lemma~\ref{lem:3gravity}. By the ordering property, we know that the intersection of $D_1$ and $D_2$ is a simple path - either from one side of the square to another or, if $D_1$ is contained in $D_2$, then it is a closed curve. Thus, $P$ is a weakly simple polygon. Let $Q_{12}$ be the union of the two axis-aligned rectangles that contain $D_1$ and $D_2$ in the canonical configuration. Then, the nodes of $T(P)$ are simple polygons in $P \cap Q_{12}$ (regions bounded by corridors of $D_3$) and the edges are corridors in $R \setminus Q_{12}$ that connect two such polygons (corridors of $D_1$ and $D_2$ running through $Q_3$).
Note, however, that a corridor in $P$ may be the union of three parallel corridors in $D_2$, $D_1$, and $D_2$, resp.; see Fig.~\ref{fig:corridor_graph_def}.
Since $P$ is a weakly simple polygon, $T(P)$ is a tree; see Fig.~\ref{fig:corridor_graph_def}.
Note that the number of vertices in $T(P)$ is bounded above by the compressed complexity of the map and that many different maps can have the same corridor graph.

\begin{figure}[ht!]
	\centering
    \begin{minipage}{.48\textwidth}
          \centering
          \includegraphics[width = 0.8 \textwidth]{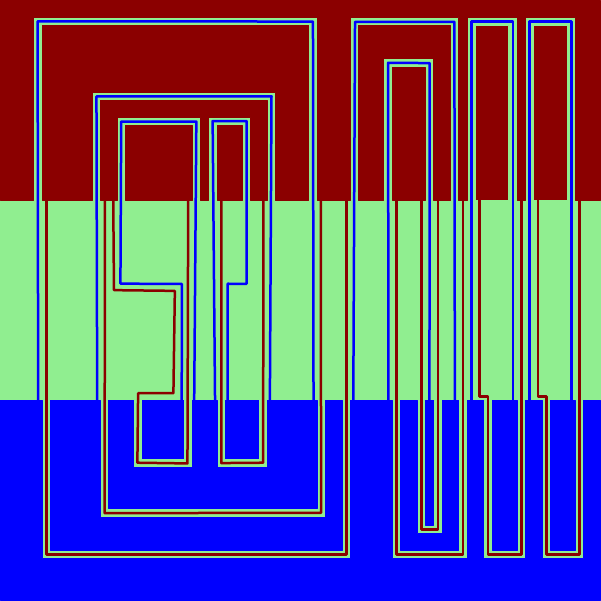}
          
          (a)
    \end{minipage}
    \centering
    \begin{minipage}{.48\textwidth}
          \centering
          \includegraphics[width = 0.8 \textwidth]{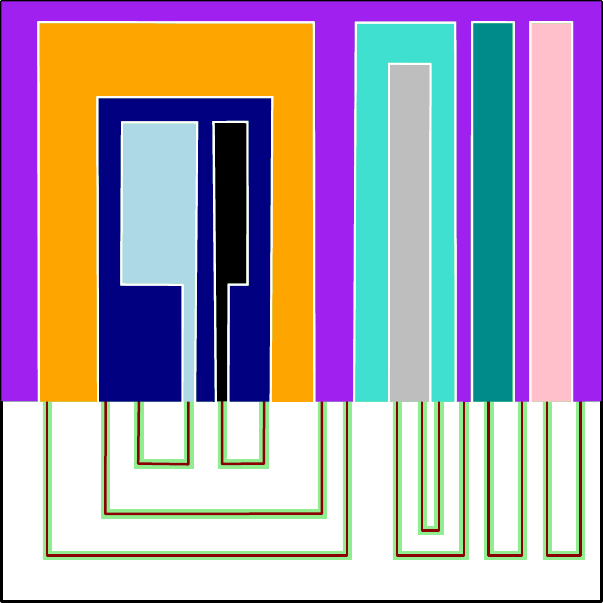}
          
          (b)
    \end{minipage}
    \centering
    \includegraphics[width = 0.5 \textwidth]{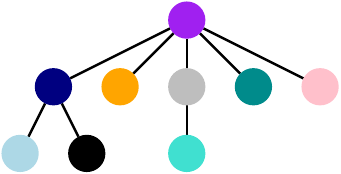}
    
    (c)
    \caption{(a) A map $M$ after 3 \gravity\ moves. (b) The nodes of the corridor graph $T(P)$ correspond to connected components of $P \cap Q_{12}$, indicated by distinct colors.
    (c) The corridor graph $T(P)$ encodes the topology of $P$.
    }
    \label{fig:corridor_graph_def}
\end{figure}


We use the corridor graph $T(P)$ to eliminate corridors. Consider what happens if the tree has a leaf that is entirely part of the green district (see Fig.~\ref{fig:afterexchange}). This means that by doing a gravity move between green and blue we can eliminate the green and blue corridors adjacent to this leaf, removing the leaf from the tree altogether. Our goal is therefore to create a part of the tree which is entirely green.

The \emph{centroid} of a vertex-weighted tree of total weight $W$ is a vertex whose removal partitions the tree into subtrees of weight at most $\frac{W}{2}$ each. Jordan~\cite{Jordan69} proved that every tree (with unit weights) has a centroid; this was perhaps the oldest separator theorem~\cite{LT79,MillerTTV97}. The result extends to weighted trees: a greedy algorithm finds the centroid in linear time.

Let $c$ be a centroid of $T(P)$, and assume that $T(P)$ is rooted at $c$.
Let $t$ be the node of $T(P)$ that corresponds to the region of $P$ containing the top edge of the square domain.
A  subtree of $T(P)$ is \emph{contiguous} if it consists of the centroid $c$, some children of $c$ that are consecutive in the rotation order of $c$, and all their descendants in $T(P)$. 

\begin{restatable}{lemma}{threesplit} \label{lem:3split}
There exists a contiguous subtree $T^*$ of $T(P)$ such that:
{\rm (i)} $T^*$ contains at least $\frac{1}{3}$ of the vertices of $T(P)$, and
{\rm (ii)} the weight of $T^*$ is at most $\frac{W}{2} + w(c)$, where $w(c)$ is the weight of the centroid $c$.
\end{restatable}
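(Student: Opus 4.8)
The plan is to exploit the centroid $c$ of $T(P)$ together with the cyclic (rotation) order of its children, and to find a contiguous block of children that is simultaneously ``large enough'' in vertex count and ``small enough'' in weight. Write $c_1, c_2, \ldots, c_d$ for the children of $c$ in the rotation order around $c$, and for each $i$ let $S_i$ be the subtree rooted at $c_i$ (so $S_i$ contains $c_i$ and all its descendants). Let $v_i = |V(S_i)|$ and $\omega_i = w(S_i)$ be the number of vertices and the total weight of $S_i$, respectively. By the centroid property, each $\omega_i \le \frac{W}{2}$, and since the $S_i$ are disjoint we have $\sum_{i=1}^d \omega_i = W - w(c)$ and $\sum_{i=1}^d v_i = |V(T(P))| - 1$. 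A contiguous subtree $T^*$ is obtained by choosing a cyclic interval of indices $I = \{i, i+1, \ldots, j\}$ (indices mod $d$) and setting $V(T^*) = \{c\} \cup \bigcup_{\ell \in I} V(S_\ell)$; its weight is $w(c) + \sum_{\ell \in I} \omega_\ell$ and its vertex count is $1 + \sum_{\ell \in I} v_\ell$.

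**The main argument.**

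First I would dispose of an easy case: if some single $S_i$ already has $v_i \ge \frac{1}{3}|V(T(P))|$, then it also has $\omega_i \le \frac{W}{2}$ by the centroid property, so $T^* = \{c\} \cup S_i$ satisfies both (i) and (ii) and we are done. Otherwise every $v_i < \frac{1}{3}|V(T(P))|$, and in particular $d \ge 2$. Now sweep a cyclic window: starting from the empty interval, add children one at a time in rotation order, stopping the first time the accumulated weight $\sum_{\ell \in I}\omega_\ell$ would exceed $\frac{W}{2}$. Because each individual $\omega_\ell \le \frac{W}{2}$, at the moment we stop the accumulated weight is at most $\frac{W}{2}$ (adding one more child is what would overshoot), so $w(T^*) \le \frac{W}{2} + w(c)$, giving (ii). For (i), I claim the accumulated vertex count is at least $\frac{1}{3}|V(T(P))|$: if not, then the chosen block has fewer than $\frac{1}{3}|V(T(P))|$ vertices, and the next child $S_{j+1}$ (the one that forced us to stop) has fewer than $\frac{1}{3}|V(T(P))|$ vertices as well; but then together the block plus $S_{j+1}$ has fewer than $\frac{2}{3}|V(T(P))|$ vertices, so the \emph{complementary} cyclic interval (everything except the block and $S_{j+1}$) has more than $\frac{1}{3}|V(T(P))| - 1$ vertices. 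I would then argue that this complementary interval has weight at most $\frac{W}{2}$ as well — indeed its weight is $W - w(c) - \sum_{\ell\in I}\omega_\ell - \omega_{j+1}$, and since $\sum_{\ell \in I}\omega_\ell + \omega_{j+1} > \frac{W}{2}$ this is less than $\frac{W}{2} - w(c) \le \frac{W}{2}$ — so the complementary interval, together with $c$, is a contiguous subtree satisfying (ii) and (by the vertex count just derived) also (i). Taking $T^*$ to be whichever of the two blocks has the larger vertex count then yields the lemma; the one we pick has at least half of $|V(T(P))| - 1 - v_{j+1} > \frac{1}{3}|V(T(P))|$...

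**Cleaning up the bound.**

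I expect the fiddly part to be pinning down the exact constants in the vertex-count step, since ``at least $\frac13$'' is tight and one must be careful about whether the centroid vertex $c$ itself is counted and about off-by-one issues when $d$ is small (e.g. $d=2$, where the two blocks are just $S_1$ and $S_2$). The cleanest way to handle this is: among the at-most-two candidate contiguous subtrees produced above (the swept block and its complement-plus-one-more, both augmented with $c$), their vertex sets together cover $V(T(P))$ with each non-centroid vertex counted once and $c$ counted (at most) twice, so one of them has at least $\lceil (|V(T(P))|+1)/2 \rceil \ge \frac{1}{3}|V(T(P))|$ vertices whenever $|V(T(P))| \ge 1$; and both satisfy (ii) by the weight bookkeeping above. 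This gives a contiguous subtree meeting (i) and (ii) in all cases. The only genuine content is the observation that a greedy prefix sum in cyclic order never has to overshoot $\frac W2$ by more than one term, which is exactly where the centroid hypothesis $\omega_i \le \frac W2$ is used; everything else is interval bookkeeping, so I would keep that part terse.
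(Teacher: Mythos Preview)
Your main argument is essentially correct and already complete at the sentence ending ``\ldots also (i)'': you have produced two candidate contiguous subtrees, the swept block $\{c\}\cup\bigcup_{\ell\in I}S_\ell$ and the complementary interval $\{c\}\cup\bigcup_{\ell\notin I\cup\{j+1\}}S_\ell$, both of weight at most $\tfrac{W}{2}+w(c)$, and your case analysis shows that one of them has more than $\tfrac13|V(T(P))|$ vertices. You should stop there. The ``Cleaning up'' paragraph introduces a genuine error: your ``complement-plus-one-more'' subtree now includes $S_{j+1}$, and its weight is $W-w(c)-\sum_{\ell\in I}\omega_\ell$, which can exceed $\tfrac{W}{2}$ (you only know $\sum_{\ell\in I}\omega_\ell\le\tfrac{W}{2}$, giving a \emph{lower} bound of $\tfrac{W}{2}-w(c)$ on that complement's weight, not an upper bound). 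So that candidate need not satisfy (ii), and the covering argument you sketch there does not repair this. The trailing inequality $\tfrac12(|V(T(P))|-1-v_{j+1})>\tfrac13|V(T(P))|$ is also off by one as written, though it becomes correct once $c$ is added.

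The paper's proof takes a different and tidier route that avoids both your case split and the sweep. It greedily merges cyclically consecutive subtrees of $T(P)-c$ whenever the merged weight stays at most $\tfrac{W}{2}$; a one-line pigeonhole shows the process terminates with at most three groups (with four or more groups, summing the weights of all cyclically consecutive pairs counts each group twice and exceeds $2W$, contradicting $\sum\omega_i=W-w(c)\le W$). One of the at-most-three groups then contains at least a third of the vertices of $T(P)-c$, and adjoining $c$ yields $T^*$. Your argument in effect rediscovers the same three-piece partition (block, $S_{j+1}$, complementary) along a less symmetric path; the paper's merging argument buys you freedom from the ``some $S_i$ is large'' case and from the vertex-count bookkeeping.
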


\begin{proof}
By the definition of the centroid $c$, the removal of $c$ produces a forest $T(P)-c$, where the weight of each component (tree) is at most $\frac{W}{2}$.
Partition these $\deg(c)$ trees into up to three forests of consecutive subtrees such that each forest has weight at most $\frac{W}{2}$ as follows. 
Begin with a partition into $\deg(c)$ forests, each containing a single tree, and maintain their cyclic order around $c$. While there are two consecutive forests whose combined weight is at most $\frac{W}{2}$, merge them into a single forest.
The while loop terminates with three or fewer forests: Indeed, for four or more forests, the combined weight of at least one of the consecutive pairs would be at most $\frac{W}{2}$ by the pigeonhole principle.
Since we partition $T(P)-c$ into three forests, one of them contains at least $\frac{1}{3}$ of the vertices $T(P)-c$. Adding $c$ to this forest, we obtain a contiguous subtree of $T(P)$ containing at least $\frac{1}{3}$ of the vertices of $T(P)$.
\end{proof}


\begin{figure}[ht!]
    \centering
    \begin{minipage}{.44\textwidth}
          \centering
          \includegraphics[width = 0.85 \textwidth]{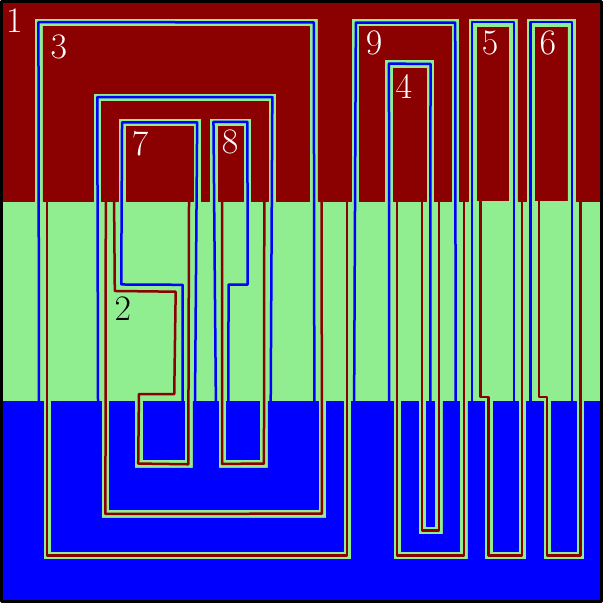}
    \end{minipage}
    \centering
    \begin{minipage}{.54\textwidth}
          \centering
          \includegraphics[width = 0.85 \textwidth]{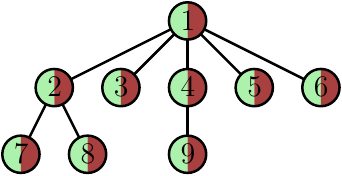}
    \end{minipage}
    \caption{The setup for an exchange sequence. Left: a map returned by a sequence of three gravity moves. Right: the corridor graph $T(P)$ that encodes the topology of $P=D_1\cup D_2$. Every node corresponds to a polygon in $P$. In this example, each polygon intersects both districts, which is indicated by a split node-coloring.}
    \label{fig:beforeexchange}
\end{figure}

\subsection{Exchange Sequence}\label{ssec:exchange}
We now describe the \emph{exchange sequence}, a sequence of three ReCom moves, which eliminates a fraction of the corridors and reduces the (compressed) complexity of the map.
Assume we are given a 3-district map $\map$ satisfying the ordering property.
As before, label its districts red, green, and blue from top to bottom.
We further require that there exist two horizontal lines $\ell_1$ and $\ell_2$ such that red has positive area only above $\ell_2$, blue has positive area only below $\ell_1$, and green has positive area only between $\ell_1$ and $\ell_2$ (cf.~Lemma~\ref{lem:3gravity}). 
See Fig.~\ref{fig:beforeexchange} for an example.

\begin{figure}[ht!]
    \centering
    \begin{minipage}{.44\textwidth}
          \centering
          \includegraphics[width = 0.85 \textwidth]{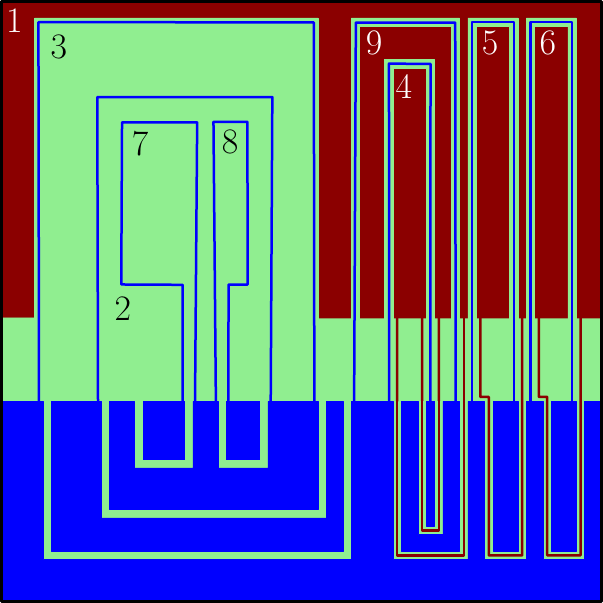}
    \end{minipage}
    \centering
    \begin{minipage}{.54\textwidth}
          \centering
          \includegraphics[width = 0.85 \textwidth]{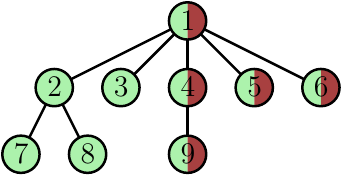}
    \end{minipage}
    \caption{The setup for an exchange sequence. Left: a map returned by a sequence of three gravity moves. Right: the corridor graph $T(P)$ that encodes the topology of $P=D_1\cup D_2$. Every node corresponds to a polygon in $P$. In this example, each polygon intersects both districts, which is indicated by a split node-coloring.}
    \label{fig:afterexchange}
\end{figure}


Let $c$ be a centroid of $T(P)$, where $P = D_1 \cup D_2$ and let $T^*$ be a contiguous subtree of $T(P)$ rooted at the centroid, as in Lemma~\ref{lem:3split}. The exchange sequence consists of the following three ReCom moves:
\begin{enumerate}
    \item ReCom green and red: Let $Q$ denote the regions of $T^*$ except for the region corresponding to node $c$. First make $Q$ green. Then partition the remaining region $P \setminus Q$ with a gravity-like move as follows.
    Apply a $\gravity$ move w.r.t.\ $P\setminus Q$ to subdivide it into two weak polygons of areas $\area(D_1)$ for red and $\area(D_2) - \area(Q)$ for green; see Fig.~\ref{fig:afterexchange}. After this ReCom move, $D_1$ is weakly simple and $D_2$ is a weak polygon (in which $D_1$ is a hole if $Q$ is a weak polygon with a hole). 
    
    
    \item ReCom green and blue removing unnecessary green and blue corridors simultaneously as follows.
    Remove any green and blue monochromatic corridors corresponding to all edges of $T^*$. Note that this merges some nodes of $T(D_3)$ (see Fig.~\ref{fig:afterexchangegravity}), and creates cycles in $T(D_3)$. While there is a cycle in $T(D_3)$ remove a blue corridor in an edge of $T(D_3)$ in a cycle. As this process modifies only green and red, it requires a single ReCom move. After this ReCom move, $D_3$ is a weakly simple polygon and $D_2$ is a weakly simple or weak polygon. 
    
    \item ReCom green and red with a $\gravity$ move, restoring the ordering property.
\end{enumerate}




\begin{figure}[ht!]
    \centering
    \begin{minipage}{.44\textwidth}
          \centering
          \includegraphics[width = 0.85 \textwidth]{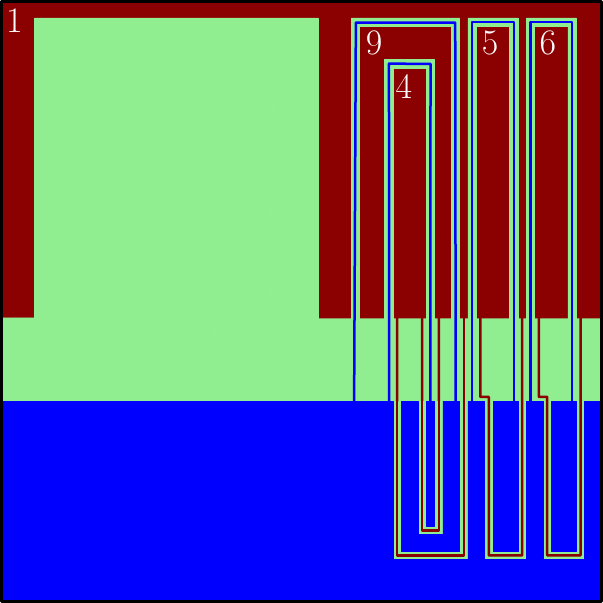}
    \end{minipage}
    \centering
    \begin{minipage}{.54\textwidth}
          \centering
          \includegraphics[width = 0.44 \textwidth]{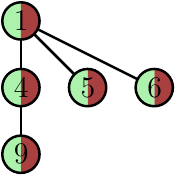}
    \end{minipage}
    \caption{The setup for an exchange sequence. Left: a map returned by a sequence of three gravity moves. Right: the corridor graph $T(P)$ that encodes the topology of $P=D_1\cup D_2$. Every node corresponds to a polygon in $P$. In this example, each polygon intersects both districts, which is indicated by a split node-coloring.}
    \label{fig:afterexchangegravity}
\end{figure}

\begin{restatable}{lemma}{lemexchange}
    \label{lem:exchange}
    Let $M=\{D_1,D_2,D_3\}$ be a 3-district map with the ordering property, and $M'=\{D_1',D_2',D_3'\}$ the map returned by an $\exchange$ sequence on $M$.
    Let $P = D_1 \cup D_2$ and $P' = D'_1 \cup D'_2$.
    Then, $M'$ satisfies the ordering property, and $|E(T(P'))|\leq \frac23 |E(T(P))|$.
\end{restatable}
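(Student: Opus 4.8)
The plan is to track both the ordering property and the size of the corridor graph separately, analyzing the three ReCom moves of the exchange sequence in order. For the ordering property: after moves 1 and 2 the districts may be weak polygons (with $D_1$ possibly a hole of $D_2$, or $D_3$ a weakly simple polygon coated by green), but move 3 is by construction a \gravity-type move that re-equipartitions $P' = D_1' \cup D_2'$ and re-coats blue with green, so by (the argument behind) Lemma~\ref{lem:gravity_on_disk} the resulting map satisfies the ordering property. I would also need to confirm that the ``gravity-like'' moves in steps 1 and 3 — which equipartition $P \setminus Q$ or $P'$ into weak polygons of the prescribed areas — are valid area-preserving ReCom moves; this follows by the same continuity/waterline argument as in the definition of the gravity move, applied to the (weakly simple) polygon $P\setminus Q$, together with Lemma~\ref{lem:perturbation}/Proposition~\ref{cor:perturbation} to realize the weak polygons as honest polygons when needed.

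For the size bound, the key is to show that move 2 merges the nodes of $T^*$ (together with their incident blue corridors that get cut) into a single node of the new corridor graph $T(P')$, so that the edges of $T^*$ disappear. First I would observe that after move 1, the region $Q$ (all nodes of $T^*$ except the centroid $c$, reconnected appropriately) is monochromatic green, and the corridors corresponding to edges of $T^*$ are now monochromatic — green on the $Q$-side and either green or blue where they ran through $Q_3$. In move 2 these green and blue corridors along edges of $T^*$ are removed; removing a corridor of $D_1\cup D_2$ that ran through a blue corridor merges the two blue pieces on its two sides, creating a cycle in $T(D_3)$, which is then destroyed by cutting one blue corridor. The net effect on $T(P)$ is that the contiguous subtree $T^*$ collapses to a single node (the merged region is connected because $T^*$ was a contiguous, hence connected, subtree, and move 3 re-coats it so it becomes a single polygon of $P'$ bounded by blue corridors). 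Hence $T(P')$ is obtained from $T(P)$ by contracting $T^*$ to a point, so $|V(T(P'))| = |V(T(P))| - (|V(T^*)| - 1)$. Since $T(P)$ and $T(P')$ are trees, $|E| = |V| - 1$, and by Lemma~\ref{lem:3split}(i) we have $|V(T^*)| \geq \tfrac13 |V(T(P))|$, giving
\begin{equation*}
|E(T(P'))| = |V(T(P'))| - 1 \leq |V(T(P))| - \tfrac13|V(T(P))| = \tfrac23|V(T(P))| = \tfrac23(|E(T(P))|+1),
\end{equation*}
which rounds to $|E(T(P'))| \le \tfrac23|E(T(P))|$ after a small case check (or with a slightly more careful constant, using that the centroid is also counted).

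The main obstacle I anticipate is verifying precisely that $T^*$ collapses to exactly one node and that no extraneous structure is created — in particular, that cutting blue corridors to kill cycles in $T(D_3)$ in move 2 does not disconnect anything or spawn new nodes/edges in $T(P')$, and that the region corresponding to $c$ (which is only partially green after move 1) gets correctly merged with the $Q$-region by the final gravity move rather than remaining split. This requires a careful topological bookkeeping of how the decomposition of a weakly simple polygon into simple polygons and corridors changes under corridor deletion, using the rotation-order structure of the corridor graph and the fact that $T^*$ is contiguous at $c$. The area bound in Lemma~\ref{lem:3split}(ii) is what guarantees move 1 is feasible (there is enough green area, $\area(D_2)$, to absorb $Q$ since $\area(Q) \le \area(T^*) \le \tfrac{W}{2} + w(c)$ and green is the largest district), so I would invoke it at that point; it plays no direct role in the final inequality but is essential for well-definedness of the sequence.
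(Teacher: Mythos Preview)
Your proposal is essentially correct and follows the same approach as the paper: invoke Lemma~\ref{lem:3split} for $T^*$, use part~(ii) to certify that $\area(Q)\le W/2\le\area(D_2)$ so move~1 is feasible, argue that move~2 eliminates the edges of $T^*$ while keeping green connected (the paper does this by peeling leaves of $T^*$ one at a time, each deletion creating exactly one blue cycle that is then cut, merging the leaf into an adjacent node), and use move~3 to restore the ordering property via Lemma~\ref{lem:gravity_on_disk}. Your anticipated obstacle---verifying that corridor deletion collapses $T^*$ without spawning new edges---is exactly the point the paper handles with the leaf-by-leaf bookkeeping, and your observation about the off-by-one in the final inequality ($\tfrac23(|E|+1)$ versus $\tfrac23|E|$) is honest: the paper itself glosses over this with ``roughly one third of all edges,'' since only the constant-fraction reduction matters for the $O(\log n)$ bound downstream.
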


\begin{proof}
By Lemma~\ref{lem:3split}, there exists a contiguous subtree $T^*$ of $T(P)$ rooted at a centroid $c$ that contains at least one third of the vertices and whose weight, excluding the weight of $c$, is less than $\frac{W}{2}$, where $W = \area(D_1)+\area(D_2)$.

The first ReCom move of the exchange sequence fills all polygons corresponding to the nodes in $T^*$ with green, since its area (excluding the weight of $c$) is less than $\area(D_2)=\max_{i\in \{1,2,3\}}\area(D_i)\geq{W}/2$. 
After this move, the red district is simply connected by Lemma~\ref{lem:gravity_on_disk} and green is connected because it occupies $T(P)$ minus the red district and contains the boundary of $T(P)$.
Note that each edge in $T^*$ is now a single green corridor.

The second ReCom move eliminates edges of $T^*$. We now show that the green district remains connected.
Although this move eliminates all green corridors at once, we can  model the operation as successively removing a pair of corridors (one green and one blue), and show that green remains connected throughout. While $T^*$ has a leaf $\ell$, remove the corridor incident to it.
This creates precisely one cycle in $T(D_3)$ which we break by removing a blue corridor.
The blue cycle ``coats'' $\ell$, or else it would not be a leaf. 
Then the removed blue corridor is on the boundary of $\ell$.
By construction, the blue district shares a boundary only with the green district.
Thus $\ell$ merges into the boundary of some other node of $T(P)$.
Since green contains the boundary of $T(P)$, green stays connected.


Note that the first two moves do not necessarily maintain the ordering property. The last gravity move between $D_1$ and $D_2$ restores the ordering property, as the region containing the top edge of the unit square can be a part of $T^*$ and become green. 
A gravity move restores the ordering property, since $T(P)$ contains the top edge of the unit square and due to Observation~\ref{lem:gravity_on_disk}. Since every node in $T(P')$ already contains both green and red, this step does not introduce new corridors, although it might extend or shorten some, thus it does not add (compressed) complexity to the map.

Since $T^*$ contains at least one third of the vertices it also contains roughly one third of all edges. Consequently, $T(P')$ will have at most $\frac23$ of the edges of $T(P)$.
\end{proof}

\begin{figure}[tbh]
	\centering
    \begin{minipage}{.48\textwidth}
          \centering
          \includegraphics[width = 0.8 \textwidth]{mapcomplex.pdf}
          \caption*{(a)}
    \end{minipage}
    \begin{minipage}{.48\textwidth}
          \centering
          \includegraphics[width = 0.8 \textwidth]{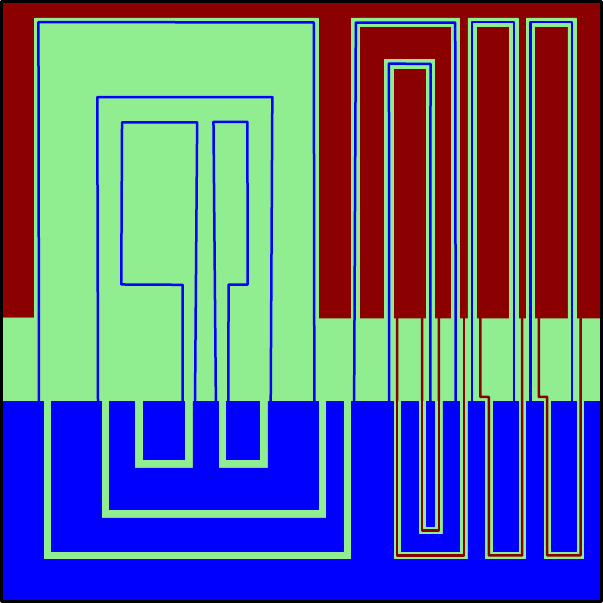}
          \caption*{(b)}
    \end{minipage}
    \begin{minipage}{.48\textwidth}
          \centering
          \includegraphics[width = 0.8 \textwidth]{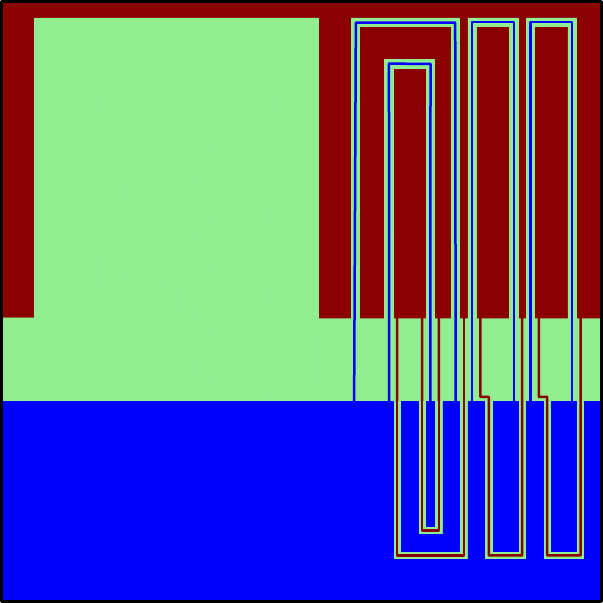}
          \caption*{(c)}
    \end{minipage}
    \begin{minipage}{.48\textwidth}
          \centering
          \includegraphics[width = 0.8 \textwidth]{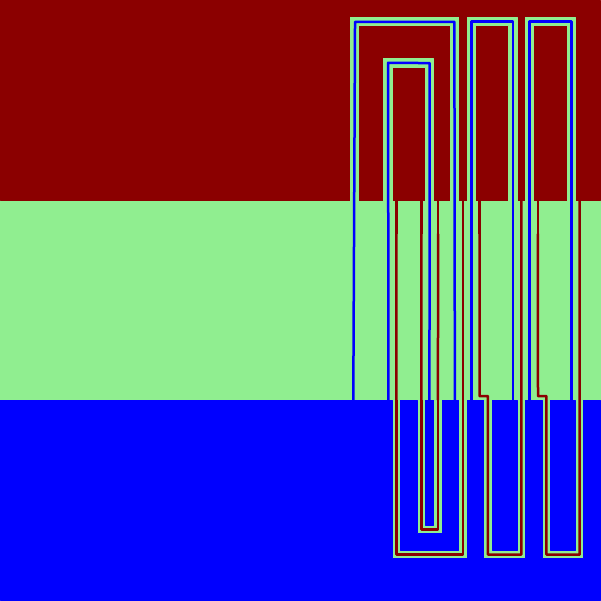}
          \caption*{(d)}
    \end{minipage}\textbf{}
    \caption{An exchange sequence.}
    \label{fig:exchange_ex}
\end{figure}

\subsection{Full Reconfiguration Algorithm}
Overall, the  algorithm for a 3-district map $\map([0,1]^2)=\{D_1,D_2,D_3\}$ works as follows: after a preprocessing phase of $O(1)$ ReCom moves, apply the sequence of three moves $\gravity(D_1, D_2)$, $\gravity(D_2, D_3)$, and $\gravity(D_1, D_2)$; compute the corridor graph $T(P)$ for $P=D_1\cup D_2$; while $T(P)$ has two or more nodes, apply an exchange sequence. Once $T(P)$ has one node, $\gravity(D_1, D_2)$ yields the canonical configuration.

\begin{restatable}{theorem}{thmthree} \label{thm:square3log}
Given a 3-district map $\map([0,1]^2)=\{D_1,D_2,D_3\}$ of complexity $n$, there is a sequence of $O(\log n)$ ReCom moves that transforms it into a canonical map. Furthermore, the districts in each intermediate map are polygons with $O(n)$ vertices and at most one hole.
\end{restatable}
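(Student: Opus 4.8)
The plan is to assemble the ingredients of Sections~\ref{ssec:preprocess}--\ref{ssec:exchange} into one algorithm, bound the number of ReCom moves it uses, and then control the complexity of the intermediate maps via the weak (compressed) representation.

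First I would run the preprocessing of Section~\ref{ssec:preprocess}: it uses $O(1)$ ReCom moves, keeps the complexity $O(n)$ (the corridors added to make each district touch the left and right sides follow existing boundaries), and outputs a map with the ordering property whose three districts are simple polygons. Then apply the three gravity moves $\gravity(D_1,D_2)$, $\gravity(D_2,D_3)$, $\gravity(D_1,D_2)$; by Lemma~\ref{lem:3gravity} the resulting map $M_0$ satisfies the ordering property and each district is confined, up to corridors, to its canonical rectangle, so $M_0$ also satisfies the two-waterline hypothesis required by the exchange sequence (corridors have zero area). Next, iterate the exchange sequence of Section~\ref{ssec:exchange}: build the corridor graph $T(P)$ for $P=D_1\cup D_2$, and while $T(P)$ has two or more nodes apply an exchange sequence. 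By Lemma~\ref{lem:exchange} each exchange sequence is $3$ ReCom moves, preserves the ordering property, and shrinks $|E(T(P))|$ by a factor of at least $\tfrac23$; one checks in addition that it preserves the two-waterline hypothesis with the same lines, since across the three moves $P$ is only redistributed between $D_1$ and $D_2$ or loses corridors running through $Q_3$, so $P$ remains equal to $Q_1\cup Q_2$ up to a null set and $D_3$ to $Q_3$. When $T(P)$ has a single node, $P$ is a simple weak polygon with no corridor through $Q_3$, and one final $\gravity(D_1,D_2)$ yields the canonical axis-aligned rectangles.

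Counting moves is then routine: Lemma~\ref{lem:gravitycomplexity} shows each gravity move multiplies the vertex count by a constant, so $M_0$ has complexity $O(n)$ and $|E(T(P))|=O(n)$ at the start of the iteration; since each exchange sequence multiplies $|E(T(P))|$ by at most $\tfrac23$, after $O(\log n)$ exchange sequences the tree collapses to one node, giving a total of $O(1)+3+3\cdot O(\log n)+1=O(\log n)$ ReCom moves.

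The delicate part is bounding the complexity of the intermediate maps, since naively compounding the constant-factor blow-up of a gravity move over $\Theta(\log n)$ exchange sequences would only give $n^{O(1)}$. Here I would argue entirely in the weak representation and show that an exchange sequence does not increase the compressed complexity: the green-fill and the corridor-removal steps only merge nodes and delete corridors of $T(P)$, while the two internal gravity-like moves act on pieces that are already two-colored, so (as in the proof of Lemma~\ref{lem:exchange}) they merely lengthen or shorten existing corridors and create no new vertices. Thus the compressed complexity stays $O(n)$ throughout, and applying Proposition~\ref{cor:perturbation} thickens every intermediate weak map into one whose districts are genuine simple polygons with $O(n)$ vertices; the ``at most one hole'' statement is preserved because the only nested configuration that arises is $D_1\subset D_2$, never anything deeper. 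The main obstacle is precisely this complexity argument---verifying that the waterline equipartition of $P\setminus Q$ in the first move of the exchange sequence introduces no uncontrolled vertices, which amounts to charging each new waterline crossing to an already-present vertex of the (small) boundary of the relevant piece.
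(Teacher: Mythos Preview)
Your proposal is correct and follows essentially the same approach as the paper: preprocess, apply three gravity moves, then iterate exchange sequences until $T(P)$ collapses, with the complexity bound obtained by arguing that each exchange sequence does not increase (and in fact decreases) the compressed vertex count, followed by an application of Proposition~\ref{cor:perturbation}. Your explicit verification that the two-waterline hypothesis persists across exchange sequences and your identification of the first exchange step as the only place a hole can arise (with $D_1$ nested in $D_2$) match the paper's treatment precisely.
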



\begin{proof}
    After preprocessing, three \gravity\ moves bring the districts into canonical form with the possible exception of corridors. Each exchange sequence eliminates a constant fraction of corridors by Lemma~\ref{lem:exchange}. After $O(\log n)$ ReCom moves we then obtain the canonical configuration.
    
    The algorithm described above produces a sequence of ReCom moves, where the districts in intermediate maps are weak polygons. 
    By Proposition~\ref{cor:perturbation}, these maps can successively be perturbed into polygons. This completes the proof of the first claim.
    
    It remains to show that the districts in each intermediate map are polygons with $O(n)$ vertices and at most one hole. 
    By construction, the only possible hole appears in the green district after the first step of the \exchange\ sequence. 
    Each of the $O(1)$ ReCom moves in preprocessing adds a corridor with $O(n)$ vertices, and so each district has $O(n)$ vertices at the end of preprocessing.
    By Lemma~\ref{lem:gravitycomplexity}, each gravity move increases the number of vertices by a constant factor. After three gravity moves, each district still has $O(n)$ vertices.
    
    The algorithm applies $O(\log n)$ exchange sequences. At the end of every exchange sequence, the districts are in canonical form with the exception of corridors. Each exchange sequence removes some of the corridors, and does not create new corridors. 
    It should be clear that the complexity of the blue district only decreases since corridors are only eliminated and never created.
    Note that intermediate ReCom moves within an \exchange\ sequence (step 1) may add $O(n)$ new vertices to the red district.
    In an exchange sequence, the 1st ReCom move is a $\gravity$ move w.r.t.\ a sub-polygon, and creates only $O(n)$ new vertices by Lemma~\ref{lem:gravitycomplexity}. The 2nd ReCom move eliminates corridors (and the corresponding vertices); and the 3rd ReCom move eliminates any other vertices created in the 1st move of the sequence. Thus, the complexities of the red and green districts decrease after one \exchange\ sequence.
    
    Finally, when we perturb all weak polygons into polygons in the entire ReCom sequence, the number of vertices remains $O(n)$ for each district by Proposition~\ref{cor:perturbation}.
\end{proof}

\section{Reconfiguration for \texorpdfstring{$k$}{k} Districts}
\label{sec:general}

We generalize our algorithm to an arbitrary number of districts, using recursion. 
For any $3\leq k\leq n$, an \emph{instance} $I = (\map(\mathcal{R}),\map'(\mathcal{R}),\delta)$ of the problem consists of two area-compatible $k$-district maps $\map(\mathcal{R})=\{D_1,\ldots,D_k\}$ and $\map'(\mathcal{R})=\{D'_1,\ldots , D'_k\}$, where $\mathcal{R}$ is a weak polygon with at most one hole, and $\delta$ is a density function.
We define the complexity of $I$ (denoted $|I|$) as the pair $(k,n)$, where $n$ is the maximum over the compressed complexities of $\map$ and $\map'$, and the complexities of all districts $D_i$ and $D_i'$ ($i\in\{1,\ldots,k\}$). 
The overall recursive strategy goes as follows:
First construct a piecewise linear retraction from a punctured unit square $\mathcal{S}$ to $\mathcal{R}$, and extend $\map$ and $\map'$ to two maps on $\mathcal{S}$. 
If $k\geq 4$, then group the $k$ districts into three \emph{superdistricts}, each containing $\lfloor k/3\rfloor$ or $\lceil k/3\rceil$ districts;
and run the algorithm in Section~\ref{sec:three} on the superdistricts.
Note that each ReCom move on a pair of superdistricts is an instance of our problem with fewer districts, which can be solved recursively. The retraction then transforms the ReCom sequence on $\mathcal{S}$ to a ReCom sequence on $\mathcal{R}$. 
We analyze the recursion and give a bound on the number of ReCom moves.

\subsection{Reduction from $\mathcal{R}$ to a Square}
\label{ssec:reduction}

\subparagraph{Filling Holes.}
We now describe how to convert a general instance $I = (\map(\mathcal{R}),\map'(\mathcal{R}),\delta)$ into an instance $I = (\map(\mathcal{S}),\map'(\mathcal{S}),\delta')$ between maps whose domain is a punctured square $\mathcal{S}$. By applying an affine transformation, we may assume that the bounding box of $\mathcal{R}$ is a unit square $[0,1]^2$. We define $\mathcal{S}$ to be $[0,1]^2$ with an interior point $p_H$ removed if $\mathcal{R}$ has a hole. We define a new density function $\delta'$ on $\mathcal{S}$ such that $\delta'(x)=\delta(x)$ for all $x\in \mathcal{R}$ and $\delta'(x)=0$ otherwise. 

We would like to construct $\map(\mathcal{S})$ and $\map'(\mathcal{S})$ simply by giving each connected component of $\mathcal{S}\setminus \mathcal{R}$ to an adjacent district in each map. 
Let $H$ be the weakly simple polygon $\mathcal{S}\setminus \mathcal{R}$  (e.g., a hole of of $\mathcal{R}$, or a region between the outer boundaries of $\mathcal{R}$ and $\mathcal{S}$).
If we fill $H$ with a single district in an arbitrary map $\map(\mathcal{R})$, we may create new adjacencies not originally in $\map(\mathcal{R})$. This happens when $H$ is adjacent to three or more districts. (For example, the magenta district in Fig.~\ref{fig:homotopy}(a) is not adjacent to the red district, but the pair becomes adjacent by filling $H$ with the magenta district as in Fig.~\ref{fig:homotopy}(c).) For this reason, we perform $O(k)$ ReCom moves on $\map(\mathcal{R})$ and $\map'(\mathcal{R}$ to ensure that $H$ is bounded by only one district. 

Chose a district $D_H$ that borders $H$ and successively add corridors on the boundary of $H$ until  $D_H$ becomes the sole district adjacent to $H$. This takes $O(k)$ ReCom moves.

\begin{figure}[htbp]
    \centering
    \includegraphics[width=\textwidth]{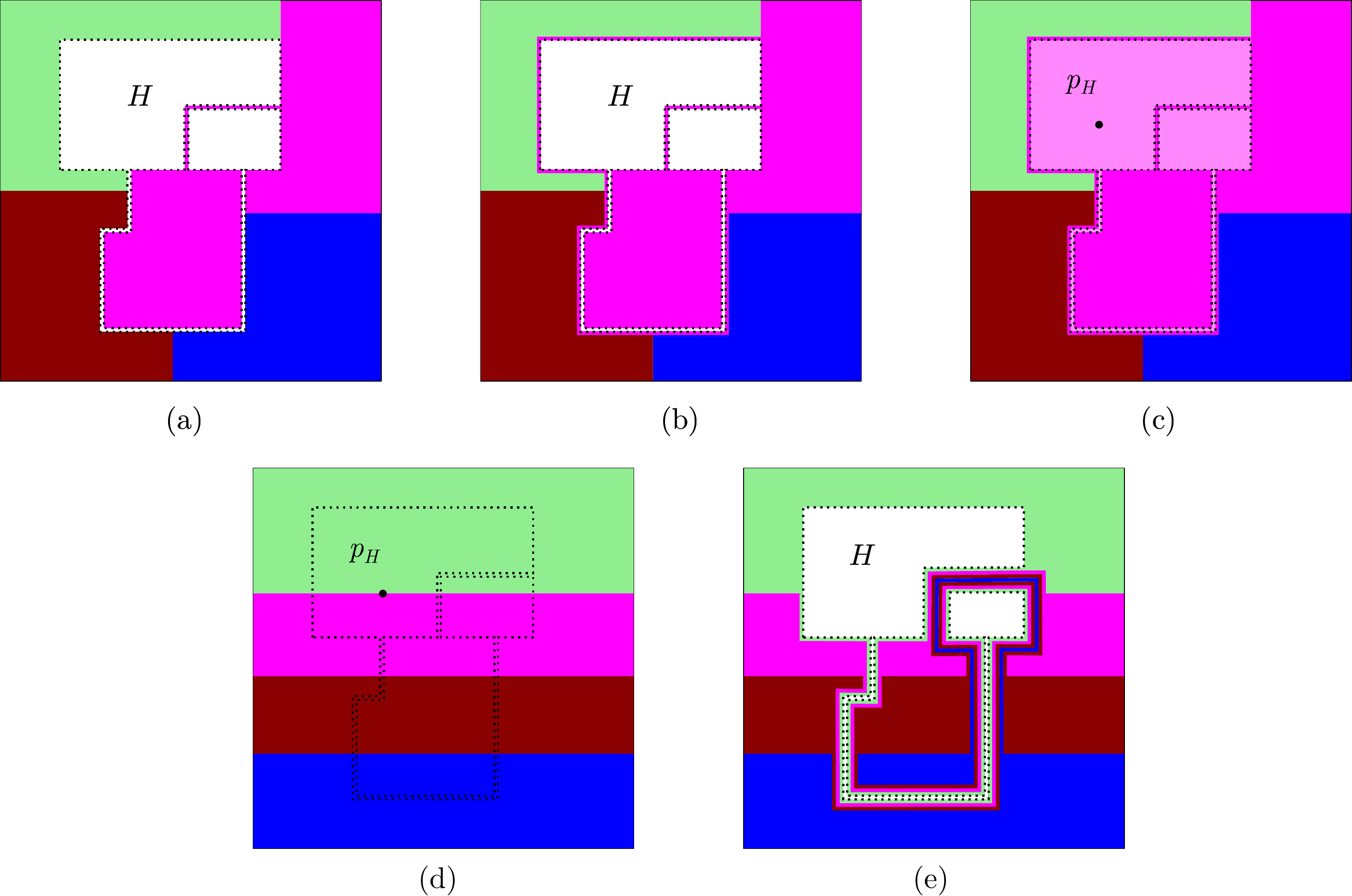}
    \caption{(Top) Transformation of a map in $\mathcal{R}$ (a) into a map in the punctured square $\mathcal{S}$ (c). We first perform $O(k)$ ReCom moves to ``coat'' a weakly simple hole $H$ with a single district (b).
    The retract $\mathcal{H^*}$ maps a weak representation of a map in $\mathcal{S}$ into a weak representation of a map in $\mathcal{R}$.
    }
    \label{fig:homotopy}
\end{figure}

\subparagraph{Retraction.}
Assume that we already have a solution for an instance $I = (\map(\mathcal{S}),\map'(\mathcal{S}),\delta')$ on $\mathcal{S}$, which is a ReCom sequence from $\map(\mathcal{S})$ to $\map'(\mathcal{S})$. 
We show how to transform it into a ReCom sequence from $\map(\mathcal{R})$ to $\map'(\mathcal{R})$ with the same number of ReCom moves. The specific transformation has no impact on the number of ReCom moves, but may increase the complexity of intermediate maps.

We need a continuous function $\mathcal{H^*}$ that transforms a map $\map_i(\mathcal{S})$ into a map $\map_i(\mathcal{R})$ such that $\mathcal{H^*}(\map(\mathcal{S}))=\map(\mathcal{R})$, $\mathcal{H^*}(\map'(\mathcal{S}))=\map'(\mathcal{R})$, and is the identity function on $\mathcal{R}$. Intuitively, $\mathcal{H^*}$ ``compresses'' the regions of $\mathcal{S}\setminus \mathcal{R}$ into $\partial\mathcal{R}$ (it retracts a hole $H$ onto $\partial H$).

We define $\mathcal{P}_\eps(\mathcal{S})$ (resp., $\mathcal{P}_\eps(\mathcal{R})$) as a continuous function that produces a polygon that is an $\eps$-perturbation of $\mathcal{S}$ (resp., $\mathcal{R}$). Specifically, let $\mathcal{P}_\eps(\mathcal{S})$ replace a puncture in $\mathcal{S}$ by a small square.
Let $\partial\mathcal{R}_\eps$ denote the $\eps$-neighborhood of the boundary of $\mathcal{R}$.
Define a homeomorphism $\mathcal{H}$ from $\mathcal{P}_\eps(\mathcal{R})$ to $\mathcal{P}_\eps(\mathcal{S})$ that is the identity when restricted to $\mathcal{R}\setminus \partial\mathcal{R}_\eps$. 
We define $\mathcal{P}_\eps$ such that the function $\mathcal{H^*}$ obtained by $\mathcal{P}_\eps(\mathcal{S})$ composed with $\mathcal{H}_\eps^{-1}$ and $\mathcal{P}_\eps(\mathcal{R})^{-1}$ is a retract, i.e., it is the identity function on  $\mathcal{R}$.

Notice that $\mathcal{H^*}$ defines a map from weak representations of maps in $\mathcal{S}$ to weak representations of maps in $\mathcal{R}$ as desired, where the order of overlapping corridors is given by the homeomorphism $\mathcal{H}$.
Since the boundary of every hole $H$ is ``coated'' by the same district $D_H$ in $\map(\mathcal{R})$ and $H$ is contained in $D_H$ in $\map(\mathcal{S})$, we have that $\mathcal{H^*}(\map(\mathcal{S}))=\map(\mathcal{R})$, as desired, for every homeomorphism $\mathcal{H}$ that meets the constraints described above.
The same applies for $\mathcal{H^*}(\map'(\mathcal{S}))=\map'(\mathcal{R})$.
However, note that in general maps, $\mathcal{H^*}$ might transform a line segment in $\mathcal{S}\setminus\mathcal{R}$ to an arbitrarily long path in $\partial\mathcal{R}$.
Nonetheless, we can describe a ``well-behaved'' piecewise-linear homeomorphism as follows. We consider a piecewise linear homeomorphism from the perturbation of $H$ given by $\mathcal{P}_\eps(\mathcal{R})$ to a unit square such that the center of the square corresponds to the puncture $p_H$. 
We can do this by subdividing the boundary of the square so that it has the same number of vertices as the perturbation of $H$ and triangulating both with compatible triangulations.
There is an obvious piecewise linear retraction of the punctured square $[0,1]^2 \setminus (\frac12,\frac12)$ to the boundary. Translating this retraction via the above homeomorphism gives a retraction of $H$ onto its boundary. The image of any line segment under this retraction has complexity $O(h)\leq O(n)$, where $h$ is the number of vertices in the perturbation of $H$.

\subsection{Reconfiguration for \texorpdfstring{$k$}{k} Districts in a Square}
\label{sec:kInSquare}

\subparagraph{Preprocessing: Ordering Property.}


For ease of notation, from now on we assume that
we are given a $k$-district map $\map(\mathcal{R})=\{D_1,\ldots,D_k\}$ with density $\delta$ of compressed complexity $n$, and that  $\mathcal{R}$ is a punctured square. 
Let $s_1$ and $s_2$ be the left and right sides of $\mathcal{R}$.
Analogously to Section~\ref{sec:three}, we successively append corridors to the districts until every district is simply connected, and intersects both $s_1$ and $s_2$. 
This can be accomplished with $O(k)$ ReCom moves. 
Specifically, while there is a district $D_i$ that is a polygon with holes, let $v$ be a vertex of the hole on its convex hull, and $u$ be a vertex not on the hole that is visible from $v$.
Add $uv$ to the boundary of $D_i$ reducing its total number of holes.
Now all districts are simply connected. While there is a district that is not adjacent to $s_1$ (resp., $s_2$), let $D_i$ be such a district adjacent to some district $D_j$ which is adjacent to $s_1$ ($s_2$); then we recombine $D_i$ and $D_j$ and append to $D_i$ a shortest path to $s_1$ ($s_2$) along the boundary of $D_j$.
Recall that new corridors do not introduce new vertices (they might increase the multiplicity of a vertex), hence the compressed complexity of the map remains $n$. 

After preprocessing, every district has a connected intersection with both $s_1$ and $s_2$; and the counterclockwise order of these intersections along the boundary of $\mathcal{R}$ is the reverse of each other, or else two districts would cross. Therefore the districts can be ordered along $s_1$; assume w.l.o.g.\ that they are labeled $D_1,\ldots , D_k$ in this order.

\subparagraph{Main Algorithm.} 
After preprocessing, the domain $\mathcal{R}$ is a punctured square, and $\map$ and $\map'$ satisfy the ordering property.
In the base case $k=3$, Theorem~\ref{thm:square3log} yields a sequence of $O(\log n)$ ReCom moves that reconfigure
$\map(\mathcal{R})$ to $\map'(\mathcal{R})$.

If $k\geq 4$, then we partition $k$ into a sum of three integers, $k=k_1+k_2+k_3$, where  $k_1,k_2,k_3\in \{\lfloor k/3\rfloor, \lceil k/3\rceil\}$.
Create two 3-district maps, $\tilde{\map}(\mathcal{R})=\{\mathcal{S}_1,\mathcal{S}_2,\mathcal{S}_3\}$ and $\tilde{\map}'(\mathcal{R})=\{\mathcal{S}'_1,\mathcal{S}'_2,\mathcal{S}'_3\}$, where 
$\mathcal{S}_i$ is the union of $k_i$ consecutive districts in $\map(\mathcal{R})$, and 
$\mathcal\mathcal{S}'_i$ is the union of $k_i$ consecutive districts in $\map'(\mathcal{R})$, for $i=1,2,3$.
The union of consecutive districts is simply connected, due to the ordering property, and so $\tilde{\map}(\mathcal{R})$ and $\tilde{\map}'(\mathcal{R})$ are 3-district maps on $\mathcal{R}$.
The maps $\tilde{\map}(\mathcal{R})$ and $\tilde{\map}'(\mathcal{R})$ are area-compatible  with respect to $\delta$, because so are $\map(\mathcal{R})$ and $\map'(\mathcal{R})$. Note also that the compressed complexity of $\tilde{\map}(\mathcal{R})$ and $\tilde{\map}'(\mathcal{R})$ is $O(n)$ since the 
image graph of $\tilde{\map}(\mathcal{R})$ (resp., $\tilde{\map}'(\mathcal{R})$) is a subgraph of the image graph of ${\map}(\mathcal{R})$ (resp., ${\map}'(\mathcal{R})$).
Also note that the complexity of each district $\mathcal{S}_i$ and $\mathcal{S}'_i$, for $i\in\{1,2,3\}$, is $O(n)$ because of the ordering property and the fact that the complexity of the districts in $\map$ and $\map'$ is $O(n)$.

By Theorem~\ref{thm:square3log}, there is a sequence of $O(\log n)$ ReCom moves that reconfigures $\tilde{\map}$ to $\tilde{\map}'$, and all intermediate 3-district maps have complexity $O(n)$. 
The intermediate and final steps are all some 3-district map $\tilde{\map}^{*}(\mathcal{R})=\{\mathcal{S}_1^{*},\mathcal{S}_2^{*},\mathcal{S}_3^{*}\}$ with three superdistricts, which are weak polygons. 
Obtain a $k$-district map ${\map}^{*}$ by subdividing each superdistrict $\mathcal{S}_i^*$ into $k_i$ districts with areas $a_1,\ldots , a_{k_i}$ using waterlines and corridors as follows. 
We describe the construction using $\gravity$ moves, but note that these are not ReCom moves and just a description of each target map ${\map}^{*}$.
We subdivide each superdistrict $\mathcal{S}_i^*=\mathcal{S}^-_i\cup \mathcal{S}^+_i$ with $\area(\mathcal{S}_i^-)=a_1$ and $\area(\mathcal{S}_i^+)=a_2+\ldots +a_{k_i}$ using $\gravity$, and recurse on $\mathcal{S}_i^+$ if necessary.
By Lemma~\ref{lem:gravitycomplexity}, each district has complexity $O(n)$.
 
For each ReCom move in a 3-district map between $\tilde{\map}$ and $\tilde{\map}'$, we construct an instance $I(\ell,m)$ that we solve recursively. Suppose the move recombines superdistricts $\mathcal{S}_i\cup \mathcal{S}_j$ into $\mathcal{S}'_i\cup \mathcal{S}'_j$ within the unit square. 
Let $Q=\mathcal{S}_i\cup \mathcal{S}_j$, which is a connected polygon with $O(n)$ vertices and at most one hole by Theorem~\ref{thm:square3log}. 
The instance $I(\ell,m)$ consists of the two $\ell$-district maps in $Q$ obtained from $\mathcal{S}_i$ and $\mathcal{S}_j$, and $\mathcal{S}_i'$ and $\mathcal{S}_j'$ as explained above. 
Then, $\ell$ is the total number of districts in the two superdistricts, where $2\lfloor k/3\rfloor\leq \ell\leq k-\lfloor k/3\rfloor<k$ by construction, and  $m$ is the maximum compressed complexity between the two maps which is $O(n)$. 



\subparagraph{Analysis of the Number of Moves.} 
We analyse the recurrence of the algorithm 
to obtain the following theorem.

\begin{restatable}{theorem}{thmgeneral}\label{thm:kdistrict}
Given any two area-compatible polygonal $k$-district maps of  complexity at most $n$ in a simply connected domain, $\exp(O(\log k \log \log n))=(\log n)^{O(\log k)}=k^{O(\log\log n)}$ ReCom moves are sufficient to transform one into the other.
Furthermore, the complexity of each map in intermediate steps is $n^{k^{O(1)}}$.
\end{restatable}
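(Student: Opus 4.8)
The plan is to set up and solve the recurrence generated by the recursive algorithm described in Section~\ref{sec:general}. Let $f(k,n)$ denote the number of ReCom moves used by the algorithm on an instance with $k$ districts and (compressed) complexity at most $n$, and let $g(k,n)$ be an upper bound on the complexity of any intermediate map. The base case is $k=3$: by Theorem~\ref{thm:square3log} we have $f(3,n)=O(\log n)$ and $g(3,n)=O(n)$. For $k\geq 4$, the algorithm first spends $O(k)$ moves on preprocessing (filling holes, making every district simply connected and touching $s_1,s_2$) — this does not increase the compressed complexity, though it can make the explicit district complexities larger, and after the reduction of Section~\ref{ssec:reduction} the domain is a punctured square at the cost of $O(k)$ more moves and a one-time complexity blow-up by a factor $O(n)$ from the retraction. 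Then it runs the 3-district algorithm on the three superdistricts, which produces $O(\log n)$ ReCom moves on superdistrict pairs; each such move is itself an instance with $\ell<k$ districts (indeed $\ell\leq k-\lfloor k/3\rfloor$) and compressed complexity $O(g(k,n))$ by Theorem~\ref{thm:square3log}. Hence
\[
    f(k,n)\;\leq\; O(k)\;+\;O(\log n)\cdot f\!\left(k-\lfloor k/3\rfloor,\;c\cdot g(k,n)\right)
\]
for a suitable constant $c$, and we need to track $g$ in parallel.

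For the move count, the recursion depth is $O(\log k)$ since $k$ shrinks by a constant factor ($k\mapsto k-\lfloor k/3\rfloor \leq \tfrac23 k + O(1)$) at each level. The key point is that the complexity argument of $f$ at a node only matters through its logarithm: each level multiplies the number of moves by a factor $O(\log(\text{complexity at that level}))$, and even the crude bound that complexity stays $n^{k^{O(1)}}$ (see below) gives $\log(\text{complexity}) = O(k^{O(1)}\log n)$, so the per-level factor is $O(\log\log n + \log k)$. Multiplying $O(\log\log n + \log k)$ over the $O(\log k)$ levels, together with the additive $O(k)$ terms, yields $f(k,n) = \exp\!\big(O(\log k\,(\log\log n + \log\log k))\big) = \exp(O(\log k\log\log n))$, which one rewrites as $(\log n)^{O(\log k)}=k^{O(\log\log n)}$ exactly as in the statement. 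The additive preprocessing terms $O(k)$ at each level sum to $O(k\log k)$, which is absorbed.

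For the complexity of intermediate maps, I would show $g(k,n)\leq n^{k^{O(1)}}$ by induction on $k$. At a node with $k$ districts, the 3-district subroutine (Theorem~\ref{thm:square3log}) and the superdistrict subdivision (Lemma~\ref{lem:gravitycomplexity}) keep complexity within a constant factor of the input complexity of that subroutine, but each recursive call feeds into the next with the complexity bound $g$ of the child, and the retraction step at each level multiplies by an $O(n)$ factor. Composing these over $O(\log k)$ levels, where at level $i$ the number of districts is $\Theta(k\cdot(2/3)^i)$ and each level contributes a multiplicative polynomial-in-its-complexity blow-up, gives a tower that resolves to $n$ raised to a power that is polynomial in $k$ — concretely, if $g$ satisfies $g(k,n)\leq (c\,g(k',n))^{O(1)}$ with $k'<k$ over $O(\log k)$ levels starting from $g(3,n)=O(n)$, the exponent multiplies out to $k^{O(1)}$, hence $g(k,n)=n^{k^{O(1)}}$. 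Finally, Proposition~\ref{cor:perturbation} converts the entire weak-polygon ReCom sequence into one where all districts are genuine polygons without changing the number of moves and keeping complexities within a constant factor, completing the proof.

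I expect the main obstacle to be bookkeeping the interaction between the two recurrences: the move count $f$ is only mildly affected by the complexity (through a logarithm), but one must be careful that the complexity bound $g$ fed into the $\log$ does not itself blow up faster than $n^{k^{O(1)}}$, since a doubly-exponential $g$ would spoil the $\log\log n$ in the exponent of $f$. The delicate part is verifying that the retraction of Section~\ref{ssec:reduction} and the superdistrict subdivision each contribute only a polynomial (in current complexity) blow-up per level, so that $O(\log k)$ levels compose to $n^{k^{O(1)}}$ rather than a tower of height $\log k$. Once that is pinned down, plugging $\log g = O(k^{O(1)}\log n)$ into the $f$-recurrence and unrolling $O(\log k)$ levels is routine.
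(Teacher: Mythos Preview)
Your recurrence for the move count has a genuine gap. You write
\[
f(k,n)\;\leq\; O(k)\;+\;O(\log n)\cdot f\!\left(k-\lfloor k/3\rfloor,\;c\cdot g(k,n)\right),
\]
feeding the blown-up intermediate complexity $g(k,n)$ into the child call, and then you correctly compute $\log(\text{complexity})=O(k^{O(1)}\log n)$ but immediately (and incorrectly) conclude that ``the per-level factor is $O(\log\log n+\log k)$.'' That step drops a logarithm: the per-level branching \emph{is} $\log(\text{complexity})$, which under your bound is $k^{O(1)}\log n$, not its logarithm. Unrolling your recurrence over $O(\log k)$ levels then gives
\[
f(k,n)\;=\;\bigl(k^{O(1)}\log n\bigr)^{O(\log k)}\;=\;\exp\!\bigl(O((\log k)^2+\log k\,\log\log n)\bigr),
\]
which matches the theorem only when $\log k=O(\log\log n)$, i.e.\ $k$ is polylogarithmic in $n$; for general $k\leq n$ it is strictly weaker.

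The paper sidesteps this by decoupling the two recurrences. The crucial point (last paragraph of the paper's proof) is that the retraction $\mathcal{H}^*$ responsible for the $n\mapsto O(n^2)$ blow-up is applied \emph{only afterward}, to display each intermediate child map back in the parent domain; the child subproblem itself is posed and solved on the punctured square, where Theorem~\ref{thm:square3log} and Lemma~\ref{lem:gravitycomplexity} keep the complexity at $c\cdot n$. Hence the move-count recurrence is
\[
T(k,n)\;\leq\;O(\log n)\cdot T\!\left(\tfrac{2k}{3},\,c\cdot n\right)+O(k),
\]
with branching $O(\log(n\cdot c^{\log k}))=O(\log n)$ at every level, yielding the stated bound. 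The $n^{k^{O(1)}}$ complexity of intermediate maps is then a separate, a posteriori calculation (squaring over $O(\log k)$ levels) that never feeds back into $T$. Your outline for $g$ is fine; what you need to fix is the argument passed to $f$.
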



\begin{proof}
For $3\leq k\leq n$, let $T(k,n)$ denote the minimum number of ReCom moves that can transform any polygonal $k$-district map to any other with compatible areas, 
and the domain as well as each district is a polygon with at most $n$ vertices. 
From an instance $I(k,n)$, our algorithm makes $O(\log n)$ recursive calls of the form $I(\frac{2k}{3},c\cdot n)$, where $c$ is a constant.
Then,
\[
T(k,n)\leq O\left(T\left(\frac{2k}{3},c\cdot n\right)\cdot \log n\right)+O(k).
\]

The height of the recursion tree is $O(\log k)$ and the maximum branching factor is $O(\log(n\cdot c^{\log k}))=O(\log n + \log k)=O(\log n)$ since $k<n$.
Then $T(k,n)$ solves to $\exp(O(\log k \log \log n))= (\log n)^{O(\log k)} = k^{O(\log\log n)}$.
%
By Proposition~\ref{cor:perturbation}, we can convert the ReCom sequence on weak representation to a ReCom sequence of the same length in which all districts are simple polygons.

The analysis above prioritized the number of ReCom moves, rather than the complexity of the map at intermediate steps.
For instance, consider the recursion that simulates a ReCom move of superdistricts transforming a $k$-district map $M(\mathcal{R})$ into $M'(\mathcal{R})$.
Our algorithm recurses on a $\frac{2k}{3}$-district map of complexity $c\cdot n$ on a punctured square $\mathcal{S}$, which yields a sequence of $O(\log n)$ ReCom moves. However, to convert this into a sequence of ReCom moves on $k$-district maps one must apply the retraction $\mathcal{H}^{*}$ (cf.\ Section~\ref{ssec:reduction}) to every intermediate map, retracting a weakly simple polygon $H$ to its boundary $\partial H$. 
Since the complexity of $H$ could be $\Omega(n)$, $H$ might cross the same district $\Omega(n)$ times, which causes $\mathcal{H}^{*}$ to push the district into $\Omega(n)$ narrow corridors along the boundary of $H$. (In Fig.~\ref{fig:homotopy}(e), two corridors of the red district run parallel along the boundary of $H$.)
This might cause the complexity of the district to increase to $\Omega(n^2)$ in intermediate steps. The retraction $\mathcal{H}^{*}$, described in Section~\ref{ssec:reduction}, ensures that the complexity goes up from $n$ to at most $O(n^2)$ after applying $\mathcal{H}^{*}$ in each recursive step. Since the depth of the recursion tree is $O(\log k)$, the maximum complexity of all intermediate maps is $n^{2^{O(\log k)}}=n^{k^{O(1)}}$.
Note that this does not increase the number of ReCom moves since $M$ and $M'$ are determined in the parent level, and $\mathcal{H}^{*}$ is only applied to recover intermediate steps between $M$ and $M'$, which are obtained from lower complexity maps in the children level.
\end{proof}

\section{Lower Bound Construction}
\label{sec:lower}

\newcommand{\arc}{\gamma}
\renewcommand{\iota}{m}

This section shows that $\Omega(\log n)$ ReCom moves are sometimes necessary to transform a given map of complexity $n$ into canonical form, even for three districts of equal areas in $[0,1]^2$.

\subparagraph{Overview.}
We describe an initial map with 3 districts in a unit square, and show that after $k$ ReCom moves, each district contains an arc of a specific combinatorial pattern (defined below). These arcs are defined recursively, each iteration roughly tripling the complexity of the arcs.
Thus the total complexity of the arcs in iteration $\ell$ is $O(3^\ell)$.
The initial district map is a thickening of one of these arcs after $\iota\geq 6$ iterations. 
We show that if each district contains an arc from iteration $\ell$, then after a recombination they each contain an arc of iteration $\ell-4$.
In the canonical configuration, each district can only contain arcs of iteration $1$. Then, the number of recombinations from the initial district map to the canonical configuration is at least linear in the number of iterations.

\subparagraph{Construction.} 
We first describe the family of simple arcs $F_{\ell}$, for all $\ell \in \mathbb{N}_0$, mentioned in the overview.
All arcs in $F_{\ell}$ will start at the $\varepsilon$-neighborhood of the left side of the square and end at the $\varepsilon$-neighborhood of the right side, crossing the middle section $3^{\ell}$ times.
Each family $F_{\ell}$ can be described with a combinatorial pattern, namely, the order in which the arcs traverse the $3^{\ell}$ segments in the middle section of the square.
In the base case, $F_0$ is the set of arcs that cross the middle section only once.
Given an arc $\arc_{\ell} \in F_{\ell}$, we describe an arc $\arc_{\ell+1} \in F_{\ell+1}$.
We construct two arcs, $\arc_{\ell}^+,\arc_{\ell}^-\in F_{\ell}$, that closely follow $\arc_{\ell}$ on the left and on the right, respectively, and are mutually noncrossing. Then $\arc_{\ell+1}$ is the concatenation of $\arc_{\ell}$, the reverse of $\arc_{\ell}^-$, and $\arc_{\ell}^+$, where two consecutive arcs are connected by short arcs in the left and right $\varepsilon$-neighborhoods of the square; see Fig~\ref{fig:fractalbase}.
Let $F_{\ell+1}$ be the family of all arcs with the same combinatorial pattern as $\arc_{\ell+1}$.
The following observation follows by construction.

\begin{figure}[tbh]
	\centering
    \begin{minipage}{.27\textwidth}
          \centering
          \includegraphics[width = 0.95 \textwidth]{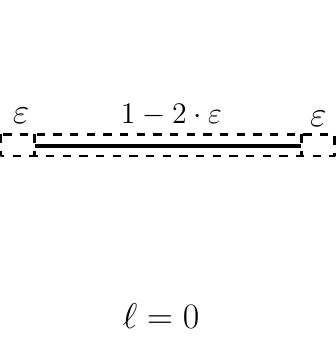}
    \end{minipage}
    \begin{minipage}{.32\textwidth}
          \centering
          \includegraphics[width = 0.94 \textwidth]{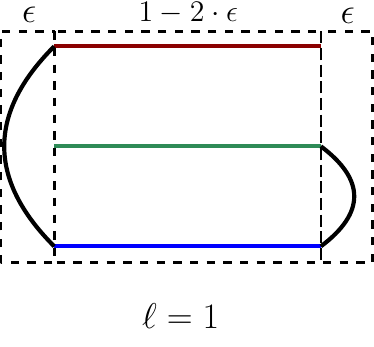}
    \end{minipage}
    \begin{minipage}{.35\textwidth}
          \centering
          \includegraphics[width = 0.95 \textwidth]{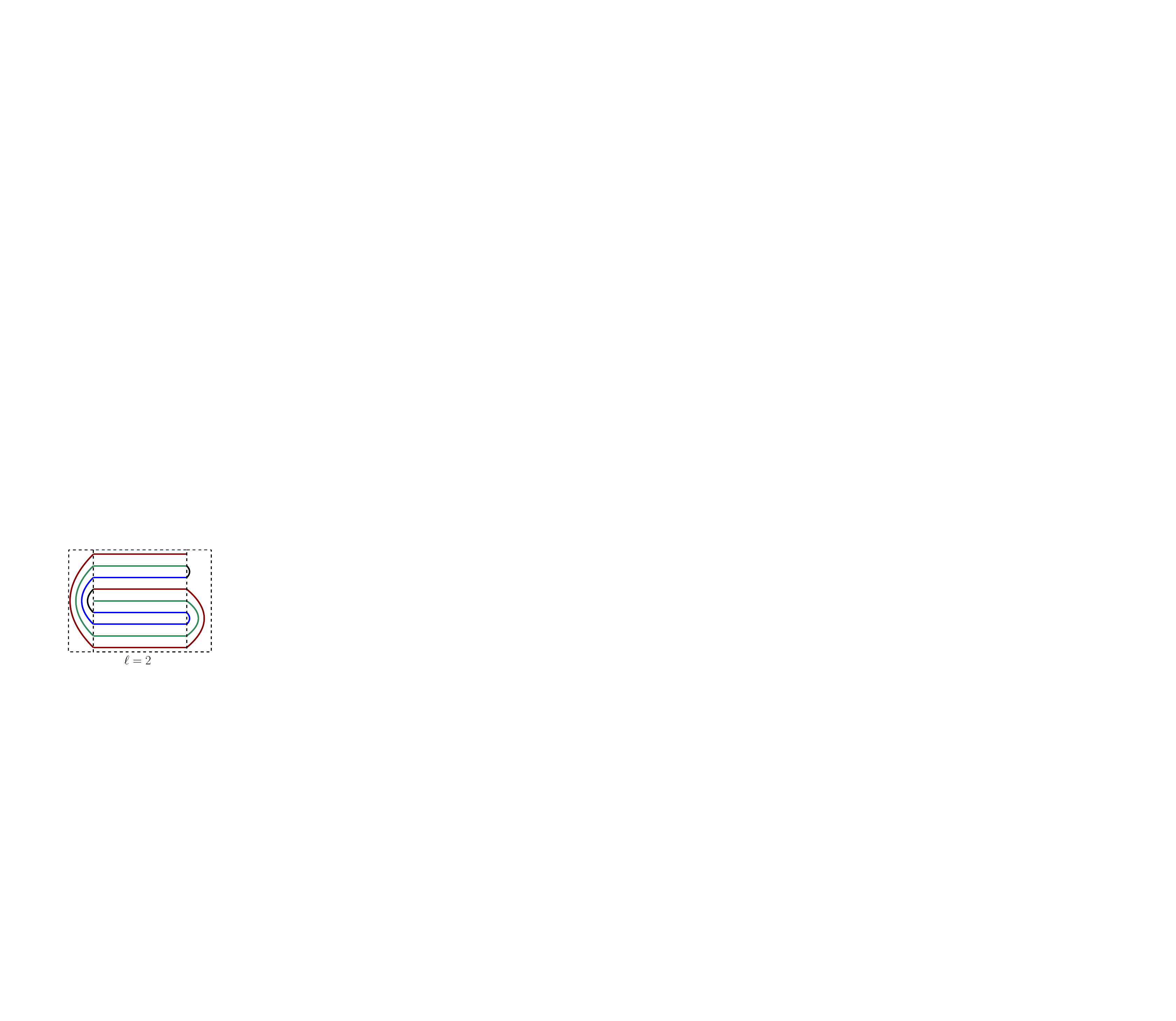}
    \end{minipage}
    \caption{The first three levels of the recursive construction for arcs in $F_\ell$, for $\ell \in \{ 0, 1, 2 \}$. Note that the blue, green, and red arcs for  $\ell = 2$ each resemble a copy of the entire stage for $\ell=1$.}
    \label{fig:fractalbase}
\end{figure}

\begin{observation}\label{obs:recursion}
For  $0\leq j\leq \ell$, we can partition every arc $\arc_\ell\in F_{\ell}$  into $3^j$ arcs in $F_{\ell-j}$.
\end{observation}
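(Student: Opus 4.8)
The plan is to prove Observation~\ref{obs:recursion} by induction on $j$, relying on the single-step recursive structure built into the definition of the families $F_\ell$. The base case $j=0$ is immediate: the arc $\arc_\ell$ itself is a partition of $\arc_\ell$ into $3^0=1$ arc of $F_{\ell-0}=F_\ell$.

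The engine of the induction is the one-step claim that for every $m\geq 1$, each arc of $F_m$ decomposes into three consecutive sub-arcs, each lying in $F_{m-1}$. This is exactly what ``follows by construction'': the model arc $\arc_m$ was obtained as the concatenation of $\arc_{m-1}$, the reverse of $\arc_{m-1}^-$, and $\arc_{m-1}^+$, where $\arc_{m-1},\arc_{m-1}^-,\arc_{m-1}^+\in F_{m-1}$, glued by short arcs inside the left and right $\varepsilon$-neighborhoods of the square. Since every member of $F_m$ is defined to have the same combinatorial pattern (the order in which the $3^m$ middle segments are crossed) as this model arc, it likewise splits at the two gluing points into three pieces. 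The only thing to verify is that each piece, once the adjacent short connector is absorbed into it, is a bona fide element of $F_{m-1}$: it must run from the left $\varepsilon$-neighborhood to the right $\varepsilon$-neighborhood, cross the middle section $3^{m-1}$ times, and realize the combinatorial pattern of $F_{m-1}$. This is precisely how the families were constructed (and the reverse of $\arc_{m-1}^-$, as an arc, is the same point set, hence still in $F_{m-1}$).

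For the inductive step, assume the statement holds for a fixed $j$, for all $\ell\geq j$, and let $\arc_\ell\in F_\ell$ with $\ell\geq j+1$. Apply the one-step claim to split $\arc_\ell$ into three arcs $\beta_1,\beta_2,\beta_3\in F_{\ell-1}$. Since $\ell-1\geq j$, the induction hypothesis applied at level $\ell-1$ partitions each $\beta_i$ into $3^j$ arcs of $F_{(\ell-1)-j}=F_{\ell-(j+1)}$. Concatenating these three partitions yields a partition of $\arc_\ell$ into $3\cdot 3^j=3^{j+1}$ arcs of $F_{\ell-(j+1)}$, which closes the induction.

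The only delicate point is purely bookkeeping: the short connecting arcs that sit in the $\varepsilon$-neighborhoods of the square must be allocated to one of the two adjacent sub-arcs so that every piece in the decomposition is genuinely an arc of the appropriate family $F_{\ell-j}$ (in particular, so that its endpoints lie in the prescribed $\varepsilon$-neighborhoods). This causes no trouble, since arcs in $F_{\ell-j}$ are by definition allowed to start and end anywhere within those neighborhoods. No estimates or case analysis beyond this allocation are required.
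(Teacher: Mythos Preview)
Your proof is correct and takes essentially the same approach as the paper, which simply asserts that the observation ``follows by construction'' without spelling out any details. Your induction on $j$, driven by the one-step decomposition of an arc in $F_m$ into three arcs in $F_{m-1}$ built directly into the definition, is exactly the argument the paper leaves implicit.
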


\subparagraph{Initial Map.}
We call each of the $3^j$ arcs described in Observation~\ref{obs:recursion} a \emph{$(\ell-j)$-anchor} of $\arc_{\ell}$.
The initial map is drawn relative to an arc $\arc_\iota \in F_\iota$ whose middle segments are equally spaced horizontal line segments in the unit square. 
We call $\arc_\iota$ the \emph{initial arc}.
The map is a ``thickening'' of $\arc_\iota$, defined as follows. 
The middle section is partitioned into $3^\iota$ rectangles of width $1-2\varepsilon$ and equal area.
Each of the three districts is created based on one of the three $(\iota-1)$-anchors of $\arc_\iota$.
We use the portions of the anchors of $\arc_\iota$ in the $\varepsilon$-neighborhoods of the vertical sides of the unit square to construct corridors that make each district connected.

In  the  remainder of this section, we show that after $i$ ReCom moves, each district contains an arc in $F_{\iota-4i-1}$ that also satisfies additional conditions used to ensure the induction.
We split the conditions into what we call a spacing invariant and an anchoring invariant.

\subparagraph{Spacing Invariant.} 
An arc $\arc_\ell\in F_\ell$ satisfies the \emph{spacing invariant} if there exists a partition of the middle section of the unit square into $3^\ell$ \textbf{uncertainty regions} and $3^\ell+1$ \textbf{gap regions}, which are axis-aligned rectangles of width $1-2\varepsilon$ such that the uncertainty (resp., gap) regions are pairwise interior disjoint, and they each have height at most $\alpha_\ell = 3^{-(\iota+\ell+1)/2}$ (resp., at most $g_\ell=2(3^{-\ell}-3^{-\iota})$).
Each middle segment in $\arc_\ell$ is contained in a unique uncertainty region, i.e., there is a one-to-one relation between middle segments and uncertainty regions.

\subparagraph{Anchoring Invariant.}
Recall that the initial map splits the middle section of the square into rectangles of equal height.
An arc $\arc_\ell\in F_\ell$, for $0\leq \ell<\iota$, satisfies the anchoring invariant if it satisfies 
the following conditions.
There exists an $\ell$-anchor $A_\ell$ of the initial arc $M_\iota$ such that 
each uncertainty region of $\arc_\ell$ contains a rectangle created by the thickening of a middle segment of $A_\ell$, and $M_\iota$ traverses the uncertainty regions in the same order as $A_\ell$ traverses the corresponding rectangles.

Let $S_\ell\subset F_\ell$ be the family of arcs in $F_\ell$ that satisfy the spacing and anchoring invariants.

\begin{restatable}{lemma}{lemrecursion}\label{lem:recursion}
For every $1\leq \ell\leq \iota$, every $\ell$-anchor of the initial arc $\arc_\iota$ is in $S_{\ell}$.
\end{restatable}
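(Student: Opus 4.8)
The plan is to prove the statement by induction on $\ell$. The base case $\ell=1$: each $1$-anchor of $\arc_\iota$ is, by construction of the initial map, a thickening of equally-spaced horizontal segments, and one checks directly that the spacing invariant holds with uncertainty regions of height $0$ (or at most $\alpha_1$) centered on each middle segment and gap regions accounting for the remaining space — the numeric bounds $\alpha_1 = 3^{-(\iota+2)/2}$ and $g_1 = 2(3^{-1}-3^{-\iota})$ are met because $3^1$ equally-spaced segments in a unit-height strip leave gaps of total height close to $1$, i.e.\ individual gaps of height roughly $3^{-1}$. The anchoring invariant for $\ell=1$ holds by taking $A_1$ to be the $1$-anchor itself: each uncertainty region trivially contains the rectangle created by thickening the corresponding middle segment, and the traversal orders coincide by definition.

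For the inductive step, I would assume every $(\ell-1)$-anchor of $\arc_\iota$ lies in $S_{\ell-1}$ and show every $\ell$-anchor lies in $S_\ell$. An $\ell$-anchor $\arc_\ell$ decomposes (Observation~\ref{obs:recursion} with $j=1$) into three consecutive $(\ell-1)$-anchors, which by construction are $\arc_{\ell-1}$, the reverse of $\arc_{\ell-1}^-$, and $\arc_{\ell-1}^+$, where the latter two closely follow $\arc_{\ell-1}$ on the right and left and are mutually noncrossing. By the inductive hypothesis $\arc_{\ell-1}\in S_{\ell-1}$, so its $3^{\ell-1}$ middle segments live in uncertainty regions of height $\le \alpha_{\ell-1}$ separated by gaps of height $\le g_{\ell-1}$. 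The three copies $\arc_{\ell-1}, \arc_{\ell-1}^-, \arc_{\ell-1}^+$ together use $3^\ell$ middle segments: in each original uncertainty region of $\arc_{\ell-1}$ there now sit three nearby segments (one from each copy). I would subdivide each old uncertainty region into three new uncertainty regions — one per copy — each of height $\le \alpha_\ell$, which is consistent because $3\alpha_\ell = 3\cdot 3^{-(\iota+\ell+1)/2} \le \alpha_{\ell-1} = 3^{-(\iota+\ell)/2}$ iff $3 \le 3^{1/2}\cdot\sqrt{3}=3$, so the arithmetic is exactly tight (the copies can be drawn arbitrarily close to $\arc_{\ell-1}$). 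The new gap regions: the old gaps of height $\le g_{\ell-1}$ remain, and the two new internal gaps inside each old uncertainty region have height $\le \alpha_{\ell-1}$, and one checks $g_\ell = 2(3^{-\ell}-3^{-\iota}) \ge \alpha_{\ell-1}$ and $g_\ell \ge g_{\ell-1}$ fail in the naive direction — so here I'd need to be careful and instead redistribute: the total middle-section height is $1$, we are placing $3^\ell$ uncertainty regions of total height $\le 3^\ell\alpha_\ell$, leaving $\ge 1 - 3^\ell\alpha_\ell$ for the $3^\ell+1$ gaps, giving average gap height $\ge (1-3^\ell\alpha_\ell)/(3^\ell+1)$, which I must verify is $\le g_\ell$ — plausibly this is where the constant $2$ in $g_\ell$ earns its keep. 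For the anchoring invariant, I take $A_\ell$ to be the $\ell$-anchor of $\arc_\iota$ corresponding to $\arc_\ell$; its three constituent $(\ell-1)$-anchors are handled by the inductive hypothesis, and the order in which $\arc_\iota$ traverses the new uncertainty regions matches $A_\ell$ because the recursive construction of $\arc_{\ell}$ from $\arc_{\ell-1}$ and the recursive decomposition of $A_\iota$ into anchors use the same combinatorial pattern.

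The main obstacle I expect is the geometric bookkeeping for the spacing invariant: verifying that the recursively defined heights $\alpha_\ell$ and $g_\ell$ are simultaneously consistent with (a) fitting three shrunken copies inside each old uncertainty region, (b) the new gaps not exceeding $g_\ell$, and (c) everything summing to at most the unit height of the middle section. The exponent $(\iota+\ell+1)/2$ in $\alpha_\ell$ is tuned precisely so that tripling the count of segments exactly halves the per-region budget on a $\sqrt 3$ scale, and the factor $2$ and the $-3^{-\iota}$ correction in $g_\ell$ are presumably what make the gap accounting close; I would work out these inequalities explicitly as the crux of the argument, while the anchoring invariant and the combinatorial-pattern matching should follow routinely from the parallel recursive definitions of $F_\ell$ and of the anchors.
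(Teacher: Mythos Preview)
Your upward induction on $\ell$ has a genuine gap in the spacing-invariant step. First, the arithmetic: $\alpha_{\ell-1}/\alpha_\ell = 3^{1/2}$, so $3\alpha_\ell = \sqrt{3}\,\alpha_{\ell-1} > \alpha_{\ell-1}$; your claim that $3\alpha_\ell \le \alpha_{\ell-1}$ is ``exactly tight'' comes from an algebra slip (the correct reduction is $3 \le 3^{1/2}$, not $3 \le 3^{1/2}\cdot\sqrt{3}$). More importantly, your remark that ``the copies can be drawn arbitrarily close to $\arc_{\ell-1}$'' reveals the underlying confusion: the lemma concerns anchors of the \emph{fixed} initial arc $\arc_\iota$, so you have no freedom in placing the sibling $(\ell-1)$-anchors---their middle segments sit at positions dictated by the equally spaced height-$3^{-\iota}$ partition. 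A triple of corresponding segments (one from each child) consists of three \emph{consecutive} segments of the $\ell$-anchor, and between consecutive $\ell$-anchor segments there are $\Theta(3^{\iota-\ell})$ segments of $\arc_\iota$, so the triple spans $\Theta(3^{-\ell})$, which for $\ell<\iota$ is far larger than $\alpha_{\ell-1}=3^{-(\iota+\ell)/2}$. Hence you cannot ``subdivide each old uncertainty region into three new ones,'' and the old gaps (of height up to $g_{\ell-1}>g_\ell$) cannot simply persist either. Your fallback average-gap computation yields a bound on the mean, not an upper bound on the maximum, so it does not rescue the argument. Structurally, both $\alpha_\ell$ and $g_\ell$ \emph{decrease} in $\ell$, so an upward induction would have to extract tighter bounds from looser ones---membership in $S_{\ell-1}$ alone cannot supply that.

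The paper sidesteps this by exploiting the concrete geometry of $\arc_\iota$ directly rather than inducting through $S_{\ell-1}$. It takes the uncertainty region of each segment of any $\ell$-anchor to be just its height-$3^{-\iota}$ rectangle in the initial map (trivially $\le\alpha_\ell$), and then bounds each gap by \emph{counting} how many $\arc_\iota$-segments lie between two vertically consecutive segments of the $\ell$-anchor. Passing from level $j$ to $j{+}1$, a gap acquires at most four new segments while the segments already present each triple; iterating from $\ell$ up to $\iota$ gives the geometric series $4(1+3+\cdots+3^{\iota-\ell-1})=2(3^{\iota-\ell}-1)$ segments, hence gap height at most $2(3^{\iota-\ell}-1)\cdot 3^{-\iota}=g_\ell$. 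This counting is exactly where the factor $2$ and the $-3^{-\iota}$ correction in $g_\ell$ come from.
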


\begin{proof}
The claim for $\ell=\iota$ is trivial since $\arc_\iota\in S_{\iota}$ by making each rectangle an uncertainty region and the gap regions being degenerate rectangles with $0$ height.
Using Observation~\ref{obs:recursion}, we can produce a hierarchical decomposition of $\arc_\iota$ into $3^{\iota-\ell}$ anchors in $F_{\ell}$.
This decomposition forms a ternary tree whose root is $\arc_\iota$ and whose leaves are the $\ell$-anchors.
Let $\arc_{\ell}$ be one of the leaves and $\arc_{\ell+1}$ be its parent arc.
(For example, we can see each of the three colored arcs in Fig.~\ref{fig:fractalbase}(c) as $\arc_{\ell}$ and the entire arc as the parent $\arc_{\ell+1}$.)
Recall that the initial map splits the middle section of the square into rectangles of equal height.
That is, each segment gets thickened into a rectangle of height $3^{-\iota}$.
We use the rectangles containing unique segments of $\arc_{\ell}$ as the uncertainty regions, which satisfies $3^{-\iota}\le \alpha_\ell$.
Thus the gap regions are the regions of the middle section of the unit square minus the uncertainty regions.
It remains to show that every gap region is a rectangle of height at most $g_\ell$.

By construction, each segment in the middle section of $\arc_{\ell}$ gets tripled in the parent curve $\arc_{\ell+1}$ (three segments consecutive in the vertical order, each in one of the children of $\arc_{\ell+1}$).
Thus, a gap between two consecutive segments (in the vertical order) in $\arc_{\ell}$ contains 
at most four segments of $\arc_{\ell+1}$ (which happens for the bottom-most red segments in Fig.~\ref{fig:fractalbase}(c)).
By recursion, each of these segments in the gap triple and we get at most $4$ additional segments from the pair of segments of $\arc_{\ell}$ on the boundary of the gap.
Thus, the number of segments of $\arc_\iota$ in a gap of $\arc_{\ell}$ is bounded above by the geometric series $4(1+3+3^2+\ldots+3^{\iota-\ell-1})=4\frac{3^{\iota-\ell}-1}{2}= 2(3^{\iota-\ell}-1)$.
Since each segment is contained in a rectangle of height $3^{-\iota}$, the height of a gap region is at most $g_\ell=2(3^{-\ell}-3^{-\iota})$, as required.
The anchoring invariant is trivially satisfied since $\arc_{\ell}$ is an anchor.
\end{proof}


\begin{corollary}\label{cor:basecase}
In the initial map, each district contains an arc in $S_{\iota-1}$.    
\end{corollary}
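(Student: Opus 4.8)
The plan is to derive the statement directly from Observation~\ref{obs:recursion} and Lemma~\ref{lem:recursion}. First I would apply Observation~\ref{obs:recursion} with $j=1$ to the initial arc $\arc_\iota\in F_\iota$: this exhibits a partition of $\arc_\iota$ into three $(\iota-1)$-anchors $A^{(1)},A^{(2)},A^{(3)}\in F_{\iota-1}$. By the definition of the initial map, each district $D_i$ is precisely the thickening built from the anchor $A^{(i)}$, namely the union of the rectangles obtained by thickening the middle segments of $A^{(i)}$, together with the corridors routed along the portions of $A^{(i)}$ that lie in the $\varepsilon$-neighborhoods of the two vertical sides of the square.

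The next step is to observe that in fact $A^{(i)}\subseteq D_i$, not merely that $D_i$ contains a curve homotopic to $A^{(i)}$. Each middle segment of $A^{(i)}$ lies inside the rectangle into which it was thickened, and every connecting sub-arc of $A^{(i)}$ inside a side $\varepsilon$-neighborhood lies inside the corridor that was constructed along it; since all of these pieces belong to $D_i$, so does the entire arc $A^{(i)}$. In particular $D_i$ contains an arc of $F_{\iota-1}$, with the same combinatorial pattern as $A^{(i)}$.

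Finally, I would invoke Lemma~\ref{lem:recursion} with $\ell=\iota-1$, which states that every $(\iota-1)$-anchor of the initial arc $\arc_\iota$ belongs to $S_{\iota-1}$, i.e.\ satisfies both the spacing and the anchoring invariants. Hence $A^{(i)}\in S_{\iota-1}$ for each $i\in\{1,2,3\}$, and therefore each district of the initial map contains an arc in $S_{\iota-1}$, as claimed.

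There is no serious obstacle left once Lemma~\ref{lem:recursion} is in hand; the only point requiring any care is the passage from ``$D_i$ is a thickening of $A^{(i)}$'' to the literal containment ``$A^{(i)}\subseteq D_i$'', which relies on the explicit description of the initial map (equal-area thickened rectangles together with side corridors routed along the anchors themselves) rather than on an abstract isotopy argument.
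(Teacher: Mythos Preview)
Your proposal is correct and matches the paper's intended argument: the corollary is stated without proof in the paper, and your derivation from Observation~\ref{obs:recursion} (with $j=1$) together with Lemma~\ref{lem:recursion} (with $\ell=\iota-1$) is precisely the reasoning the corollary is meant to encapsulate. The containment $A^{(i)}\subseteq D_i$ that you single out is indeed the only point not entirely formal, and it follows directly from the explicit construction of the initial map as a thickening of $\arc_\iota$.
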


\subparagraph{Dual Graphs.}
For every arc $\arc_\ell\in F_{\ell}$, we define the \emph{dual graph} $G(\arc_\ell)$ based on the complement of $[0,1]^2 \setminus \arc_\ell$ as follows.
The vertices of $G(\arc_\ell)$ are the maximal connected regions of the middle section of $[0,1]^2 \setminus \arc_\ell$ between two segments of $\arc_\ell$, and one additional vertex for the \emph{outer face}, which is the union of the two components in the middle section  adjacent to the top and bottom sides of the unit square, and the components of the complement in the left and right $\varepsilon$-neighborhoods that contain the left and right sides of the unit square. Two vertices of $G(\arc_\ell)$ are adjacent if their corresponding pair of regions are adjacent through a connected component of the complement in an $\varepsilon$-neighborhoods of the left or right side of the square; see Figure~\ref{fig:curvedual}.
The next two lemmas establish properties of $G(\arc_\ell)$ that are used later.

\begin{figure}[tbh]
	\centering
	\includegraphics[width = \textwidth]{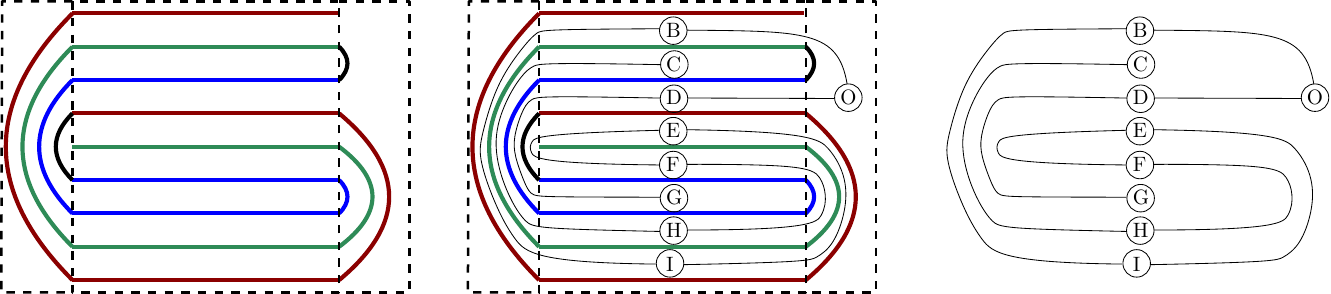}
    \caption{Left: an arc $\arc_2 \in F_2$. Middle: the arc and its dual graph overlaid. Notice that the two paths from the node $O$ have length $2$ and $6$. Right: the dual graph $G(\arc_2)$.}
    \label{fig:curvedual}
\end{figure}

\begin{restatable}{lemma}{lempaths}
\label{lem:paths}
For all $\ell \ge 1$, the dual graph $G(\arc_\ell)$ is a tree obtained by attaching $\ell$ disjoint paths $p_1, p_2,\ldots, p_\ell$ to a root $O$, where the length of each path $p_i$ is $2\cdot3^{i-1}$.
\end{restatable}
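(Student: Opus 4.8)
### Proof strategy for Lemma~\ref{lem:paths}

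The plan is to induct on $\ell$, tracking how the recursive construction of $\arc_{\ell+1}$ from $\arc_\ell$ acts on the dual graph. For the base case $\ell=1$, the arc $\arc_1$ crosses the middle section $3$ times, so $[0,1]^2\setminus\arc_1$ has two bounded regions in the middle section plus the outer face $O$; one checks directly that $O$ is adjacent to both bounded regions through the $\eps$-neighborhoods, and the two bounded regions are adjacent to each other through one $\eps$-neighborhood, yielding a path of length $2$ hanging off $O$ — matching $p_1$ of length $2\cdot 3^0=2$. (One should consult Fig.~\ref{fig:curvedual} for the correct routing; the combinatorial pattern of $\arc_1$ is fixed, so this is a finite check.)

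For the inductive step, recall that $\arc_{\ell+1}$ is the concatenation of $\arc_\ell$, the reverse of $\arc_\ell^-$, and $\arc_\ell^+$, where $\arc_\ell^-$ and $\arc_\ell^+$ closely follow $\arc_\ell$ on either side and are mutually noncrossing, joined by short connectors in the left/right $\eps$-neighborhoods. I would argue that each of the three copies $\arc_\ell, \arc_\ell^-, \arc_\ell^+$ contributes, after the narrowing caused by the other two copies running alongside, a dual-subgraph isomorphic to $G(\arc_\ell)$, and that these three copies share exactly their root vertex $O$ (since the outer face, together with the thin channels between the three parallel bundles, stays connected). Thus $G(\arc_{\ell+1})$ is three copies of $G(\arc_\ell)$ glued at $O$. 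Since $G(\arc_\ell)$ is (by induction) $O$ with paths $p_1,\dots,p_\ell$ of lengths $2,6,\dots,2\cdot 3^{\ell-1}$ attached, gluing three such copies at $O$ gives $3\ell$ paths. The key point is that the connectors at the two ends collapse pairs of these paths together: the tail of one copy's longest path $p_\ell$ joins up with the tail of the next copy's $p_\ell$ through the $\eps$-neighborhood connector, and similarly down the hierarchy, so that the $3\ell$ paths combine into $\ell+1$ paths whose lengths are $2,6,\dots,2\cdot3^{\ell-1}$ together with one new path of length $2\cdot3^\ell$ formed by fusing three copies of the old longest path (length $3\cdot 2\cdot 3^{\ell-1}=2\cdot3^\ell$), at the "outermost" level where $\arc_\ell$, $\arc_\ell^-$, $\arc_\ell^+$ are strung end-to-end. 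I would make this precise by setting up a bijection between middle-section regions of $[0,1]^2\setminus\arc_{\ell+1}$ and the appropriate union of region-sets of the three copies, and then verifying adjacencies: internal adjacencies within each copy are preserved, and the only new adjacencies are the ones created by the two end-connectors, which fuse the deepest path of copy $i$ to the deepest path of copy $i+1$.

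The main obstacle I anticipate is bookkeeping the adjacencies created by the connectors at the left and right $\eps$-neighborhoods: one must verify that (a) the outer face remains a single vertex despite being subdivided into many thin channels by the three parallel bundles, (b) no spurious adjacencies are created between regions of different copies except at the designated fusion points, and (c) the fusion at each level concatenates paths of the right lengths rather than, say, attaching them as separate branches — this is what turns $3\ell$ paths into $\ell+1$ and produces the clean length doubling-plus-tripling pattern $2\cdot3^i$. Getting the "which end connects to which" routing right requires careful attention to the reversal in the middle term ("the reverse of $\arc_\ell^-$") and to the cyclic/linear order in which the three bundles appear across the square. Once the adjacency structure is pinned down, that $G(\arc_{\ell+1})$ is a tree follows because each fusion identifies two leaves (path endpoints) into one path, introducing no cycle, and confirming it is a tree can alternatively be seen from the fact that the complement of a simple arc in a disk is simply connected, so its region-adjacency graph is a tree.
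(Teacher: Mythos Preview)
Your inductive step rests on the claim that $G(\gamma_{\ell+1})$ decomposes as three copies of $G(\gamma_\ell)$ glued at the root $O$, after which connectors fuse paths. This is not correct, and a vertex count already shows it: $G(\gamma_\ell)$ has $1+\sum_{i=1}^{\ell}2\cdot 3^{i-1}=3^\ell$ vertices, so three copies identified at $O$ have $3\cdot 3^\ell-2=3^{\ell+1}-2$ vertices, whereas $G(\gamma_{\ell+1})$ must have $3^{\ell+1}$ vertices. Your subsequent ``fusion'' of paths can only identify vertices, never create the two missing ones, so the bookkeeping cannot close. Concretely, for $\ell=1$ your picture gives $O$ with three pendant paths of length $2$ (seven vertices), and no fusion of those three paths can produce the required $O$ with paths of lengths $2$ and $6$ (nine vertices).

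The geometric reason the three-copy picture fails is that $\gamma_\ell^-$ and $\gamma_\ell^+$ run \emph{close} to $\gamma_\ell$, so each of the $3^\ell$ middle segments of $\gamma_\ell$ becomes a tight bundle of three parallel segments. The ``wide'' regions of $[0,1]^2\setminus\gamma_\ell$ (those between two bundles) survive unchanged and are in bijection with the old vertices of $G(\gamma_\ell)$; there is only one such family, not three. What is genuinely new is the $2\cdot 3^\ell$ \emph{thin} regions sandwiched between consecutive segments within a bundle (one between $\gamma_\ell$ and $\gamma_\ell^+$, one between $\gamma_\ell$ and $\gamma_\ell^-$, for each of the $3^\ell$ segments). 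The paper's proof shows that these thin regions form a single new path $p_{\ell+1}$ of length $2\cdot 3^\ell$ attached to $O$, while the old $G(\gamma_\ell)$ with its paths $p_1,\dots,p_\ell$ sits inside $G(\gamma_{\ell+1})$ intact. That is the decomposition you should aim for.
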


\begin{proof}
We prove this by induction on $\ell$. In the base case, when $\ell = 1$, there is a single path of length $2$, as seen in Figure~\ref{fig:fractalbase} (b).
For the inductive step, notice that each path in $G(\arc_\ell)$ is also present in $G(\arc_{\ell+1})$, as a part of the copy of $G(\arc_\ell)$ we started with.
Recall that $\arc_{\ell+1}$ extends $\arc_\ell$ by adding an arc that follows $\arc_\ell$ along both sides.
This triples the number of vertices in the dual and introduces a single new path in $G(\arc_{\ell+1})$.
Thus, the length of the new path is $2 \cdot 3^\ell$.
\end{proof}

\begin{restatable}{lemma}{lemalignment}
    \label{lem:alingment}
    Every path $p_i$, defined in Lemma~\ref{lem:paths}, can be partitioned into two paths $p_i^-$ and $p_i^+$ with $3^{i-1}$ vertices, each of them is immediately between two $(i-1)$-anchors of $\arc_\ell$.
\end{restatable}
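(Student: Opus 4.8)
The plan is to unwind the recursive construction of $\arc_{\ell+1}$ from $\arc_\ell$ and track which vertices of the dual graph $G(\arc_\ell)$ correspond to the newly created path $p_\ell$, then see how that path lies relative to the $(i-1)$-anchors. Recall from Observation~\ref{obs:recursion} that $\arc_\ell$ decomposes into $3^j$ arcs in $F_{\ell-j}$; in particular its $(i-1)$-anchors are $3^{\ell-i+1}$ consecutive sub-arcs, each crossing the middle section $3^{i-1}$ times. Fix a path $p_i$. By the proof of Lemma~\ref{lem:paths}, $p_i$ is precisely the path introduced at the stage where we built $\arc_i$ from $\arc_{i-1}$, namely where $\arc_i$ is the concatenation of $\arc_{i-1}$, the reverse of $\arc_{i-1}^-$, and $\arc_{i-1}^+$ (three copies of $F_{i-1}$-arcs, running closely alongside each other). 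The new regions of the middle section created at this stage are exactly the ``slabs'' squeezed between consecutive parallel strands of these three copies, and there are $2\cdot 3^{i-1}$ of them, matching the length of $p_i$.

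First I would argue that, in $\arc_i$, the $2\cdot 3^{i-1}$ new regions split naturally into two groups of $3^{i-1}$: those between $\arc_{i-1}$ and (the reverse of) $\arc_{i-1}^-$, and those between (the reverse of) $\arc_{i-1}^-$ and $\arc_{i-1}^+$. Each group has exactly $3^{i-1}$ regions because each of the three copies crosses the middle section $3^{i-1}$ times, so there is one thin region per crossing between two adjacent copies. Set $p_i^-$ to be the subpath on the regions of the first group and $p_i^+$ the subpath on the second group; these are genuine subpaths because the dual adjacencies that connect consecutive new regions (through the $\eps$-neighborhoods on the left/right sides) are exactly the ones that thread along a single ``corridor'' between two fixed adjacent strands, and switching from the first gap-pair to the second happens only at the turn-around point, splitting the path $p_i$ into two pieces of $3^{i-1}$ vertices each.

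Next I would verify the anchor-alignment claim. The three copies $\arc_{i-1},\arc_{i-1}^-,\arc_{i-1}^+$ used to form $\arc_i$ are, by construction, sub-arcs of $\arc_\ell$ that are each a union of $(i-1)$-anchors (indeed $\arc_{i-1}$ itself is a single $(i-1)$-anchor, up to the hierarchical labeling); more to the point, each thin region of $p_i^-$ is wedged between a strand of the copy $\arc_{i-1}$ and a strand of the copy $\arc_{i-1}^-$, and these two strands belong to two consecutive $(i-1)$-anchors of $\arc_\ell$ (consecutive in the vertical/middle-section order). The same holds for $p_i^+$ with the pair $\arc_{i-1}^-,\arc_{i-1}^+$. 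So every vertex of $p_i^-$ (resp.\ $p_i^+$) sits immediately between two $(i-1)$-anchors of $\arc_\ell$, as required. I would make this precise by an induction that carries along, for each path $p_i$ at level $\ell$, the identity of the pair of anchors it lies between, using that the level-$(\ell+1)$ construction copies the whole level-$\ell$ picture three times and the anchors triple accordingly (Observation~\ref{obs:recursion}).

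The main obstacle I expect is purely bookkeeping: pinning down, in the recursive construction, exactly which strands of which copies bound each newly created middle-section region, and showing that ``the copy $\arc_{i-1}$'' and ``the copy $\arc_{i-1}^-$'' genuinely realize two \emph{consecutive} $(i-1)$-anchors of $\arc_\ell$ once everything is embedded at level $\ell$ rather than at level $i$. This requires being careful that the ``closely follows on the left/right'' copies do not interleave with strands from other parts of $\arc_\ell$ — i.e.\ that the nesting structure is respected — which follows from the mutual non-crossing built into the construction of $\arc_{\ell+1}$, but must be stated cleanly. Once the adjacency structure of the new regions is understood, the partition into $p_i^-$ and $p_i^+$ and the anchor-betweenness are immediate.
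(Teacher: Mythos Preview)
Your plan is essentially the same inductive argument the paper gives: track the path $p_i$ as the one created when $\arc_i$ is built from $\arc_{i-1}$ by adjoining $\arc_{i-1}^-$ and $\arc_{i-1}^+$, observe that the $2\cdot 3^{i-1}$ new middle-section regions fall into two batches of $3^{i-1}$ according to which pair of adjacent copies bounds them, and carry this through the remaining levels by noting that the old paths are unaffected while the anchors of $\arc_\ell$ refine compatibly.

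One concrete correction: your identification of the bounding pairs is off. Since $\arc_{i-1}^+$ and $\arc_{i-1}^-$ lie on \emph{opposite} sides of $\arc_{i-1}$, there is no thin new region directly between $\arc_{i-1}^-$ and $\arc_{i-1}^+$; the middle copy $\arc_{i-1}$ separates them. The two batches are the regions between $\arc_{i-1}^+$ and $\arc_{i-1}$, and the regions between $\arc_{i-1}$ and $\arc_{i-1}^-$, exactly as the paper states (this is also visible in the figure for $\ell=2$, where the long path splits into a piece between red and green and a piece between green and blue). You anticipated this as the main bookkeeping hazard, and indeed it bit you; once the pairs are corrected, the rest of your outline goes through and matches the paper's proof.
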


\begin{proof}
    We proceed by induction on $\ell$. The base case is trivial when $\ell=1$ and $G(\arc_\ell)$ is a path of length $2$.
    Recall that we add the arcs $\arc_\ell^-$ and $\arc_\ell^+$ to $\arc_\ell$ to obtain $\arc_{\ell+1}$, and that $\arc_\ell^-$, $\arc_\ell^+$ and $\arc_\ell$ are $\ell$-anchors of $\arc_{\ell+1}$.
    Thus, 
    $G(\arc_{\ell+1})$ can be obtained from $G(\arc_\ell)$ by adding a path $p_{\ell+1}$ of length $2\cdot 3^\ell$.
    Note that $p_{\ell+1}^+$ follows $\arc_\ell^+$ immediately between $\arc_\ell^+$ and $\arc_\ell$ while $p_{\ell+1}^-$ is immediately between $\arc_\ell$ and $\arc_\ell^-$.
    (In Fig.~\ref{fig:curvedual}, for example, the path $(B,I,E)$ is between the red and green curves and $(F,H,C)$ is between the green and blue curves.)
    The paths $p_{\ell}^+$ and $p_{\ell}^-$ are sandwiched between two $(\ell-1)$-anchors of $\arc_\ell$, by the induction hypothesis, and now are sandwiched between two $(\ell-1)$-anchors corresponding to $\arc_\ell^+$ and $\arc_\ell^-$. The remaining paths are not affected.
\end{proof}

\begin{lemma}\label{lem:area}
If $1\leq \ell \leq \iota-4$, then the total area of the uncertainty regions of any arc $\gamma_\ell \in S_\ell$ is at most $\frac{1}{9}$. 
\end{lemma}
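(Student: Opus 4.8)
The plan is to bound the total area directly from the spacing invariant; no recursion is needed. First I would recall that, since $\gamma_\ell\in S_\ell$, the spacing invariant provides a partition of the middle section of $[0,1]^2$ into exactly $3^\ell$ uncertainty regions, each an axis-aligned rectangle of width $1-2\varepsilon$ and height at most $\alpha_\ell=3^{-(\iota+\ell+1)/2}$. (The number is pinned at $3^\ell$ because the spacing invariant puts the middle segments of $\gamma_\ell$ in one-to-one correspondence with the uncertainty regions, and $\gamma_\ell$ crosses the middle section $3^\ell$ times by definition of $F_\ell$.) Summing the areas of these rectangles gives
\[
\sum (\text{area of uncertainty regions}) \;\le\; 3^\ell\cdot(1-2\varepsilon)\cdot\alpha_\ell \;\le\; 3^\ell\cdot 3^{-(\iota+\ell+1)/2} \;=\; 3^{(\ell-\iota-1)/2}.
\]

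Next I would use the hypothesis $\ell\le\iota-4$, which forces $\ell-\iota-1\le-5$, hence the exponent $(\ell-\iota-1)/2$ is at most $-5/2$, so the total area is at most $3^{-5/2}=\tfrac{1}{9\sqrt{3}}<\tfrac19$, as required. (In fact $\ell\le\iota-3$ would already suffice for the bound $\tfrac19$; the extra slack is harmless.)

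There is essentially no obstacle here: the lemma is an arithmetic consequence of the quantitative height bound $\alpha_\ell$ that is built into the definition of the spacing invariant, combined with the count $3^\ell$ of uncertainty regions. The only point worth stating explicitly in the write-up is that the gap regions and the anchoring invariant play no role whatsoever — only the height bound on uncertainty regions and their number matter — so the proof is a two-line computation.
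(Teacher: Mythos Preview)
Your proposal is correct and follows essentially the same approach as the paper: a direct count of $3^\ell$ uncertainty rectangles times the width $1-2\varepsilon$ times the height bound $\alpha_\ell$, followed by the observation that the resulting exponent is at most $-2$ when $\ell\le\iota-4$. Your computation is in fact slightly cleaner than the paper's, since you use the exact value $\alpha_\ell=3^{-(\iota+\ell+1)/2}$ from the spacing invariant (the paper's proof drops the ``$+1$'' in the exponent, which is harmless but inconsistent with its own definition).
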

\begin{proof}
For every arc $\arc_\ell\in S_\ell$, there are $3^\ell$ uncertainty regions, each of which a rectangle of width $1-2\varepsilon$ and height $\alpha_\ell = 3^{-(\iota+\ell)/2}$. 
Thus their total area is 
$3^\ell\cdot (1-2\varepsilon) 3^{-(\iota+\ell)/2} < 3^{(\ell-\iota)/2}\leq 3^{-2}$ when $\ell \leq \iota-4$.
\end{proof}

Within the middle section (of area $1-2\varepsilon$), the complement of all uncertainty regions is precisely the union of gap regions. Consequently, Lemma~\ref{lem:area} implies the following. 

\begin{corollary}\label{cor:area}
For all $\varepsilon<\frac{1}{18}$, $\iota \geq 6$, and $0\le  \ell\leq \iota-4$, the total area of the gap regions is at least $\frac{7}{9}$.
\end{corollary}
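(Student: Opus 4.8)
The plan is to obtain Corollary~\ref{cor:area} as an immediate consequence of Lemma~\ref{lem:area} together with the remark stated just before it. First I would recall that the middle section of the unit square is the axis-aligned rectangle $[\varepsilon,1-\varepsilon]\times[0,1]$, which has area $1-2\varepsilon$, and that by the definition of the spacing invariant it is partitioned (up to boundaries) into the $3^\ell$ uncertainty regions and the $3^\ell+1$ gap regions, all of which are pairwise interior-disjoint axis-aligned rectangles of width $1-2\varepsilon$ that together tile the middle section. Consequently the total area of the gap regions equals $(1-2\varepsilon)$ minus the total area of the uncertainty regions.

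Next I would bound the total area of the uncertainty regions by $\frac{1}{9}$ for every $\ell$ in the stated range $0\le\ell\le\iota-4$. For $1\le\ell\le\iota-4$ this is exactly Lemma~\ref{lem:area}. The value $\ell=0$ is included in the corollary but is not covered by that lemma, so I would check it directly: for $\ell=0$ there is a single uncertainty region, a rectangle of width $1-2\varepsilon$ and height at most $\alpha_0\le 3^{-3}=\frac{1}{27}$ (using $\iota\ge 6$), hence of area at most $\frac{1}{27}<\frac{1}{9}$. Thus the bound $\frac{1}{9}$ on the total uncertainty area holds throughout the stated range.

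Finally I would combine the two estimates. Since $\varepsilon<\frac{1}{18}$ we have $1-2\varepsilon>1-\frac{1}{9}=\frac{8}{9}$, so the total area of the gap regions is at least
\[
(1-2\varepsilon)-\frac{1}{9}>\frac{8}{9}-\frac{1}{9}=\frac{7}{9},
\]
which is the claimed bound.

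Since this is purely an area-accounting argument, there is no real obstacle; the only points that warrant a moment's care are that the uncertainty and gap rectangles are interior-disjoint and exactly tile the middle section (so their areas sum with neither double-counting nor omission), and that the endpoint case $\ell=0$ lies outside the range of Lemma~\ref{lem:area} and must be verified separately, which is why I include the explicit estimate above.
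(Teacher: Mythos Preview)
Your argument is correct and is exactly the approach the paper intends: the corollary is stated there as an immediate consequence of Lemma~\ref{lem:area} together with the observation that the gap regions are the complement of the uncertainty regions inside the middle section of area $1-2\varepsilon$. Your write-up is in fact slightly more careful than the paper, since you separately verify the endpoint $\ell=0$, which lies outside the hypothesis $1\le\ell$ of Lemma~\ref{lem:area}.
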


We now focus on a ``volume'' argument to show that each district must occupy a significant portion of $G(\arc_\ell)$ after a ReCom move.
Note that for all arcs $\arc_\ell\in F_\ell$, the dual graphs $G(\arc_\ell)$ are isomorphic.
For every arc $\arc_\ell\in S_{\ell}$, we define a weight function $w:V(G(\arc_\ell))\rightarrow \mathbb{R}^+$ as follows. By the spacing invariant, the region corresponding to each vertex $v$ is contained in the union of a gap region (a rectangle of height at most $g_\ell$), and up to two uncertainly regions that contain the segments of $\arc_\ell$ 
(rectangles of height $\alpha_\ell$). The weight $w(v)$ of a vertex $v$ is the area of intersection of the corresponding region and the gap region (excluding the area in the uncertainty regions). Consequently, for all $v\in V(G(\arc_\ell))$, it is true that:
\begin{equation} \label{eq:gap-upper}
w(v)
\leq (1-2\varepsilon)g_\ell
\leq (1-2\varepsilon)\cdot 2(3^{-\ell}-3^{-\iota})
\le 2\cdot 3^{-\ell}
\end{equation}

\subparagraph{Special Paths.}
We now define subpaths of $G(\arc_\ell)$ that will allow us to quantify 
the amount of ``progress'' that one ReCom move can make towards the canonical configuration. 
We focus on the three longest paths $p_{\ell-2}$, $p_{\ell-1}$ and $p_{\ell}$ as defined in Lemma~\ref{lem:paths}.
We call these three paths the \emph{long paths} of $G(\arc_\ell)$ and the remaining paths the \emph{short paths} of $G(\arc_\ell)$.
We assume that $\ell\ge 6$. 
We partition the vertices in each long path (excluding the root) into vertex-disjoint \emph{special paths} of length $3^{\ell-4}$.
Note that, by Lemma~\ref{lem:paths}, the paths $p_{\ell-2}$, $p_{\ell-1}$ and $p_{\ell}$ contain $6$, $18$ and $54$ special paths, respectively.
Thus, in total, there are $78$ special paths in $G(\arc_\ell)$.
We say that a district \emph{visits} a special path $p^*$ if the district intersects (the region corresponding to) each node in  $p^*$.
We use \eqref{eq:gap-upper} to show the following lemma.

\begin{restatable}{lemma}{lemthreelongest}\label{lem:3longest}
Given a 3-district map where one district contains an arc $\arc_\ell\in S_\ell$, then each of the other two districts visits at least one special path of the dual graph $G(\arc_\ell)$. 
\end{restatable}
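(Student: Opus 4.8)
The plan is to run a ``volume'' argument on the dual tree $G(\arc_\ell)$. Write $D_1$ for the district containing $\arc_\ell$ and $D_2,D_3$ for the other two. First I would note that, because $D_1\supseteq\arc_\ell$ and the three districts are interior-disjoint polygonal regions (so their boundaries are one-dimensional), the open interiors of $D_2$ and $D_3$ are disjoint from $\arc_\ell$; hence each of $D_2,D_3$ lies inside $[0,1]^2\setminus\arc_\ell$, whose cells in the middle section are exactly the regions indexed by the nodes of $G(\arc_\ell)$. For $D\in\{D_2,D_3\}$ let $V_D$ be the set of nodes whose region $D$ meets in positive area. The structural claim I would establish is that $V_D$ induces a \emph{connected} subgraph of the tree $G(\arc_\ell)$: since $D$ is connected and cannot cross the ``walls'' $\arc_\ell$, any path inside $D$ joining two of its cells can leave a cell only through the components of the complement in the left/right $\varepsilon$-neighborhoods, and these are precisely the edges of $G(\arc_\ell)$ (or the hub $O$). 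Consequently $V_D$ is either a subpath contained in a single leg $p_i$, or a subtree through $O$, i.e.\ a union of prefixes of the legs $p_1,\dots,p_\ell$.

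Next I would turn ``$D$ visits no special path'' into a bound on the total weight of $V_D$. Recall a special path is a block of $3^{\ell-4}$ consecutive nodes in one of the long legs $p_{\ell-2},p_{\ell-1},p_\ell$, and these blocks partition each long leg (this uses $\ell\ge 6$). If $V_D$ is a subpath inside one leg, avoiding every special path forces $|V_D|<2\cdot 3^{\ell-4}$. If $V_D\ni O$, then $V_D\cap p_i$ is a prefix, so avoiding every special path forces $|V_D\cap p_i|<3^{\ell-4}$ for each long leg, while the short legs $p_1,\dots,p_{\ell-3}$ together have only $3^{\ell-3}-1$ nodes. Using $w(v)\le 2\cdot 3^{-\ell}$ from \eqref{eq:gap-upper} for each non-root node, together with $w(O)\le 2(1-2\varepsilon)g_\ell<4\cdot 3^{-\ell}$ (the $O$-region meets the gap regions only in the top and bottom gaps), I get in the subtree-through-$O$ case
\[
\sum_{v\in V_D}w(v)\;<\;4\cdot 3^{-\ell}+(3^{\ell-3}-1)\cdot 2\cdot 3^{-\ell}+3\cdot 3^{\ell-4}\cdot 2\cdot 3^{-\ell}\;<\;4\cdot 3^{-\ell}+\tfrac{4}{27},
\]
and in the single-leg case the even smaller bound $\sum_{v\in V_D}w(v)<4\cdot 3^{-\ell}$ (wait: $<2\cdot 3^{\ell-4}\cdot 2\cdot 3^{-\ell}=\tfrac{4}{81}$).

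On the other hand $\area(D)=\tfrac13$, and $D$ is contained in $\bigcup_{v\in V_D}(\text{region }v)$ together with the two $\varepsilon$-strips; each region equals its gap (area $w(v)$) plus parts of the two adjacent uncertainty regions, the uncertainty regions have total area at most $\tfrac19$ (Lemma~\ref{lem:area}), and the $\varepsilon$-strips have total area $2\varepsilon$. Hence $\tfrac13\le\sum_{v\in V_D}w(v)+\tfrac19+2\varepsilon$, so $\sum_{v\in V_D}w(v)\ge\tfrac29-2\varepsilon$. Comparing with the upper bound above gives $\tfrac29-2\varepsilon<\tfrac4{27}+4\cdot 3^{-\ell}$, i.e.\ $\varepsilon>\tfrac1{27}-2\cdot 3^{-\ell}$; since $\ell\ge 6$ and $\varepsilon$ is taken to be a sufficiently small absolute constant (say $\varepsilon<\tfrac1{30}$, consistent with the hypotheses of Corollary~\ref{cor:area}), this is a contradiction. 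Therefore each of $D_2$ and $D_3$ visits a special path, as claimed.

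The step I expect to be the main obstacle is the structural claim that $V_D$ spans a connected subgraph of $G(\arc_\ell)$ — making precise, from the exact definition of the regions and of the $\varepsilon$-neighborhood ``doors'', that a connected district walled in by $\arc_\ell$ can only pass between cells along edges of the dual tree. Once that is in place, everything else is the elementary double counting of area sketched above, whose only requirements are $\ell\ge 6$ and $\varepsilon$ small, and which uses only that $\arc_\ell\in S_\ell$ through the spacing invariant (heights $\alpha_\ell,g_\ell$) and Lemma~\ref{lem:area}.
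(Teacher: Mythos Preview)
Your proposal is correct and follows essentially the same volume argument as the paper: bound from below the gap-weight any district must carry (at least roughly $\tfrac{2}{9}$), split into the two cases ``$V_D$ goes through the root $O$'' versus ``$V_D$ sits inside a single leg'', and show that avoiding every special path caps the available weight below that threshold. The one notable difference is that you make explicit the structural claim that $V_D$ spans a \emph{connected} subtree of $G(\arc_\ell)$ (and hence meets each long leg in a prefix, or is a subpath of one leg); the paper uses this implicitly in both cases but never states it, so your write-up is in fact cleaner on this point. Your only slack is the constant: your final inequality needs $\varepsilon<\tfrac{1}{27}-2\cdot 3^{-\ell}$ rather than the paper's $\varepsilon<\tfrac{1}{18}$, but since $\varepsilon$ is a free parameter of the lower-bound construction this is harmless.
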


\begin{proof}
    By Lemma~\ref{lem:paths}, the number of vertices in the short paths is $\sum_{i=1}^{\ell-3}|p_i|=2\frac{3^{\ell-4}-1}{2}=3^{\ell-4}-1$, where $|p_i|=2\cdot 3^{i-1}$ is the length of path $p_i$.
    By equation \eqref{eq:gap-upper}, the weight of each node is at most $2\cdot 3^{-\ell}$, and the total weight these nodes is at most $2\cdot3^{-\ell}(3^{\ell-4}-1)=2(3^{-4}-3^{-\ell})<\frac{2}{81}$.
    By Corollary~\ref{cor:area}, the 3 long paths account for an area of at least $\frac{7}{9}-\frac{2}{81}=\frac{61}{81}$ of the total area.

    Each district has $1/3$ area, at most $1/9$ of which may be in uncertainty regions, and so it has at least $2/9$ area in gap regions.
    We distinguish between two cases:
    
    \emph{Case~1: The district visits the root of $G(\arc_\ell)$.}
    Since $G(\arc_\ell)$ has at most $\frac{2}{81}$ area in its short paths, at least $\frac29-\frac{2}{81}=\frac{16}{81}$ must be in the 3 long paths.
    Then each district visits at least $(\frac{16}{81})/(2\cdot 3^{-\ell})= \frac{8}{81}\cdot 3^\ell$ nodes of the long paths.
    The long paths jointly have $2(3^{\ell-1}+3^{\ell-2}+3^{\ell-3})=\frac{26}{27}\cdot 3^{\ell}$ nodes.
    Then each district is present in $\frac{8}{81}/\frac{26}{27}=\frac{4}{39}$ portion of nodes in the 3 longest paths.
    The case that minimizes the maximum number of nodes visited in a long path is when the visited nodes are equally divided among the 3 long paths. 
    Then, each path has $\frac{4}{3\cdot 39}<\frac{1}{81}$ nodes visited by the district. Thus the district visits one of the special paths. 
    
    \emph{Case~2: The district does not visit the root of $G(\arc_\ell)$.} 
    Then the district is contained in one of the three long paths. 
    Furthermore, it cannot visit any of the short paths.
    Thus, the district must visit at least $(\frac{2}{9})/(2\cdot 3^{-\ell})= 3^{\ell-2}$ nodes.
    That is the length of $9$ special paths.
    Thus the district visits at least $8$ special paths, since $3^{\ell-4}$ of the visited vertices can be in special paths that do not have all its vertices visited by the district.
\end{proof}

\begin{lemma}\label{lem:key}
Let $\arc_\ell\in S_{\ell}$ be an arc contained in a district and $\arc$ be an arc visiting the sequence of gap regions of $\arc_\ell$ corresponding to a special path $p^*$ in $G(\arc_\ell)$. Then $\arc \in S_{\ell-4}$.
\end{lemma}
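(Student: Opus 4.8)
The plan is to show that visiting the gap regions along a special path forces $\arc$ to have the combinatorial type of an arc in $F_{\ell-4}$ and to hug a single $(\ell-4)$-anchor of the initial arc closely enough to satisfy both the spacing and anchoring invariants.

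\textbf{Locating $p^*$.} By definition $p^*$ is a length-$3^{\ell-4}$ subpath of one of the long paths $p_{\ell-2},p_{\ell-1},p_\ell$ of $G(\arc_\ell)$. Lemma~\ref{lem:alingment} partitions each long path $p_i$ into subpaths $p_i^-,p_i^+$, where $p_i^{\pm}$ runs immediately between two $(i-1)$-anchors of $\arc_\ell$ that closely follow one another, its nodes in order-preserving bijection with the middle segments of one of those anchors. Refining that anchor via Observation~\ref{obs:recursion}, the subpath $p^*$ corresponds to a contiguous block of $3^{\ell-4}$ middle segments, i.e.\ to a pair $B_1,B_2$ of closely-following $(\ell-4)$-anchors of $\arc_\ell$; and the gap regions along $p^*$ are, in $p^*$-order (which equals the traversal order of $B_1$), exactly the thin regions of $\arc_\ell$ wedged between corresponding middle segments of $B_1$ and $B_2$. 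One checks that the canonical special-path partition aligns with the $(\ell-4)$-anchor decomposition, which is immediate from consistent indexing.

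\textbf{Combinatorial type and choice of anchor.} Since $B_1,B_2\in F_{\ell-4}$, the gap regions along $p^*$ occur, in $p^*$-order, with exactly the vertical pattern of an arc in $F_{\ell-4}$; as $\arc$ visits them in $p^*$-order, crossing the middle section once in each and turning around at a side between consecutive ones (possible since consecutive nodes of $p^*$ are adjacent in $G(\arc_\ell)$, hence joined through an $\varepsilon$-neighborhood of a side of the square), $\arc$ makes $3^{\ell-4}$ middle crossings in the $B_1$-pattern, so $\arc\in F_{\ell-4}$. Moreover, $\arc_\ell\in S_\ell$ supplies an $\ell$-anchor $A_\ell$ of the initial arc $\arc_\iota$ whose thick segments sit inside the uncertainty regions of $\arc_\ell$, in the order $\arc_\iota$ traverses them. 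As $B_1$ is a contiguous sub-arc of $\arc_\ell$ that is an $(\ell-4)$-anchor, the thick segments paired with its segments form a contiguous $(\ell-4)$-sub-anchor $A^*$ of $A_\ell$, hence (Observation~\ref{obs:recursion}) an $(\ell-4)$-anchor of $\arc_\iota$, traversed by $\arc_\iota$ in the $p^*$-order of $\arc$'s crossings. This $A^*$ will witness the anchoring invariant of $\arc$.

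\textbf{The two invariants and the main obstacle.} For each node $v\in p^*$ let $U_v$ be the minimal rectangle of width $1-2\varepsilon$ containing both the thick segment of $A^*$ paired with $v$ and the crossing of $\arc$ inside the gap region of $v$; I take the $U_v$ as the uncertainty regions of $\arc$ and the $3^{\ell-4}+1$ complementary horizontal strips of the middle section as its gap regions. To conclude $\arc\in S_{\ell-4}$ it remains to verify: (i) each crossing of $\arc$ lies within the allowed tolerance of its $A^*$-segment, so that the height of $U_v$ is at most $\alpha_{\ell-4}=9\alpha_\ell$; (ii) the $U_v$ are pairwise disjoint; and (iii) the gap strips have height at most $g_{\ell-4}$. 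Item (i) is the crux, and where I expect the real work to lie. A priori the gap region of $v$ is only a gap region of $\arc_\ell$, bounded in height only by $g_\ell\gg\alpha_{\ell-4}$, which would be far too weak; the point is that the gap regions along a \emph{long} path are precisely those squeezed between the closely-following copies $B_1,B_2$ identified above, hence much thinner than a generic gap region, and — via the anchoring invariant of $\arc_\ell$ — they are pinned to the fixed grid of thick segments of $\arc_\iota$ on which $A^*$ lives. Making this quantitative reduces to bounding the following distances fixed in the construction of $\arc_\iota$ against $\alpha_\ell$, and then checking that $\alpha_\ell$, $\alpha_{\ell-4}=9\alpha_\ell$, the (equal) spacing of the segments of $A^*$, and $g_{\ell-4}$ fit together; items (ii) and (iii) then follow because consecutive $A^*$-segments are equally spaced and, compared to the tolerance, well separated. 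The choice to drop exactly four levels per ReCom move — equivalently, special paths of length $3^{\ell-4}$ — is calibrated precisely so these inequalities hold with room to spare, mirroring the slack needed in the area argument of Lemma~\ref{lem:3longest}.
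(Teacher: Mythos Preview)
Your plan follows the same route as the paper: locate the special path between two closely following $(\ell-4)$-anchors (via Lemma~\ref{lem:alingment} and Observation~\ref{obs:recursion}), pull back through the anchoring invariant of $\arc_\ell$ to an $(\ell-4)$-anchor $A^*$ of the initial arc $\arc_\iota$, and then build uncertainty/gap regions for $\arc$ around the crossings. The paper makes this concrete by taking, for each segment of $\arc$, the new uncertainty region to be the gap region of $\arc_\ell$ holding that segment together with the (up to) two adjacent uncertainty regions of $\arc_\ell$; the complement then automatically contains the $(\ell-4)$-anchor, so Lemma~\ref{lem:recursion} bounds the new gap heights by $g_{\ell-4}$, and the entire lemma collapses to the single numeric check $2\alpha_\ell+g_\ell\le \alpha_{\ell-4}$.

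The gap in your proposal is exactly this last step. You correctly flag item~(i) as ``the crux'' and observe that the relevant gap regions are the thin ones squeezed between closely following copies rather than generic ones, but you never carry out the bound: ``Making this quantitative reduces to bounding the following distances\ldots'' is where the proof stops. Your choice of uncertainty region --- the minimal rectangle containing the $A^*$-segment and the $\arc$-crossing --- has no built-in height estimate; to bound it you must show the $\arc$-crossing lies within (gap of $\arc_\ell$) $+$ (two uncertainties of $\arc_\ell$) of the anchor segment, i.e.\ within height $g_\ell+2\alpha_\ell$, which is precisely the paper's computation you deferred. Items~(ii) and~(iii) then follow immediately from~(i) plus Lemma~\ref{lem:recursion}, so once you close~(i) the rest is as you say. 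A minor additional point: ``one checks that the canonical special-path partition aligns with the $(\ell-4)$-anchor decomposition'' also needs a line of argument --- it uses that $p_i^\pm$ has length $3^{i-1}$ divisible by $3^{\ell-4}$ for each $i\in\{\ell-2,\ell-1,\ell\}$ --- but that is routine compared to the missing inequality.
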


\begin{proof}
    Refer to Figure~\ref{fig:lower-bound-regions}.
    By combining Observation~\ref{obs:recursion}, Lemma~\ref{lem:alingment} and the anchoring invariant, we conclude that each special path is between two $(\ell-4)$-anchors of $\arc_\iota$ used as anchors for $\arc_\ell$.
    We use the union between the gap region containing a segment of $\arc$ and the up to two adjacent uncertainty regions of $\arc_\ell$ as the uncertainty region of $\arc$.
    Then the gap regions of $\arc$ is the complement.
    Note that that includes the anchors sandwiching $p^*$ and, therefore, the uncertainty regions of $\arc$ contain an $(\ell-4)$-anchor of $\arc_\iota$.
    By Lemma~\ref{lem:recursion}, each of these anchors has a gap region with height at most $g_{\ell-4}$.
    Then the gap between two vertically consecutive uncertainty regions of $\arc$ is at most by $g_{\ell-4}$.
    It remains to show that the height of the uncertainty regions of $\arc$ is at most $\alpha_{\ell-4}$.
    Recall that, by definition, $\alpha_\ell = 3^{-(\iota+\ell+1)/2}$ and $g_\ell=2(3^{-\ell}-3^{-\iota})$.
    Since $\arc_\ell\in S_{\ell}$, the height of the uncertainty regions of $\arc$ is at most $2\alpha_\ell+g_\ell$.
    By simplifying the inequality $2\alpha_\ell+g_\ell \le \alpha_{\ell-4}$ we obtain $3^{\frac{\ell}{2}} \le 80 \cdot3^\frac{\iota+1}{2}$, which holds for $\ell\le \iota$.
\end{proof}

\begin{figure}[h]
    \centering
    \includegraphics[width=.7\textwidth]{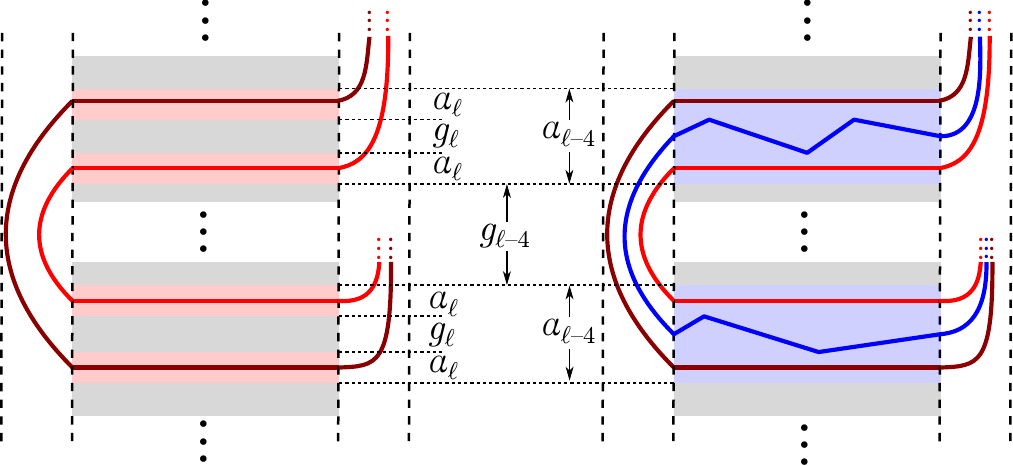}
    \caption{Illustration of Lemma~\ref{lem:key}. The red (resp., blue) rectangles are the uncertainty regions of $\arc_\ell$ (resp., $\arc$). Two $(\ell-4)$-anchors of $\arc_\iota$ are shown as light and dark red arcs, and $\arc$ is shown as a blue arc.}
    \label{fig:lower-bound-regions}
\end{figure}

\begin{lemma}\label{lem:induction}
Given the initial map defined above, after $i$ ReCom moves, for $0\leq i\leq \frac{\iota}{4}-1$, each district contains an arc in $S_{\iota-4i-1}$.
\end{lemma}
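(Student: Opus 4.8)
The plan is to prove Lemma~\ref{lem:induction} by induction on $i$, using the machinery of special paths and the invariants developed above. The base case $i=0$ is exactly Corollary~\ref{cor:basecase}: in the initial map, each district contains an arc in $S_{\iota-1}$.

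For the inductive step, assume that after $i$ ReCom moves each of the three districts contains an arc in $S_{\iota-4i-1}$. Set $\ell=\iota-4i-1$; note that the hypothesis $i\leq \frac{\iota}{4}-1$ guarantees $\ell\geq 3$ (and in fact $\ell\geq 6$ once we account for the requirement in the definition of special paths, which is where the precise constants will need checking). A ReCom move modifies only two of the three districts, say $D_i$ and $D_j$, leaving the third district $D_m$ untouched; so $D_m$ still contains its arc $\arc_\ell\in S_\ell$, and a fortiori (by Observation~\ref{obs:recursion} together with the fact that any sub-anchor of an arc in $S_\ell$ lies in $S_{\ell-4}$ after restricting the uncertainty/gap regions, exactly as in Lemma~\ref{lem:key}) $D_m$ contains an arc in $S_{\ell-4}=S_{\iota-4(i+1)-1}$. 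The work is therefore all about the two districts $D_i'$ and $D_j'$ produced by the ReCom move.

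The key point is that the region $D_i'\cup D_j' = D_i\cup D_j$ is unchanged by the ReCom move, and this region contains the arc $\arc_\ell\in S_\ell$ that was inside (say) $D_i$ before the move. Apply Lemma~\ref{lem:3longest} to the 3-district map obtained by viewing $D_i\cup D_j$ as a single ``district'' containing $\arc_\ell$ — more carefully, we want to apply the volume argument to the new pair $D_i', D_j'$. Since $\arc_\ell$ is disjoint from the third district $D_m$ (the arc lies in $D_i\cup D_j$), and $D_m$ is a district of $1/3$ area with its own arc, the dual graph $G(\arc_\ell)$ is the relevant structure. The move replaces $D_i,D_j$ by $D_i',D_j'$ of the same total area, and Lemma~\ref{lem:3longest} tells us that in \emph{any} 3-district map where one district contains $\arc_\ell$, each of the other two districts visits at least one special path of $G(\arc_\ell)$. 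The subtlety is that after the move neither $D_i'$ nor $D_j'$ need contain $\arc_\ell$; we instead consider the (untouched) district $D_m'=D_m$ and, say, $D_i'$, together with the complementary region which equals $D_j'$. To legitimately invoke Lemma~\ref{lem:3longest} we want a 3-district map in which one district still carries $\arc_\ell$. Here I would use: the region $D_m$ still contains the arc $\arc_\ell'\in S_\ell$ it had, so applying Lemma~\ref{lem:3longest} to $\{D_m, D_i', D_j'\}$ with $D_m$ playing the role of the arc-carrying district, we conclude that $D_i'$ and $D_j'$ each visit at least one special path of $G(\arc_\ell')$. Then Lemma~\ref{lem:key} converts ``visiting the gap regions along a special path'' into ``containing an arc of $S_{\ell-4}$'': each of $D_i'$ and $D_j'$ contains an arc in $S_{\ell-4}=S_{\iota-4(i+1)-1}$, and $D_m$ does too by the argument above. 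This establishes the claim for $i+1$.

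The main obstacle I anticipate is matching up the hypotheses of Lemma~\ref{lem:3longest} and Lemma~\ref{lem:key} with the situation after a ReCom move: Lemma~\ref{lem:3longest} is stated for a district \emph{containing} $\arc_\ell$, but after the move the arc $\arc_\ell$ that was inside $D_i$ may be ``cut'' by the new boundary between $D_i'$ and $D_j'$, so strictly speaking it is the region $D_i'\cup D_j'$ that contains $\arc_\ell$, not either district alone. The fix is to always pin the argument on the \emph{untouched} district $D_m$, whose arc is genuinely preserved, and argue about the other two. One must also check the index bookkeeping: we need $\ell-4\geq 1$ for $S_{\ell-4}$ to be defined and $\ell\geq 6$ for the special-path decomposition in Lemma~\ref{lem:3longest} to be valid; both follow from $i\leq \frac{\iota}{4}-1$ provided $\iota\geq 6$, which is assumed throughout (e.g.\ in Corollary~\ref{cor:area}). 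A final detail is that a special path is a sequence of gap regions, and one must observe that a district visiting every node of a special path indeed contains an arc $\arc$ threading those gap regions in order (this is essentially the definition of ``visits'' combined with connectivity of the district), so that Lemma~\ref{lem:key} applies verbatim.
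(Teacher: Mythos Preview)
Your proposal is correct and follows essentially the same approach as the paper: induct on $i$, pin the argument on the untouched district $D_m$ (the paper calls it the \emph{obstacle district}), apply Lemma~\ref{lem:3longest} to $\{D_m,D_i',D_j'\}$ so that the two recombined districts each visit a special path of $G(\arc_\ell)$, and then invoke Lemma~\ref{lem:key}. The only cosmetic difference is that for the untouched district the paper cites Lemma~\ref{lem:recursion} directly (a subarc of $\arc_\ell$ is in $S_{\ell-4}$), whereas you route through Observation~\ref{obs:recursion} plus a Lemma~\ref{lem:key}-style restriction; both justifications are fine.
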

\begin{proof}
We proceed by induction on $i$. In the base case, $i=0$, the claim follows from Corollary~\ref{cor:basecase}.
In the induction step, assume that the claim holds for some $i\in \mathbb{N}_0$, and then an $(i+1)$st ReCom move is performed. The \emph{obstacle district} is the district that is not modified by the $(i+1)$st ReCom move, and w.l.o.g. is $D_3$. By the induction hypothesis, $D_3$ contains an arc $\arc_\ell\in S_\ell$ for $\ell=\iota-4i-1$. 
By Lemma~\ref{lem:recursion}, a subarc of $\arc_\ell$ is in $S_{\ell-4}$, as required.
It remains to find arcs in $S_{\ell-4}$ in the two districts that have been recombined, namely $D_1$ and $D_2$.
By Lemma~\ref{lem:3longest}, both $D_1$ and $D_2$ visit a special path of $G(\arc_\ell)$ in any district map.
For each of these two districts, we can then find an arc satisfying the conditions of Lemma~\ref{lem:key}.
Then, by Lemma~\ref{lem:key}, $D_1$ and $D_2$ each contain an arc in $S_{\iota-4i-5}$, as required.
\end{proof}

Lemma~\ref{lem:key} and \ref{lem:induction} show that the districts that visit a special path have arcs in $S_{\ell-4}$ and, therefore, each ReCom move can only decrease the ``level'' of arcs contained in the district by at most four. Thus, we need $\Omega(\iota)$ ReCom moves to get the canonical configuration and, by making $\iota=\log n$ we get the following theorem.

\begin{restatable}{theorem}{thmlower}
    \label{thm:lower-bound}
    There exist two area-compatible 3-district maps, $\map$ and $\map'$, both with complexity $O(n)$, such that $\Omega(\log n)$ ReCom moves are necessary to reconfigure $\map$ into $\map'$, even when the districts in both maps are axis-aligned orthogonal polygons with vertices on an integer grid of size $O(n)\times O(n)$.
\end{restatable}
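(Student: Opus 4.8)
The plan is to take $\map$ to be the initial map (the thickening of the initial arc $\arc_\iota$) and $\map'$ to be the canonical map, measuring areas with the density $\delta$ that equals $1$ on the middle section $[\varepsilon,1-\varepsilon]\times[0,1]$ and $0$ on the two side strips of width $\varepsilon$. Fix $\iota=\lfloor\log_3 n\rfloor$, so that $\iota=\Theta(\log n)$, and $\iota\ge 6$ for all sufficiently large $n$ (for small $n$ the bound $\Omega(\log n)$ is a constant and holds trivially). Under $\delta$ each of the three districts of $\map$ has measure $(1-2\varepsilon)/3$, since it consists of $3^{\iota-1}$ of the equal-area middle rectangles together with corridors lying in the zero-density side strips; and the canonical map may be taken to have districts $Q_1=[0,1]\times[0,\tfrac13]$, $Q_2=[0,1]\times[\tfrac13,\tfrac23]$, $Q_3=[0,1]\times[\tfrac23,1]$, which also have $\delta$-measure $(1-2\varepsilon)/3$ each. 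Hence $\map$ and $\map'$ are area-compatible.

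First I would dispatch the geometric bookkeeping. The recursive construction gives $|\arc_\iota|=O(3^\iota)$, so $\map$ has complexity $O(3^\iota)=O(n)$ while $\map'$ has complexity $O(1)$. Drawing $\arc_\iota$ rectilinearly---its $3^\iota$ middle segments at heights that are integer multiples of $3^{-\iota}$, and its U-turns realized as nested axis-parallel detours inside the width-$\varepsilon$ side strips, where $\varepsilon$ is a fixed rational below $\tfrac1{18}$---and then scaling the unit square by a factor $\Theta(3^\iota/\varepsilon)=\Theta(n)$ realizes both $\map$ and $\map'$ as axis-aligned orthogonal polygons with vertices on an integer grid of size $O(n)\times O(n)$.

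The one new ingredient is an observation that serves as the base case of the descent in Lemma~\ref{lem:induction}: in the canonical map, no district can contain an arc in $S_\ell$ for any $\ell\ge 2$. Indeed, by the spacing invariant an arc $\arc_\ell\in S_\ell$ has $3^\ell$ uncertainty regions that are bounded above and below by gap regions of height at most $g_\ell=2(3^{-\ell}-3^{-\iota})<2\cdot3^{-\ell}\le\tfrac29$, while the uncertainty regions themselves have height at most $\alpha_\ell=3^{-(\iota+\ell+1)/2}$, which is negligible for $\iota\ge 6$. Hence the middle segment of $\arc_\ell$ lying in the topmost uncertainty region and the one lying in the bottommost uncertainty region have $y$-coordinates differing by at least $1-2g_\ell-2\alpha_\ell>\tfrac13$, so the set $\arc_\ell$ has vertical extent strictly greater than $\tfrac13$. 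A canonical district is a horizontal rectangle of height exactly $\tfrac13$ and therefore cannot contain it.

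Finally I would assemble the bound. Suppose, for contradiction, that $\map$ can be reconfigured into $\map'$ by a sequence of $N\le\tfrac{\iota}{4}-1$ ReCom moves. Applying Lemma~\ref{lem:induction} with $i=N$ (which is in range), every district of the resulting map $\map'$ contains an arc in $S_{\iota-4N-1}$, and $\iota-4N-1\ge 3\ge 2$, contradicting the observation above. Hence $N\ge\tfrac{\iota}{4}=\Omega(\log n)$. The bulk of the work lies in the lemmas already established; the step I expect to be most delicate in the writeup is the grid realization---packing the $3^\iota-1$ nested corridors of $\arc_\iota$ into the width-$\varepsilon$ side strips on an $O(n)$-resolution grid while keeping each district a connected orthogonal polygon and respecting all the constants in the invariants ($\varepsilon<\tfrac1{18}$, $\iota\ge 6$, equal areas)---together with confirming that Lemma~\ref{lem:induction} applies to an arbitrary ReCom sequence out of $\map$, not merely to the one produced by our upper-bound algorithm.
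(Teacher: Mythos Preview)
Your proof is correct and follows the same plan as the paper: take $\map$ to be the thickening of $\arc_\iota$ with $\iota=\Theta(\log n)$, take $\map'$ to be the canonical map, invoke Lemma~\ref{lem:induction}, and then argue that a canonical rectangle cannot contain an arc in $S_\ell$ for $\ell$ large enough. The only differences are cosmetic: the paper rules out $S_j$ for $j\ge 1$ with a one-line vertical-line argument, whereas your vertical-extent bound $1-2g_\ell-2\alpha_\ell>\tfrac13$ rules out $\ell\ge 2$ and is arguably cleaner; and your device of a density $\delta$ vanishing on the side strips (so the corridors carry no area) is a tidy way to enforce area-compatibility that the paper leaves implicit---just note that each district then has $\delta$-area $(1-2\varepsilon)/3$ rather than $1/3$, so either rescale $\delta$ by $(1-2\varepsilon)^{-1}$ or check that the constants in Lemma~\ref{lem:3longest} survive (they do).
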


\begin{proof}
    The initial map $\map$ is a thickening of the arc $\arc_\iota$. 
    The complexity of $\map$ is $3^{O(\iota)}$, thus we make $\iota=\log n$.
    Let $M'$ be the canonical configuration, defined by the thickening of a curve $\arc_1\in S_1$ as in the definition of the initial map, i.e., all 3 district are congruent rectangles.
    Note that the choice of a rational $\eps<\frac{1}{18}$ allows us scale the maps by $O(n)$ so that we can describe the corridors in the $\eps$-neighborhood of the vertical sides of the domain with orthogonal curves with axis aligned edges lying on the integer grid. 
    By Lemma~\ref{lem:induction}, after any sequence of  $\frac{\iota-2}{4}=\Theta(\log n)$ ReCom moves, each district in the resulting map still contains an arc in $S_1$.
    By the spacing invariant, the districts in $\map'$ cannot contain an arc in $S_j$ with $j\ge 1$ because the gap regions require that a vertical line through the center of the map intersect each district more than once.
    Thus, the length of any sequence of ReCom moves that reconfigures $\map$ into $\map'$ is $\Omega(\log n)$.
\end{proof}

\section{Conclusions}
\label{sec:con}

We have shown that (in our continuous setting) any pair of area-compatible district maps can be reconfigured into each other by a sequence of area-preserving recombination moves. Though the discrete version of this result remains unsolved (see Related Work), our result suggests that for any two maps, with a discretization of the geographic domain which is granular enough, we can connect them by ReCom moves. However, establishing quantitative bounds on the necessary granularity is left for future work.

Between 3-district maps, the number of recombination moves is $O(\log n)$, where $n$ is the combinatorial complexity of the maps, matching our worst-case lower bound of $\Omega(\log n)$. Between $k$-district maps, for $k\geq 4$, we construct a sequence of $\exp(O(\log k\log \log n))=(\log n)^{O(\log k)}$ ReCom moves. It remains an open problem whether the number of moves can be reduced to be polynomial in both $k$ and $n$. For $k\geq 4$ districts, our algorithm uses a recursion of depth $O(\log k)$. However, this approach increases the complexity of intermediate maps to $n^{k^{O(1)}}$. It is also an open problem whether there exists a sequence of ReCom moves where the complexity of intermediate maps remains polynomial in both $k$ and $n$. 

\bibliography{bib}

\end{document}